\newcommand{\comment}[1]{}
\newenvironment{PK}{\noindent\color{red} PK : }{}
\newenvironment{aw}{\noindent\color{magenta} AW :  }{}
\theoremstyle{plain}
\newtheorem{theorem}{Theorem}
\newtheorem{lemma}[theorem]{Lemma}
\newtheorem{corollary}[theorem]{Corollary}
\newtheorem{fact}[theorem]{Fact}
\newtheorem{conj}[theorem]{Conjecture}
\theoremstyle{definition}
\newtheorem{definition}[theorem]{Definition}
\newtheorem{remark}[theorem]{Remark}
\newcommand{\abs}[1]{\left|\mathinner{#1}\right|}
\newcommand{\floor}[1]{\left\lfloor\mathinner{#1} \right\rfloor}
\newcommand{\N}{\ensuremath{\mathbb{N}}}
\newcommand{\NP}{\ensuremath{\mathsf{NP}}\xspace} %
\newcommand{\ACC}{\ensuremath{\mathsf{ACC}^0}\xspace} %
\newcommand{\AC}{\ensuremath{\mathsf{AC}^0}\xspace}
\newcommand{\CC}{\ensuremath{\mathsf{CC}}\xspace}
\newcommand{\nudfa}{\ensuremath{\mathsf{NuDFA}}\xspace}
\renewcommand{\phi}{\varphi}
\newcommand{\eps}{\varepsilon}
\renewcommand{\epsilon}{\varepsilon}
\newcommand{\cH}{\mathcal{H}}
\newcommand{\cC}{\mathcal{C}}
\newcommand{\cQ}{\mathcal{Q}}
\newcommand{\cP}{\mathcal{P}}
\newcommand{\cS}{\mathcal{S}}
\newcommand{\smalloverline}[1]
{{\mspace{1mu}\overline{\mspace{-1mu}#1\mspace{-1mu}}\mspace{1mu}}}
\newcommand{\ov}[1]{\smalloverline{#1}}
\newcommand{\sse}{\subseteq}
\newcommand\ie{i.e., }
\newcommand\eg{e.g.\xspace}
\newcommand{\cdhcirc}[3]{\textsc{AND}_{#1} \circ \textsc{MOD}_{#2} \circ \textsc{MOD}_{#3}}
\newcommand{\modmod}[2]{\textsc{MOD}_{#1} \circ \textsc{MOD}_{#2}}
\newcommand{\cdhexpr}[3]{\textsc{AND}_{#1} \circ \textsc{MOD}_{#2} \circ \Sigma_{#3}}
\newcommand{\cdhandmod}[2]{\textsc{AND}_{#1} \circ \textsc{MOD}_{#2}}
\newcommand{\symp}[2]{\operatorname{sp}(#1, #2)}
\newcommand{\psymp}[2]{\operatorname{psp}(#1, #2)}
\newcommand{\setsum}[1]{\Sigma(#1)}
\newcommand{\aut}[1]{\operatorname{Aut}(#1)}
\newcommand{\sym}[1]{\operatorname{Sym}(#1)}
\newcommand{\F}{\mathbb{F}}
\newcommand{\GF}[1]{\ensuremath{\F_{#1}}}
\newcommand{\parmod}[1]{\;\, (\operatorname{mod} #1)}
\newcommand{\cAND}[1]{\textsc{AND}_{#1}}
\newcommand{\cMOD}[1]{\textsc{MOD}_{#1}}
\newcommand{\po}[1]{\mathbf{#1}}
\renewcommand{\setminus}{\mysetminus}
\newcommand{\mysetminusD}{\raisebox{.8pt}{\hbox{\tikz{\draw[line width=0.6pt,line cap=round] (3.5pt,0pt) -- (0,5.2pt);}}}}
\newcommand{\mysetminusT}{\mysetminusD}
\newcommand{\mysetminusS}{\raisebox{.5pt}{\hbox{\tikz{\draw[line width=0.45pt,line cap=round] (2.2pt,0) -- (0,3.8pt);}}}}
\newcommand{\mysetminusSS}{\raisebox{.35pt}{\hbox{\tikz{\draw[line width=0.4pt,line cap=round] (1.5pt,0) -- (0,2.8pt);}}}}
\newcommand{\mysetminus}{\mathbin{\mathchoice{\mysetminusD}{\mysetminusT}{\mysetminusS}{\mysetminusSS}}}
\begin{document}
 
\title{Violating Constant Degree Hypothesis\\ Requires Breaking Symmetry}
\author{Piotr Kawałek$^1$, Armin Weiß$^2$\\[1mm]
\small $^1$ Vienna University of Technology\\
\small$^2$ University of Stuttgart}


\maketitle

\begin{abstract}
The Constant Degree Hypothesis was introduced by Barrington et.\ al.\ \cite{BarringtonST90} to study some extensions of $q$-groups by nilpotent groups and the power of these groups in a certain computational model.  In its simplest formulation, it establishes exponential lower bounds for $\cdhcirc{d}{m}{q}$ circuits computing $\cAND{}$ of unbounded arity $n$ (for constant integers $d,m$ and a prime $q$). While it has been proved in some special cases (including $d=1$), it remains wide open in its general form for over 30 years. 

In this paper we prove that the hypothesis holds when we restrict our attention to symmetric circuits with $m$ being a prime.  
While we build upon techniques by Grolmusz and Tardos \cite{GrolmuszT00}, we have to prove a new symmetric version of their \emph{Degree Decreasing Lemma} and apply it in a highly non-trivial way. 
Moreover, to establish the result, we perform a careful analysis of automorphism groups of $\cdhandmod{d}{p}$ subcircuits and study the periodic behaviour of the computed functions. 

Finally, our methods also yield lower bounds when $d$ is treated as a function of $n$.
\\

\noindent\textbf{Keywords}: 
Circuit lower bounds; constant degree hypothesis; permutation groups; $\mathsf{CC}^0$-circuits

\medskip
\noindent\textbf{Funding:}\\ 
\textbf{Piotr Kawałek:} Funded by the European Union (ERC, POCOCOP, 101071674). Views and opinions expressed are however those of the author(s) only and do not necessarily reflect those of the European Union or the European Research Council Executive Agency. Neither the European Union nor the granting authority can be held responsible for them. This research is partially supported by Polish NCN Grant \#2021/41/N/ST6/03907.\\
\textbf{Armin Weiß:} Partially funded by German DFG Grant WE 6835/1-2.

\end{abstract}
 \thispagestyle{empty}

\newpage

\pagestyle{plain}
\setcounter{page}{1}
\section*{Introduction}

Establishing strong lower bounds for general Boolean circuits represents one of the paramount and yet unattained objectives in the field of Computational Complexity Theory.
Whenever such lower bounds can be obtained, it is usually in some very restricted setting.
One of the standard limitations imposed on circuits in this context is the restriction of their depth.
Some strong results were obtained when the circuits have depth bounded by a constant $h$ and are built of unbounded fan-in Boolean $\textsc{AND}/\textsc{OR}$ gates and unary $\neg$ gates (so-called \AC circuits).
By a classical result of Furst, Saxe and Sipser \cite{fss84}, proved independently by Ajtai \cite{Ajtai83}, polynomial-size \AC circuits cannot compute the \textsc{PARITY} function (\ie the sum of the input bits modulo 2). In fact, a followup paper by Yao \cite{Yao85} strengthens the lower bound for $n$-ary \textsc{PARITY} to be of the form $2^{\Omega(n^c)}$, with a final result of H{\aa}stad \cite{Hastadphd} finding a precise $c = \frac{1}{h-1}$.

Here, a natural dual question arises: can modulo counting gates represent the $n$-ary Boolean $\cAND{n}$ function in the bounded-depth setting?
To be more precise, a $\CC_h[m]$ circuit is a circuit of depth $h$ using only (unbounded fan-in) $\cMOD{m}^{R}$ gates.
Each such gate sums the inputs modulo $m$ and outputs $1$ if the sum belongs to the set $R$ (we allow different $R \subseteq [m]$ for different gates), otherwise it outputs $0$.
Thus, the question is, after fixing $h$ and $m$, what size does a $\CC_h[m]$ circuit require to compute $\cAND{n}$?
Is there a polynomial-size construction for $\cAND{n}$, making the class $\ACC$ collapse to $\CC^0$  (where $\CC^0 = \bigcup_{h,m}\CC_h[m]$)?
The first question has a trivial answer when $m$ is a prime power, as then $\CC_h[m]$ circuits can express only bounded-arity \textsc{AND} (see \cite{BarringtonST90} or \cite{IdziakKK22LICS} for more details).
Surprisingly, for $m$ having multiple prime divisors, only slightly super-linear lower bounds are known \cite{chat-lowerbounds} and only for the number of wires
~-- even more: to the best of our knowledge it is consistent with the current understanding that $\NP \sse \CC_3[6]$. At the same time, the current best construction for $\cAND{n}$ has size $2^{O(n^c)}$ \cite{ChapmanW22,IdziakKK22LICS} for some constant $c$ depending on $h$ and $m$. 

This huge gap between lower and upper bounds suggests that the problem of establishing lower bounds in this context is very difficult.
Hence, one can consider simpler computational models before answering the above more general questions.
Interestingly, group theory outlines interesting in-between steps which can be considered in this context.
Barrington, Straubing and Thérien \cite{BarringtonST90} studied a model of non-uniform DFA (\nudfa) over groups (or, more generally, monoids), which they used to recognize Boolean languages.
They discovered that, if a group is an extension of a $p$-group by an abelian group, then its corresponding \textsc{\nudfa} can recognize all languages (however, most of them in exponential size).
Nevertheless, such \nudfa{}s cannot compute $\cAND{n}$ unless they have size at least $2^{\Omega(n)}$ \cite{BarringtonST90}.
Later this result was restated in a circuit language, saying that any $2$-level $\modmod{m}{q}$ circuit computing $\cAND{n}$ requires size $2^{\Omega(n)}$ \cite{GrolmuszT00, Grolmusz01, straubing2006note} (where $m$ is an integer and $q$ is a prime~-- here, other than in \cite{GrolmuszT00, Grolmusz01, straubing2006note}, the circuits have to be read that the $\cMOD{q}$ gate is the output gate).
The equivalence of the two statements is due to the fact that these (solvable) groups have internal structure based on modulo counting. 
The authors of \cite{BarringtonST90} conjectured that this $2^{\Omega(n)}$ lower bound generalizes to \nudfa{}s over extensions of nilpotent groups by $p$-groups.
This again can be reformulated to a $2^{\Omega(n)}$ lower bound for $\cdhcirc{d}{m}{q}$ circuits computing $\cAND{n}$ (see \cite{GrolmuszT00}).
This conjecture is known as Constant Degree Hypothesis (CDH for short), whose name corresponds to adding a layer of constant-arity $\textsc{AND}_d$ gates on the input level to a $\modmod{m}{q}$ circuit.
Interestingly enough, recently in \cite{IdziakKKW22-icalp} it was proven that all the other groups (which do no correspond to CDH) do not admit this lower bound, i.e.\ one can construct $\cAND{n}$ of size $2^{O(n^c)}$ for some $c<1$ using \nudfa{}s (or corresponding circuits) over these groups.
So $\cdhcirc{d}{m}{q}$ circuits are really the only (algebraically) natural subclass of $\CC^0$ circuits for which these strong $2^{\Omega(n)}$ lower bounds can hold.

Low-level $\CC^0$ circuits have many surprising connections.
For instance, the techniques used in the construction of relatively small $\CC^0$ circuits for the $\textsc{OR}_n$ function (equivalently, $\cAND{n}$) found in \cite{BarringtonBR94} are useful in constructing small explicit Ramsey-type graphs \cite{grolmusz2000, gopalan2014}.
These constructions are also used to produce better locally-decodable error-correcting codes \cite{efre12, zev11}, private information retrieval schemes \cite{zeev15}, and secret sharing schemes \cite{liu2018}.
The lower bounds for codes considered in \cite{efre12} imply lower bounds for certain $\CC^0$ circuits.
On the contrary, good lower bounds for low-level $\CC^0$ circuits imply faster algorithms for solving equations in solvable groups \cite{IdziakKKW22-icalp}, faster algorithms for certain algebraic versions of circuit satisfiability problems \cite{IdziakKK20} and also faster algorithms for some variants of the Constraint Satisfaction Problem with Global Constraints \cite{brakensiek2022}. 

These diverse interconnections encourage to put even more effort to find the correct sizes for optimal modulo-counting circuits computing $\cAND{n}$.
In this pursue, proving (or disproving) $\textsc{CDH}$ plays a central role. The hypothesis is already proven in several special cases: in particular, the case $d=1$  was confirmed in the very same paper the hypothesis was defined.
Moreover, if there is a bound on the number of $\cAND{d}$ gates that are wired to each $\cMOD{m}$ gate, the desired lower bound is also true \cite{GrolmuszT00}. More precisely, the number of such connections is required to be $o(\frac{n^2}{\log n})$.
The technique used in this case is based on the so-called \textit{Degree Decreasing Lemma}, whose name corresponds to gradually decreasing the degree $d$, which eventually leads to the $d=1$ case.
The Degree Decreasing Lemma can also be used when the polynomials over $\mathbb{Z}_m$ corresponding to the $\cdhandmod{d}{m}$ part of the circuit can be written using a sublinear number of binary multiplications \cite{grolmusz2000}. 

In many studies of different circuit complexity classes, \textit{symmetry} seems to play an important role. In this context both symmetric circuits, as well as symmetric functions were considered. Here, symmetry for a circuit/function means that permuting its inputs/variables does not change the considered circuit/function. 
Let us here mention the recent results on lower bounds for symmetric arithmetic circuits for the permanent and also a construction of short symmetric circuits for the determinant \cite{DawarW20} as well as the lower bound from \cite{HeR23} for computing a certain entry in a product of matrices (here, symmetry means invariance under permuting rows and columns of matrices). 

\textit{Symmetry} seems to play also a special role for $\CC_h[m]$ circuits. The remarkable construction of relatively small circuits for $\cAND{n}$ in \cite{BarringtonBR94} uses symmetric polynomials as an intermediate object before translating them to circuits. This translation, when done carefully, leads also to symmetric circuits. Similarly, some of the newer, more optimal constructions of two level $\CC_2[m]$ circuits for $\cAND{n}$ can be performed fully symmetrically \cite{ChapmanW22, IdziakKK22LICS}.
Additionally, \cite{GrolmuszT00, Grolmusz01, straubing2006note} analyze the periodic behaviour of the symmetric functions that can be represented by small (not necessarily symmetric) $\modmod{m}{q}$ circuits. A value of a symmetric Boolean function $f(x_1, \ldots, x_n)$ is determined by the number of ones among $x_1, \ldots, x_n$.
Hence, for an integer $0\leq m \leq n$, we can naturally define $f(m)$ as $f(1^m0^{n-m})$ and say that an integer $r$ is a period of $f$ whenever $f(m+r) = f(m)$ for all $0\leq m \leq n-r$.
It follows from \cite{GrolmuszT00, straubing2006note} that the only symmetric functions that have representations as $\modmod{m}{q}$ circuits of subexponential size must have periods of the form $m\cdot q^k$ with $m\cdot q^k \leq n$.
In particular, $\cAND{n}$ must have exponential-size circuits. 

In this paper we prove that symmetric $\cdhcirc{d}{p}{q}$ circuits computing $\cAND{n}$ have exponential size.  
A key to the proof is to analyze the periodic behaviour of the functions computed by such circuits.
 Our techniques work also when $d$ is unbounded and is considered as a function of $n$. The following theorem characterizes the periodic behaviour such functions. 

\begin{theorem}\label{thm:mainPeriod}
   Let $p$ and $q$ be primes and $n\geq 13$ and let $1 \leq d \leq n$. Then any function computed by an $n$-input symmetric $\cdhcirc{d}{p}{q}$ circuit of size $s < 2^{n/9}$ has period $p^{k_p}q^{k_q}$ given that $p^{k_p} > d$ and $q^{k_q} > \log s + 1$.
\end{theorem}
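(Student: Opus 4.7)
The plan is to reduce the three-level $\cdhcirc{d}{p}{q}$ situation to the two-level $\modmod{p}{q}$ situation, where the periodicity of symmetric functions is already understood: as indicated in the introduction, the results of Grolmusz--Tardos and Straubing imply that any symmetric function computed by a $\modmod{p}{q}$ circuit of size $s'$ has a period of the form $p \cdot q^{k}$ whenever $q^{k} > \log s' + 1$. If one can transform, in a symmetry-preserving manner, the given $\cdhcirc{d}{p}{q}$ circuit of size $s$ into a symmetric two-level circuit of only moderately larger size, then the claimed period $p^{k_p} q^{k_q}$ falls out of the two-level result, with the factor $p^{k_p}$ accounting for the reduction of $\cAND{d}$ all the way down to $\cAND{1}$.

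The key tool enabling this reduction is a symmetric analogue of the \emph{Degree Decreasing Lemma} of Grolmusz and Tardos. In its classical form, a product of two $\cMOD{p}$-computed quantities is rewritten as a $\cMOD{p}$-linear combination of terms of strictly lower multiplicative complexity; applied iteratively this drives the $\cAND{d}$ degree down to $1$. To preserve symmetry under the natural $S_n$-action on the circuit, the rewriting must be performed coherently on entire orbits of $\cAND{d}$ gates rather than on individual gates. I would first classify these orbits: by symmetry, each orbit is indexed by the \emph{shape} of a gate (\eg the multiset of multiplicities with which its $\cMOD{p}$ inputs use each variable), and the orbits of the $\cMOD{p}$ layer admit an analogous description. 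I would then show that an entire orbit can be replaced simultaneously by a symmetric family of lower-degree gadgets, with a size blow-up polynomial in $p^{k_p}$. Iterating $k_p$ times (where $p^{k_p} > d$) eliminates the $\cAND{}$ layer entirely, producing a symmetric two-level circuit of size at most $s \cdot \mathrm{poly}(p^{k_p})$; since $s < 2^{n/9}$ and $n \geq 13$, the hypothesis $q^{k_q} > \log s + 1$ leaves just enough slack to invoke the two-level periodicity theorem and conclude that the period is $p^{k_p} q^{k_q}$.

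The hardest step is carrying out the Degree Decreasing rewriting in an $S_n$-equivariant manner. The classical argument selects a distinguished input of each gate to pivot on, a choice that inherently breaks symmetry; to remedy this one has to make pivot choices simultaneously for all gates of an orbit and verify that the replacement gadgets themselves form a single orbit under the ambient automorphism group. This is precisely where the detailed analysis of the automorphism groups of the $\cdhandmod{d}{p}$ subcircuits becomes indispensable: I would exploit the orbit structure either to find a canonical equivariant choice of pivots, or to average over the orbit (using that $p$ is prime so that the required $\mathbb{Z}_p$-arithmetic makes sense). Verifying that this equivariant lemma actually decreases the degree at each step, and that the residual terms assemble into legal symmetric sub-circuits rather than reintroducing the structure just removed, is where the bulk of the technical work would be concentrated.
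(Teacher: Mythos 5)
There is a genuine gap in your reduction, and it is one the paper explicitly flags as a dead end. You claim that iterating the (symmetric) Degree Decreasing Lemma eliminates the $\cAND{}$ layer with a size blow-up of only $\mathrm{poly}(p^{k_p})$. That bound is wrong. Each application of the Degree Decreasing Lemma removes \emph{one binary multiplication} at the cost of a factor of roughly $p^4$, and the symmetric variant (\cref{sddl}) removes one degree-$d$ monomial at a cost of roughly $2^d p$ terms. To drive a whole $\cdhandmod{d}{p}$ subcircuit down to degree $1$, the number of multiplications you must eliminate is not $k_p$, nor anything $\mathrm{poly}(d)$: it is the multiplicative complexity of the polynomial computed by the subcircuit. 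For the relevant polynomials this is $\Omega(n)$, so the blow-up is exponential in $n$, not polynomial in $p^{k_p}$, and the resulting two-level circuit is far too large for the Grolmusz--Tardos/Straubing--Th\'erien periodicity theorem to say anything. Indeed \cref{further-remarks} of the paper proves exactly this obstruction: the symmetric polynomial $\sum_{i<j} x_i x_j$ over $\F_2$ has period $4$, so if it could be written with $o(n)$ multiplications then the DDL would give a subexponential $\modmod{2}{3}$ circuit for it, contradicting the two-level lower bound. Thus the DDL cannot, even in principle, ``eliminate the $\cAND{}$ layer entirely'' here; the obstacle is not equivariance of pivot choices (which you rightly worry about) but an arithmetic-circuit lower bound.

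The paper's actual route is different and you would need its missing ingredients. First, a dichotomy for labeled hypergraphs (\cref{automo-lem}): either the hypergraph of a $\cdhandmod{d}{p}$ subcircuit has a fully symmetric (pseudo-clique) subset of size $\geq n - \lfloor\eps n\rfloor$, or its automorphism orbit under $S_n$ is so large that the circuit size already exceeds $2^{\lfloor\eps n\rfloor}$. Second, the symmetric DDL is applied only to the edges \emph{outside} or \emph{crossing} the large pseudo-clique, deliberately leaving the pseudo-clique intact; this ``symmetry purification'' (\cref{sym-expr}) does not attempt to reach degree $1$, so the blow-up stays controlled. Third, and this is the step your plan has no substitute for, the period of a symmetry-purified expression is computed directly by a counting argument over ordered partitions and multinomial coefficients modulo $q$ (\cref{period-sym-pure}); the factor $p^{k_p}$ comes from the periodicity of $m\mapsto\binom{m}{j}\bmod p$ inside the pseudo-clique, and the factor $q^{k_q}$ comes from the multinomial counts of transversal permutations modulo $q$, not from an appeal to the two-level theorem.
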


This theorem can be read in comparison to \cite{DenenbergGS86}, where polynomial-size symmetric $\AC$ circuits of arity $n$ are shown to represent only functions that are constant on the interval $\{n^\eps, \dots, n-n^{\eps}\}$ (for large enough $n$). The next theorem follows from a careful application of \cref{thm:mainPeriod}.

\begin{theorem}\label{thm:mainGeneralIntro}
   Let $p$ and $q$ be primes, let $n$ be a large enough integer, and let $d\leq \sqrt{n}$. Then every symmetric $\cdhcirc{d}{p}{q}$ circuit computing the $\cAND{n}$ function has size at least $2^{ n/(2dpq)}$.
\end{theorem}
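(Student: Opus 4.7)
The plan is to derive \cref{thm:mainGeneralIntro} as a short contrapositive of \cref{thm:mainPeriod}: any symmetric $\cdhcirc{d}{p}{q}$ circuit of size $s < 2^{n/(2dpq)}$ would be forced to compute a function whose period is strictly smaller than $n$, which is incompatible with $\cAND{n}$ (the only input weight on which $\cAND{n}$ evaluates to $1$ is $n$ itself).

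Suppose for contradiction that such a circuit of size $s < 2^{n/(2dpq)}$ computes $\cAND{n}$. Since $d\geq 1$ and $p,q$ are primes, $2dpq \geq 8$; the single degenerate case $2dpq < 9$ is $d=1,\,p=q=2$, and there the conclusion is already subsumed by the $d=1$ case of CDH established in~\cite{BarringtonST90}. Thus one may assume $2dpq \geq 9$, in which case $s < 2^{n/(2dpq)} \leq 2^{n/9}$ and the hypotheses of \cref{thm:mainPeriod} are satisfied. I would then choose $k_p$ and $k_q$ as small as possible: let $k_p$ be minimal with $p^{k_p} > d$ (so $p^{k_p}\leq pd$) and let $k_q$ be minimal with $q^{k_q} > \log s + 1$ (so $q^{k_q}\leq q(\log s + 1)$). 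By \cref{thm:mainPeriod}, the function $f$ computed by the circuit then has period
\[
T \;:=\; p^{k_p}\,q^{k_q} \;\leq\; pqd\,(\log s + 1) \;<\; pqd\cdot \tfrac{n}{2dpq} + pqd \;=\; \tfrac{n}{2} + pqd.
\]
Since $d \leq \sqrt{n}$, for $n \geq 4p^2q^2$ one has $pqd \leq n/2$, so in fact $0 < T < n$.

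With $T$ in this range the contradiction closes immediately: because $f$ is symmetric, periodicity gives $f(n) = f(n - T)$, but since $f = \cAND{n}$ the left-hand side equals $1$ while the right-hand side equals $0$ (as $0 \leq n - T < n$). The entire difficulty of \cref{thm:mainGeneralIntro} is therefore inherited from \cref{thm:mainPeriod}: once the period theorem is available, the deduction is pure arithmetic~-- the smallest admissible exponents $k_p, k_q$ meet both lower-bound constraints while keeping the resulting period $T$ strictly below $n$, provided that $d \leq \sqrt{n}$ and $n$ is large enough. The only mild technical wrinkle, namely the corner case $2dpq < 9$, is disposed of by invoking the classical $d=1$ CDH result.
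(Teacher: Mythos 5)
Your proof is correct and follows essentially the same route the paper takes: apply \cref{thm:mainPeriod} with the minimal admissible exponents $k_p, k_q$, bound the resulting period by $n/2 + dpq < n$ using $d \le \sqrt{n}$ and $n \ge 4p^2q^2$, and contradict the fact that $\cAND{n}$ has no period below $n$. The paper packages this in the slightly more general \cref{thm:mainGeneralProofs} (adding an easy pigeonhole argument yielding $2^{\sqrt{n}}$ when $\sqrt{n}\le d\le n-\sqrt n$), but for the range $d\le\sqrt n$ the calculation is the same as yours. One small remark: your extra case split on $2dpq<9$ to justify the hypothesis $s<2^{n/9}$ of \cref{thm:mainPeriod} is not actually needed, because the paper implicitly assumes $p\neq q$ (the Degree Decreasing Lemma, \cref{ddl}, requires this), which already forces $2dpq\ge 12>9$; still, it is a harmless extra check, and it highlights that the paper leaves this hypothesis verification implicit.
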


Note that the restriction $d\leq \sqrt{n}$ still includes the most interesting case.
Indeed, for $\sqrt{n}\leq d \leq n-\sqrt{n}$ we get an almost trivial lower bound of $2^{\sqrt{n}}$ (see \cref{thm:mainGeneralProofs}).
Moreover, \cref{thm:mainGeneralIntro} suggests an interesting trade-of between the degree and the size at $d\approx \sqrt{n}$.
Then we can reach a lower bound for the size of the form $2^{\Omega(\sqrt{n})}$.

As a direct consequence of \cref{thm:mainGeneralIntro} we get the desired result for $\cAND{n}$.

\begin{corollary}\label{thm:mainCDH}
   For constant $d$, and primes $p$, and $q$ every symmetric $\cdhcirc{d}{p}{q}$ circuit for $\cAND{n}$ has size at least $2^{\Omega( n)}$. Thus, CDH holds for symmetric circuits.
\end{corollary}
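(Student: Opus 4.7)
The plan is to deduce \cref{thm:mainCDH} as a direct specialization of \cref{thm:mainGeneralIntro}, which is already stated with explicit quantitative constants. Fix the constants $d$, $p$, and $q$, and let $n_0$ denote the threshold beyond which \cref{thm:mainGeneralIntro} is guaranteed to hold. For every $n \geq \max\{n_0, d^2\}$ we have $d \leq \sqrt{n}$, so \cref{thm:mainGeneralIntro} applies and yields that every symmetric $\cdhcirc{d}{p}{q}$ circuit computing $\cAND{n}$ has size at least $2^{n/(2dpq)}$. Since $d$, $p$, and $q$ are fixed, the factor $2dpq$ in the denominator is a constant, so the exponent $n/(2dpq)$ is linear in $n$. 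This immediately gives a size lower bound of the form $2^{\Omega(n)}$, with the hidden constant depending only on $d$, $p$, $q$.

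There is essentially no obstacle to overcome at this step; once the explicit bound of \cref{thm:mainGeneralIntro} is in hand, the corollary is a pure rephrasing in asymptotic notation. All the mathematical content of the result already resides in \cref{thm:mainPeriod} and its careful deployment in \cref{thm:mainGeneralIntro}: symmetry is used there to force a subexponential-size $\cdhcirc{d}{p}{q}$ circuit to compute only functions with period $p^{k_p}q^{k_q}$ for the minimal admissible $k_p, k_q$, and then the observation that $\cAND{n}$ admits no nontrivial period $\leq n$ is converted into a lower bound on $q^{k_q}$, hence on $\log s$. The present corollary merely packages the qualitative consequence of these quantitative facts --- the Constant Degree Hypothesis in the symmetric setting --- and as such requires no further work beyond citing \cref{thm:mainGeneralIntro} and absorbing the constants into the $\Omega$-notation.
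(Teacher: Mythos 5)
Your proposal is correct and matches the paper's reasoning: the paper explicitly introduces \cref{thm:mainCDH} as "a direct consequence of \cref{thm:mainGeneralIntro}," and your argument is exactly that specialization, fixing $d,p,q$, taking $n$ large enough that $d\le\sqrt{n}$, and absorbing the constant $2dpq$ into the $\Omega$-notation.
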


Before we go to the more technical part, let us briefly mention an opposite perspective on the results of this paper. Although current evidence seems to support $\textsc{CDH}$ and lower bounds for $\cAND{n}$ for general $\CC_h[m]$ circuits, it is known that $\CC_h[m]$ circuits using $O(\log n)$ random bits are able to compute $\cAND{n}$ in polynomial size \cite{HansenK10}.
This was even improved in \cite{IdziakK22}, by showing that $\modmod{p}{q}$ circuits can also be used for representing $\cAND{n}$ in this probabilistic model. This might be interpreted as a strong argument against lower bounds, because now it is enough to derandomize the construction for $\modmod{p}{q}$ circuits. We already understand these $2$-level circuits relatively well (to the point that we can prove strong lower bounds for them for the $\cAND{n}$ function itself). Our \cref{thm:mainCDH} implies one cannot construct  $\textsc{AND}_n$ with polynomial-size symmetric $\cdhcirc{d}{p}{q}$ circuits. Hence, to make short deterministic constructions one needs to either go beyond the symmetric setting or consider larger depths.

\paragraph*{Outline.}

The paper is organized as follows: in \cref{prelims} we fix our notation on circuits as well as hypergraphs and group actions. These notions are essential in the later study of the symmetric structure of circuits. In \cref{preparation}, we describe how to rewrite a circuit into a nicer form that we use throughout the paper. Then in \cref{main-sec} we present our key lemmas and show how to derive our main results from them. The proofs of these lemmas are deferred to \cref{proofs-sec}. In \cref{further-remarks} we add some discussion of our results.

\section{Preliminaries}\label{prelims}

For $d \in \N$ we write $[d]$ for the set of integers $\{1, \dots, d\}$.

\paragraph*{Hypergraphs.}
For a set $X$ we denote its power set by $\cP(X)$. 
A \emph{hypergraph} on a set of vertices $V$ is a pair $(V,
E)$ with $E \sse \cP(V)\setminus\{\emptyset\}$.
\newcommand{\hyp}[2]{\cH_{#1}^{#2}}
A \emph{$\F_p$-labeled hypergraph} is a pair $G=(V, \lambda)$ where $\lambda \colon (\cP(V) \setminus\{\emptyset\} )\to \F_p$. We obtain an (unlabeled) hypergraph by setting $E  = \{ e\sse V \mid \lambda(e) \neq 0\}$ and call each $e$ with $\lambda(e)\neq 0$ an \emph{edge}  of $G$. Thus, an $\F_p$-labeled hypergraph is indeed a hypergraph where we assign to each edge a number from $\F_p \setminus\{0\}$.
Moreover, $(V, \lambda)$ is called an \emph{$\F_p$-labeled $d$-hypergraph} if for all $e \in \cP(V)$ with $\abs{e} > d$ we have $\lambda(e) = 0$. We write $\hyp{p}{d}(V)$ for the set of $\F_p$-labeled $d$-hypergraphs on $V$.  For $C\sse V$ we write $\ov C = V \setminus C$ for the complement of $C$.

If $G=(V,\lambda)$ and $H = (V,\zeta)$ are $\F_p$-labeled hypergraphs on the same set of vertices $V$, we define $G + H$ (resp.\ $G-H$) as $(V, \lambda + \zeta)$ (resp.\ $(V, \lambda - \zeta)$) where $\lambda + \zeta$ denotes the point-wise addition.
We interpret any subset $E \sse \cP(V)\setminus\{\emptyset\}$ as a hypergraph by setting $\lambda(e) = 1$ if $e \in E$ and $\lambda(e) = 0$
 otherwise (be aware of the slight ambiguity as $V$ is not uniquely defined by $E$~-- but it always will be clear from the context). Thus, we have defined the addition $G + E$ (resp.\ $G - E$). We extend this to $G + e = G + \{e\}$. On the other hand, for $S \sse V$, we define $G + \Sigma(S)$ as $G + \{\{s\} \mid s \in S\}$.

 \paragraph*{Permutation groups.}

For any set $V$ we denote the group of permutations on $V$ by $\sym{V}$ (\ie the symmetric group). For an integer $n$ we write $\sym{n}$ or $S_n$ for the abstract symmetric group acting on any $n$-element set. 
Any subgroup $\Gamma \leq \sym{V}$ \emph{acts} faithfully on $V$ and is called a permutation group.
A subset $U \sse V$ is called an \emph{orbit} of the action of $\Gamma$ on $V$ if $U=G\cdot x$ for some $x \in V$. If there is only one orbit, the action of $\Gamma$ on $V$ is called transitive. Clearly, the orbits form a partition of $V$; moreover, if $U_1, \dots, U_k\sse V$ are the orbits of the action of $\Gamma \leq \sym{V}$ on $V$, then $\Gamma \leq \sym{U_1} \times \cdots \times \sym{U_k}$ where $\times$ denotes the direct product of groups.

Finally, let $\Gamma'\leq \Gamma$ be a subgroup. A \emph{left-transversal} (in the following simply \emph{transversal}) of $\Gamma'$ in $\Gamma$ is a subset $R \sse \Gamma$ such that $R$ is a system of representatives of $\Gamma/\Gamma'$~-- in other words, if $R \Gamma' = \Gamma$ and $r\Gamma' \cap s\Gamma'=\emptyset$ for $r,s \in R$ with $r\neq s$.
For further details on permutation groups, we refer to \cite{cameron99}.

 \paragraph*{Actions on hypergraphs.}

Given an action of $\sym{V}$ on $V$, it induces an action on $\cP(V)$. Moreover, this extends to an action on $\hyp{p}{d}(V)$ where a permutation $\pi \in \sym{V}$ maps $(V,\lambda)$ to $\pi((V, \lambda)) = (V, {}^\pi\!\lambda)$ for
${}^\pi\!\lambda$ defined by $({}^\pi\!\lambda)(e) = \lambda(\pi^{-1}(e))$. Be aware that the ${}^{-1}$ is not by accident but rather guarantees that if some $e \in \cP(V)$ has label $\gamma = \lambda(e)$, then $\pi(e)$ has label ${}^\pi\!\lambda(\pi(e)) = \lambda(e)$. Note that two labeled hypergraphs with vertices $V$ are isomorphic if and only if they are in the same orbit under $\sym{V}$. 
 A permutation $\pi \in \sym{V}$ is called an \emph{automorphism} of $G = (V,\lambda)$ if $\pi(G) = G$~-- with other words, if $\lambda(\pi(e)) = \lambda(e)$ for all $e \in \cP(V)$. 
For a labeled hypergraph $G$, we denote its group of automorphisms by $\aut{G}$.

\paragraph*{Circuits.}
A circuit is usually defined as a directed acyclic graph with labels on its vertices that inform what kind of operation (like for instance $\land, \lor, \neg$, $\cMOD{p}^R$) a given vertex (gate) computes.
A depth-$d$ circuit of arity $n$ is a circuit that consists of $n$ inputs gates $x_1, \ldots, x_n$ and $d$ layers (or levels) $G_1, \ldots, G_d$ of inner gates (we do not count the input gate as a level). Between neighbour layers $G_{i-1}$ and $G_{i}$ there is a layer of wires $W_i$ which contains directed edges between $g \in G_{i-1}$ and $h \in G_{i}$ (where $G_0 = \{x_0, \dots, x_n\}$). We allow for multiple (directed) edges between the same pair of gates. Moreover, gates are labeled with necessary information which allows to compute a function they represent. In our case we use $\cMOD{p}^R$ gates where $p$ is a prime and $R \sse \F_p$. A $\cMOD{p}^R$ with inputs $y_1, \dots, y_k$ outputs $1$ if and only if the sum of its inputs modulo $p$ is contained in $R$. A circuit is called an \emph{expression} if it is a tree when removing the input layer. A subexpression of an expression is a subgraph containing for every gate also all its predecessors (towards the input gates).
For circuits $C,D$ with $n$ inputs we write $C \equiv D$ if for all inputs $\ov b \in \{0,1\}^n$ they evaluate to the same value. 
We define the \emph{size} of a circuit as its number of non-input gates.

In this article we consider $\cdhcirc{d}{p}{q}$ circuits: such a circuit consist of $3$-levels. On level $1$ there are $\cAND{d}$ gates each of which receives inputs from at most $d$ input gates. The second level $G_2$ consists of $\cMOD{p}^R$ gates~-- each of them is labeled with an accepting set $R\subseteq \{0, \ldots, p-1\}$.  The output layer $G_3$ contains only one $\cMOD{q}^R$ gate, which sums all the wires from $W_3$ modulo $q$.

We say that a circuit $C$ is symmetric if no permutation of the input wires changes the circuit.
Note that here the word \textit{symmetric} refers to to a \textit{syntactic} structure of a circuit, rather than a semantic property of the computed function computed by it.
More formally, a circuit $C$ on inputs $x_1, \dots, x_n$ is called \emph{symmetric} if for any $\pi\in \sym{\{x_1, \dots, x_n\}}$ there is a permutation $\pi'$ on the set of gates extending $\pi$ (meaning that $\pi(x_i) = \pi'(x_i)$ for all $i\in [n]$) such that there are $k$ wires connecting gate $i$ to gate $j$ if and only if there are $k$ wires connecting gates $\pi'(j)$ to gate $\pi'(j)$.

\section{Preparation: Circuits, Expressions and Hypergraphs}\label{preparation}

For a simpler notation of expressions, let us denote $\cMOD{p}^R$ with inputs $y_1, \dots, y_k$ instead by $\po b(\sum_{i=1}^k y_i ; R)$ for $R \sse \F_p$, where $\po b$ computes the function
 \[\po b (y; R) = \begin{cases}
    1& \text{if } y \in R\\
    0& \text{if } y \not\in R.
\end{cases}\]
Be aware that we use $\po b$ for different domains, i.e.\ as a function $\F_p \to \{0,1\}$ and $\F_q \to \{0,1\}$. The domain will be clear from the context.

\paragraph*{From circuits to expressions.}

Any 2-level $\cdhandmod{d}{p}$ circuit computes a polynomial function over the field $\GF{p}$. Indeed, the $\cAND{d}$ gates act like a multiplications on the two element domain $\{0,1\}\subseteq \GF{p}$ and the $\textsc{MOD}_p^R$ gate sums the results and checks whether the sum belong to the accepting set $R$.
Because our circuits are of constant-depth, we can unfold the circuits to obtain expressions. Note that this might lead to a polynomial blow-up in size (more precisely, a circuit with size $s$  and  depth bounded by $h$ is converted to an expression of size at most $s^{h-1}$). 
Moreover, note that unfolding the circuit does not destroy the property of being symmetric.
Hence, every symmetric $\cdhcirc{d}{p}{q}$ circuit yields also a symmetric expression
\vspace{-1mm}
\begin{equation}\label{eqn:cdhexpr}
\po b \bigl(\sum_{i=0} ^l \alpha_i\, \po b(\po p_i(\ov x) ; R_i); R\bigr)
\end{equation}
for suitable $\alpha_i \in \F_q$, $ R_i \sse \F_p$ and polynomials $\po p_i$ of degree bounded by $d$ for $i\in \{1, \dots, l\}$ and $R \sse \F_q$ which computes the same function. Here, $l$ is the number of $\cMOD{p}$ gates used in the $\cdhcirc{d}{p}{q}$ circuit, while $\alpha_i$ tells us how many times a given $\cMOD{p}$ gate is wired to the $\cMOD{q}$ gate.
Let us take a closer look at what being symmetric means for an expression of the form \eqref{eqn:cdhexpr}:
for each $\pi \in \sym{n}$ there exist $\pi' \in S_l$ such that for all $i \in \{1,\ldots, l\}$ we have $\alpha_i = \alpha_{\pi'(i)}$, $R_i = R_{\pi'(i)}$, and $\po p_i(x_1, \ldots, x_n) = \po p_{\pi'(i)}(x_{\pi(1)}, \ldots, x_{\pi (n)})$ (here = refers to equality in the polynomial ring $ \F_p[x_1, \dots, x_n]$).

As we are going to analyze the periodic behaviour of considered symmetric circuits, and the outer $\po b$ does not modify the period in any substantial way we will now concentrate on the symmetric expressions of the form
\vspace{-1mm}
\begin{equation}\label{eqn:inexpr}
\po f = \sum_{i=0} ^l \alpha_i\, \po b(\po p_i(\ov x) ; R_i)
\end{equation}
with $r_i \in  \F_p$, $\alpha_i \in \F_q$ and polynomials $\po p_i$ of degree bounded by $d$. We call such expressions $\cdhexpr{d}{p}{q}$ expressions and each $\po b(\po p_i(\ov x) ; r_i)$ an \emph{elementary subexpression} of $\po f$.
\newcommand{\pol}[2]{\operatorname{pol}(#1,#2)}

In the following, let us write $\po b(\po p(\ov x) ; r) $ for $\po b(\po p(\ov x) ; \{r\})$. Using this notation we have
$\po b(\po p(\ov x) ; R) = \sum_{r\in R} \po b(\po p(\ov x) ; r).$
Moreover, we always assume that for $i\neq j$ we have $(\po p_i, r_i) \neq (\po p_j, r_j)$ as otherwise we can replace $\alpha_i\, \po b(\po p_i(\ov x) ; r_i) + \alpha_j\, \po b(\po p_i(\ov x) ; r_j)$ by $\alpha_{ij}\, \po b(\po p_i(\ov x) ; r_i)$ where $\alpha_{ij} = \alpha_i + \alpha_j$. Thus, using $\pol{n}{d}$ to denote the set of multilinear polynomials  in $ \F_p[x_1, \dots, x_n]$ with degree bounded by $d$, we  rewrite $\po f $ in \eqref{eqn:inexpr} as
\vspace{-2mm}
\begin{equation}\label{eqn:normalizedexpr}
\po f = \sum_{\po p \in \pol{n}{d}}\sum_{r \in \F_p}\alpha_{\po p , r}\, \po b(\po p(\ov x) ; r).
\end{equation}
Note that now for size of $\po f$ we only need to count the non-zero $\alpha_{\po p,r}$ (plus the number of $\cAND{}$ gates computing the polynomials $\po p$).

\paragraph*{Polynomials and hypergraphs.}

Let us take a closer look at the $\cdhandmod{d}{p}$ part of a $\cdhexpr{d}{p}{q}$ circuit or expression.
As any such expression is represented by a polynomial of degree $d$, we will need to deal with these polynomials and their symmetries. Notice that without loss of generality, we can assume that the polynomial corresponding to a $\cdhandmod{d}{p}$ is multi-linear since, because values of variables are restricted to $\{0,1\}$ each occurrence of a higher power $x^k$ of a variable $x$ can be simply replaced by $x$.

In order to deal better with the combinatorics and symmetries of polynomials, we think of polynomials as hypergraphs.
A multilinear polynomial $\po p\in \F_p[x_1, \dots, x_n]$ with the degree bounded by $d$ can be naturally identified with an $\GF{p}$-labeled $d$-hypergraph $G = (V,\lambda)$ as follows:
\begin{enumerate}
    \item Treat each variable $x_i$ in $\po p(x_1, \ldots, x_n)$ as a vertex in the graph $G_{\po p}$. Thus, $V = \{x_1, \dots ,x_n\}$, which we also identify with the set $[n]$.
    \item Each monomial $\gamma \cdot x_1 \cdot \ldots \cdot x_{d}$ is represented by a hyperedge with a label $\gamma$, \ie we have $\lambda(\{x_1, \ldots, x_d\}) = \gamma$. 
\end{enumerate}

Thus, we get a one-to-one correspondence between multilinear polynomials over $\GF{p}$ of degree at most $d$ and $\GF{p}$-labeled $d$-hypergraphs. This means that we can also do the reverse -- for each labeled graph $ G$ we can create its corresponding polynomial $\po p_{G}$. Moreover, note that also the arithmetic operations we defined on hypergraphs as well as the group actions agree with those on polynomials. Therefore, in the following, we use polynomials and hypergraphs interchangeably.

Now, we can use our graph notation for polynomials in a more general setting and denote each expression $\po b(\po p(\ov x) ; r)$ by $\po b(G_{\po p}; r)$ or simply $\po b(G;r)$  (when we start with a hypergraph representing a given polynomial). Thus, we can reformulate any expression of the form \eqref{eqn:normalizedexpr} as $\sum_{\po p \in \pol{n}{d}} \sum_{r\in \F_p} \alpha_{\po p , r}\,\po b(G_{\po p}; r) = \sum_{G \in \hyp{p}{d}(V)} \sum_{r\in \F_p} \alpha_{G , r}\,\po b(G; r).$

\paragraph*{Symmetric expressions induced by hypergraphs.} 
Now we define several notions, useful in analysing symmetric expressions. 
For $G = (V,\lambda)$ and $\pi \in \sym{V}$, let us write $\po b^{\pi}(G;R)$ for $\po b(\pi G; R)$.
The action of $\sym{V}$ on $V$ now extends naturally to an action on expressions of the form $\po f = \sum_{G \in \hyp{p}{d}(V)} \sum_{r\in \F_p}\alpha_{G , r}\, \po b(G; r)$ by setting
\[\pi(\po f ) = \sum_{G \in \hyp{p}{d}(V)} \sum_{r\in \F_p} \alpha_{G , r}\, \po b^\pi(G ; r). \]
Now,  $\po f$ being symmetric can be simply expressed as the fact that for each  $\pi \in \sym{V}$ we have
$\pi(\po f ) = \po f.$

\begin{definition}
Let $G = (V,\lambda)$ be a labeled $d$-hypergraph. Let $\aut{G}$ be its group of automorphisms and let $\pi_1, \ldots, \pi_k$ be a transversal of $\sym{V}/\aut{G}$. For a given $r\in \GF{p}$, define $\po s(G; r)$ to be the following $\cdhexpr{d}{p}{q}$ expression
\vspace{-2mm}\begin{equation}\label{decomp-sum}
\po s(G; r) = \sum_{i=0}^k \po b^{\pi_i}(G, r).
\end{equation}
\end{definition}

\noindent One needs to check that the above definition does not depend on the choice of the transversal, as there is a choice in picking the specific traversal $\pi_1, \ldots, \pi_k$ which we use to create $\po s(G; r)$. However, as $G$ is invariant under its automorphisms, no matter how we choose the specific $\pi_1, \ldots, \pi_k$, we get the same expression in the end. In fact, every symmetric $\cdhexpr{d}{p}{q}$ expression containing $\po b(G; r)$ as subexpression, must also contain $\po s(G; r)$ as subexpression. So $\po s(G; r)$ is a symmetric closure of the basic expression $\po b(G;r)$. Let us summarize this as follows:
\begin{remark}\label{rem:symm-closure}
 For every labeled $d$-hypergraph $G$ and every $r\in \GF{p}$, the expression $\po s(G; r)$ is symmetric. Moreover, it is the smallest symmetric expression that contains $\po b(G; r)$ as an elementary subexpression.
\end{remark}


\begin{fact}\label{sym-sum}
Every symmetric $\cdhexpr{d}{p}{q}$ expression $\po f$ can be written as a sum 
\begin{equation*}
\po f(\ov x)= \sum_{ G \in \hyp{p}{d}(V)} \sum_{r \in \F_p} \beta_{G, r} \cdot  \po s(G; r).
\end{equation*}
for  $\beta_{G, r}\in \F_q$ (recall that $\hyp{p}{d}(V)$ denotes the set of labeled $d$-hypergraphs on $V$).
\end{fact}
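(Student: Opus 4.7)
The plan is to combine the normalized form \eqref{eqn:normalizedexpr} with the orbit–stabilizer theorem: symmetry forces the coefficient $\alpha_{G,r}$ to be constant on each orbit of the action of $\sym{V}$ on $\hyp{p}{d}(V)$, and the expressions $\po s(G;r)$ are exactly the sums of the elementary subexpressions over such orbits.

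First I would start from the normalized expansion of $\po f$ as
\[ \po f = \sum_{G \in \hyp{p}{d}(V)} \sum_{r \in \F_p} \alpha_{G,r}\, \po b(G;r),\]
which we may assume because of \eqref{eqn:normalizedexpr}. The uniqueness of this expansion is important: distinct pairs $(G,r)$ give distinct elementary subexpressions (using that the labeled hypergraph $G$ uniquely determines its multilinear polynomial). Applying any $\pi \in \sym{V}$ produces $\pi(\po f) = \sum_{G,r} \alpha_{G,r}\,\po b(\pi G; r) = \sum_{G,r} \alpha_{\pi^{-1} G, r}\, \po b(G;r)$, so by uniqueness the hypothesis $\pi(\po f) = \po f$ yields $\alpha_{\pi G, r} = \alpha_{G, r}$ for every $\pi \in \sym{V}$, every $G$ and every $r$. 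Hence $\alpha_{\cdot, r}$ is constant on each $\sym{V}$-orbit in $\hyp{p}{d}(V)$.

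Next I would partition $\hyp{p}{d}(V)$ into orbits $\cO_1, \dots, \cO_N$ under the action of $\sym{V}$ and pick a representative $G_j \in \cO_j$ for each $j$. By the orbit–stabilizer theorem, if $\pi_1, \dots, \pi_k$ is a transversal of $\aut{G_j}$ in $\sym{V}$, then the map $\pi_i \mapsto \pi_i G_j$ is a bijection between this transversal and the orbit $\cO_j$. Therefore
\[ \po s(G_j; r) \;=\; \sum_{i=1}^k \po b^{\pi_i}(G_j; r) \;=\; \sum_{i=1}^k \po b(\pi_i G_j; r) \;=\; \sum_{G \in \cO_j} \po b(G; r).\]
Regrouping the sum defining $\po f$ along orbits gives
\[ \po f = \sum_{j=1}^N \sum_{r \in \F_p} \alpha_{G_j, r} \sum_{G \in \cO_j} \po b(G;r) \;=\; \sum_{j=1}^N \sum_{r \in \F_p} \alpha_{G_j, r} \cdot \po s(G_j; r).\]

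To put the expression into the exact form requested, I would then define $\beta_{G,r} = \alpha_{G_j,r}$ whenever $G$ is the chosen representative $G_j$ of its orbit and $\beta_{G,r} = 0$ otherwise, noting that this choice lies in $\F_q$ since the $\alpha_{G,r}$ do. The only delicate point in the plan is checking that $\po s(G;r)$ does not depend on the transversal chosen~-- this is already recorded in the text after the definition of $\po s(G;r)$, using the invariance of $G$ under $\aut{G}$, and is the conceptual reason why the orbit decomposition above is well defined.
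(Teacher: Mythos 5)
Your proof is correct and arrives at the same decomposition as the paper, but the route is presented differently. The paper appeals to \cref{rem:symm-closure} as a black box and runs a size-decreasing induction (subtract $\beta\cdot\po s(G;r)$ from $\po f$ and recurse), whereas you derive the orbit-constancy of the coefficients $\alpha_{G,r}$ directly from the uniqueness of the normalized expansion \eqref{eqn:normalizedexpr} and then regroup the sum by orbits, identifying each $\po s(G_j;r)$ with the orbit sum $\sum_{G\in\cO_j}\po b(G;r)$ via orbit--stabilizer. Your version is somewhat more explicit and self-contained: it makes the uniqueness of the normalized form and the bijection between the transversal and the orbit both visible, which the paper leaves implicit inside \cref{rem:symm-closure} and the induction; in exchange the paper's argument is shorter and avoids having to phrase the orbit decomposition formally. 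Both are correct, and the well-definedness of $\po s(G;r)$ independent of the chosen transversal (which you flag as the delicate point) is exactly the observation the paper records after its definition.
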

 
\begin{proof}
If a symmetric $\po f$ has some $\beta \cdot \po b(G, r)$ as an elementary subexpression, it must also have $\beta \cdot \po s(G; r)$ as a subexpression (see \cref{rem:symm-closure}). But now $\po f - \beta \cdot \po s(G; r)$ is a symmetric expression which is shorter than $\po f$, and hence we can use induction to prove the desired decomposition for $\po f(\ov x)$, by adding $\beta \cdot \po s(G; r)$ to the decomposition of $\po f - \beta \cdot \po s(G; r)$.
\end{proof}

\section{High-level Description of the Proof}\label{main-sec}

We now start with an expression as in \cref{sym-sum} and prove our main theorems. For this, we need several definitions and intermediate results. The proofs of these intermediate results are deferred to \cref{proofs-sec}; instead we give some high-level ideas how the respective results are used and then show how our main results follow from the intermediate results.
As every symmetric expression $\po f$ is decomposed into an appropriate sum of elements of the form $\alpha \cdot  \po s(G;r)$, we need a deeper understanding of each $\po s(G;r)$.
We investigate these expressions $\po s(G;r)$ in three main steps:
\begin{enumerate}
    \item we analyze the symmetries of $G$ to find a large so-called \emph{fully symmetric set} (see \cref{automo-lem}),
    \item we process the hypergraph $G$ further to make it \emph{symmetry purified} (see \cref{symmetry-purified-def} and \cref{sym-expr}) applying two versions of the Degree Decreasing Lemma (\cref{ddl} and \cref{sddl}),
    \item we analyze the periods of the resulting expressions $\po s(G;r)$ (see \cref{period-sym-pure}).
\end{enumerate}

 \paragraph{Symmetries of hypergraphs.} Recall that one of our goals is to prove exponential lower bounds on the size of symmetric circuits/expressions computing $\cAND{n}$. We start with the observation that, if in an expression $\po f$ we find a very asymmetric graph $G$, we  know that the size of $\po f$ must be relatively large. This is because the automorphism group of $G$ is small and, hence, the length of the expression of the form \eqref{decomp-sum} induced by $G$, i.e.\ $\po s(G; R)$, must be large (more precisely, $k$ as defined above is large). On the other hand, for highly symmetric graphs $G$, we can find a big, very regular substructure of $G$, which we will call a \textit{pseudo-clique}.

\begin{definition}
Let $G$ be an $\GF{p}$-labeled hypergraph $G = (V,\lambda)$ (i.e. $\lambda: \cP(V) \setminus \{\emptyset\}\to \GF{p}$). We say that a subset $C \sse V$ is \emph{fully symmetric}, if for each pair of subsets $e_1, e_2\subseteq V$ with $|e_1| = |e_2|$  and $e_1 \cap \overline{C} = e_2 \cap \overline{C}$ we have $\lambda(e_1) = \lambda(e_2)$.

Moreover, a $\GF{p}$-labeled hypergraph $G=(V,\lambda)$ is called a \emph{pseudo-clique} if $\aut{G} = \sym{V}$~-- or, equivalently, if for each $d \in [n]$ there is some $\lambda_d$ such that $\lambda(e) = \lambda_d$ all $e\sse V$ with $\abs{e} = d$.
\end{definition}

Note that an induced subgraph on a fully symmetric subset of vertices is a pseudo-clique.
Now, we are ready to present a key lemma, which allows us to restrict our attention only to very symmetric hypergraphs.

\begin{lemma}\label{automo-lem}
Let $0<\eps < 1/8$. Every $\GF{p}$-labeled hypergraph $G = (V,\lambda)$ with $n =\abs{V} \geq 13$ either
\begin{itemize}
    \item has a fully symmetric subset on at least $n - \floor{\eps n}$ vertices, or
    \item its automorphism group satisfies $|\sym{V}/\aut{G})| > 2^{\floor{\eps n}}$. 
\end{itemize}
\end{lemma}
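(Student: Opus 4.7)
I would reduce the statement to the combinatorics of the equivalence relation on $V$ defined by $v\sim w$ iff the transposition $(vw)$ lies in $\aut G$; this relation is transitive by the identity $(vu)=(wu)(vw)(wu)$. Writing $C_1,\dots,C_s$ for its equivalence classes, each $\sym{C_i}$ is generated by the transpositions it contains, all of which lie in $\aut G$, so $\sym{C_i}\leq \aut G$; conversely, any fully symmetric subset sits inside one class. Hence the maximal fully symmetric subsets of $V$ are precisely $C_1,\dots,C_s$. Setting $k=\floor{\eps n}$, I may assume $|C_1|<n-k$ (otherwise the first alternative already holds) and try to prove $|\sym V/\aut G|>2^{k}$.

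\textbf{Orbit-size case split.} Let $O_1,\dots,O_t$ be the orbits of $\aut G$ on $V$ with sizes $m_1\geq \cdots\geq m_t$. Since $\aut G\leq \sym{O_1}\times\cdots\times \sym{O_t}$, the lower bound $|\sym V/\aut G|\geq \binom{n}{m_1,\dots,m_t}$ is immediate. When $m_1\leq n-k$, this multinomial easily exceeds $2^k$: if $m_1\geq k$, then $\binom{n}{m_1}\geq\binom{n}{k}\geq (n/k)^k\geq 8^k>2^k$ since $\eps<1/8$ gives $n/k\geq 8$; if instead $m_1<k$, every orbit is small, $\prod m_i!\leq k^n$, and so $\binom{n}{m_1,\dots,m_t}\geq (n/(ek))^n\geq (8/e)^n>2^n$.

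\textbf{Large-orbit case.} The hard case is $m_1\geq n-k+1\geq 12$. Let $O=O_1$, $m=m_1$, let $\Gamma_O\leq \sym O$ be the image of $\aut G$ on $O$, and $\tilde K\leq \sym O$ the image of those $\pi\in\aut G$ fixing $V\sm O$ pointwise; then $\tilde K\trianglelefteq \Gamma_O$. If $A_O\not\leq \Gamma_O$, then $\Gamma_O$ is a proper transitive subgroup of $\sym O$ missing $A_O$; Bochert's theorem gives $[\sym O:\Gamma_O]\geq\floor{(m+1)/2}!$ in the primitive case, while in the imprimitive case $\Gamma_O\leq \sym a\wr \sym b$ for some $ab=m$ with $a,b\geq 2$, of index at least $\binom{m}{m/2}/2\geq 2^{m-1}/\sqrt m$; both exceed $2^k$ because $m>2k$, so from $|\sym V/\aut G|\geq [\sym O:\Gamma_O]\cdot\binom{n}{m}$ the bound follows. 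If instead $A_O\leq \Gamma_O$, then for $m\geq 5$ the normal subgroup $\tilde K$ must lie in $\{\{1\},A_O,\sym O\}$. The crucial observation is that $\tilde K=A_O$ cannot occur: using that $A_O$ is $j$-homogeneous on $O$ for every $1\leq j\leq m-1$, $A_O$-invariance of $\lambda$ on sets $e=e_1\cup e_2$ with $e_1\sse O$, $e_2\sse V\sm O$ shows that $\lambda(e)$ depends only on $|e_1|$ and $e_2$, which is automatically $\sym O$-invariant; hence $\tilde K\supseteq A_O$ already forces $\tilde K=\sym O$. But then $\sym O\leq \aut G$ would make $O$ itself a fully symmetric subset of size $m>n-k$, contradicting our assumption. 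The only remaining possibility is $\tilde K=\{1\}$, in which case $\aut G$ injects into $\sym{V\sm O}$, so $|\aut G|\leq (n-m)!\leq(k-1)!$ and $|\sym V/\aut G|\geq n!/(k-1)!>2^{n-k+1}>2^k$.

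\textbf{Main obstacle.} I expect the delicate step to be the ``homogeneity promotion'' that rules out $\tilde K=A_O$: one must exploit the $j$-homogeneity of $A_m$ for every subset size while carrying along the fixed part of the subset in $V\sm O$, in order to promote $A_O$-invariance of $\lambda$ all the way to $\sym O$-invariance. The other somewhat intricate part is juggling Bochert's primitive bound with a wreath-product bound for imprimitive groups and combining with the multinomial factor $\binom{n}{m}$; the hypotheses $n\geq 13$ and $\eps<1/8$ are precisely what is needed to ensure $m\geq 12$ and $m>2k$ so that every estimate strictly exceeds $2^k$.
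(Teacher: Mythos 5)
Your proposal is correct and follows essentially the same strategy as the paper: orbit-size case split, Bochert/primitive analysis on the projection to the large orbit, and set-transitivity of $A_m$ to promote invariance under $A_O$ to invariance under $\sym O$. The cosmetic differences are that you invoke Bochert's bound $\lfloor (m+1)/2\rfloor!$ directly while the paper uses the Praeger--Saxl bound $|\phi(\Gamma)|\leq 4^{|C|}$ for the primitive-but-not-giant case; you treat the subcase $\tilde K=\{1\}$ by a direct size estimate ($|\aut G|\leq (n-m)!$) whereas the paper rules it out using $[\phi(\Gamma):N]\leq (n-|C|)!$ against $|A_{|C|}|$; and your opening observation characterizing maximal fully symmetric sets via the transposition equivalence relation is a nice but inessential addition (the paper gets what it needs directly from \cref{lem:fullysymauto}).

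Two small loose ends you should tie up. First, when $\lfloor \eps n\rfloor=0$ (which is allowed since only $\eps<1/8$ is assumed, so $\eps n<1$ can happen), your small-orbit subcase with $m_1\geq k$ gives $\binom{n}{m_1}\geq\binom{n}{0}=1$, which does not establish the strict inequality $>2^0$; this degenerate case must be handled separately (it is trivial: either $\aut G=\sym V$, in which case $V$ itself is fully symmetric, or the index is $>1$). Second, the imprimitive bound $\binom{m}{m/2}/2$ is the minimal index of a maximal block system only for even $m$; for odd $m$ one should use a bound like the paper's $2^{(a-1)(b-1)}\geq 2^{m/4}$ valid for all factorizations $ab=m$ with $a,b\geq2$. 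Neither issue affects the correctness of the overall structure, but both should be addressed for a complete write-up.
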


\paragraph*{Reduction based on the Degree Decreasing Lemma.}
One of the few examples of lower bounds for circuits using modulo counting are due to Grolmusz and Tardos \cite{GrolmuszT00,Grolmusz01}. The authors prove lower bounds for $\cdhcirc{d}{p}{q}$ circuits with restrictions put on connections between $\cAND{d}$ layer and $\textsc{MOD}_p$ gates. More precisely, \cite{Grolmusz01} shows that if the number of multiplications needed to compute the polynomial corresponding to each $\cdhandmod{d}{p}$ subcircuit is bounded by $cn$ for small enough $c$, then a $\cdhcirc{d}{p}{q}$ circuit requires exponential size to compute $\cAND{n}$. One of their key tools is the so-called Degree Decreasing Lemma:

\begin{lemma}[Degree Decreasing Lemma]\label{ddl}
Let $p \neq q$ be prime numbers.  Then every function $f: \GF{p}^3 \mapsto \GF{q}$  represented by a $3$-ary $\cdhexpr{2}{p}{q}$ expression
$\po b(\gamma \cdot z_1 \cdot z_2 + y; t)$
can be also represented by an expression of the form
$$\sum_{(j_1,j_2,j_3) \in \GF{p}^3} \sum_{r \in \GF{p}}  \beta_{j_1,j_2,j_3}^{(r)} \cdot \po b(j_1 z_1 + j_2 z_2 + j_3 y;\, r).$$
where $\beta_{j_1,j_2,j_3}^{(r)}$ are some coefficients from $\GF{q}$ (also depending on $\gamma$ and $t$).
\end{lemma}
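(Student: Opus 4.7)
The plan is to prove that $\po b(\gamma z_1 z_2 + y; t)$ lies in the $\GF{q}$-linear span of the indicators $\po b(j_1 z_1 + j_2 z_2 + j_3 y; r)$, viewing both sides as elements of the $\GF{q}$-vector space $\GF{q}^{\GF{p}^3}$. As a first reduction I would decompose the left-hand side as
\[\po b(\gamma z_1 z_2 + y; t) \;=\; \sum_{\substack{(a_1,a_2,b)\in\GF{p}^3\\\gamma a_1 a_2 + b = t}} \po b(z_1;a_1)\,\po b(z_2;a_2)\,\po b(y;b),\]
so that it suffices to exhibit each triple product of point indicators $\po b(z_1;a)\po b(z_2;b)\po b(y;c)$ as a $\GF{q}$-linear combination of the desired generators.

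The key step is Fourier analysis on $\GF{p}$. Since $p \neq q$, the prime $p$ is invertible in $\GF{q}$ and the extension $\GF{q^k}$ with $k = \mathrm{ord}_p(q)$ contains a primitive $p$-th root of unity $\omega$. Orthogonality of additive characters of $\GF{p}$ gives $\po b(z;a) = \tfrac{1}{p}\sum_{j\in\GF{p}}\omega^{j(z-a)}$; multiplying three such identities and expanding $\omega^{j_1 z_1 + j_2 z_2 + j_3 y} = \sum_{r\in\GF{p}}\omega^r\,\po b(j_1 z_1 + j_2 z_2 + j_3 y;r)$ yields
\[\po b(z_1;a)\po b(z_2;b)\po b(y;c) \;=\; \frac{1}{p^3}\sum_{j_1,j_2,j_3,r\,\in\,\GF{p}}\omega^{\,r - j_1 a - j_2 b - j_3 c}\,\po b(j_1 z_1 + j_2 z_2 + j_3 y;\, r),\]
an explicit identity of functions $\GF{p}^3 \to \GF{q^k}$ which, composed with the first reduction, expresses $\po b(\gamma z_1 z_2 + y; t)$ as a $\GF{q^k}$-linear combination of the intended form.

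The main obstacle is that the coefficients produced above a priori live in $\GF{q^k}$ rather than in $\GF{q}$. To descend them to $\GF{q}$, I would appeal to the elementary fact that whether a vector in a $\GF{q}$-vector space lies in the $\GF{q}$-span of a finite subset is preserved under field extension (the column rank of a matrix is invariant under base change). Since both the target and each generator lie in $\GF{q}^{\GF{p}^3}$, and since the Fourier computation places the target inside their $\GF{q^k}$-span, coefficients $\beta_{j_1,j_2,j_3}^{(r)} \in \GF{q}$ of the required form must exist. An alternative, more explicit route would be to average the $\GF{q^k}$-identity over $\mathrm{Gal}(\GF{q^k}/\GF{q})$, but the abstract rank argument is cleaner and avoids any concern about whether $k$ happens to be invertible in $\GF{q}$.
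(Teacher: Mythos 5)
The paper does not reprove this lemma; it refers to Grolmusz's \cite[Lemma 6]{Grolmusz01} (applied repeatedly to absorb the scalar $\gamma$) and to \cite[Fact 3.3]{IdziakKK22LICS}. Your argument is a correct, self-contained alternative and takes a genuinely different route. The character-sum step is sound: since $p\neq q$, the prime $p$ is a unit in $\GF{q}$, the extension $\GF{q^k}$ with $k=\mathrm{ord}_p(q)$ contains a primitive $p$-th root of unity $\omega$, the orthogonality identity $\po b(z;a)=p^{-1}\sum_{j}\omega^{j(z-a)}$ holds pointwise, and expanding $\omega^{j_1z_1+j_2z_2+j_3y}=\sum_r\omega^r\,\po b(j_1z_1+j_2z_2+j_3y;r)$ places each indicator product, and hence the target (via your decomposition over triples with $\gamma ab+c=t$), in the $\GF{q^k}$-span of the generators. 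The descent to $\GF{q}$ is handled correctly: the rank of a matrix is invariant under field extension, so a $\GF{q}$-valued vector lying in the $\GF{q^k}$-span of $\GF{q}$-valued vectors already lies in their $\GF{q}$-span, and you are right to prefer this to averaging over $\mathrm{Gal}(\GF{q^k}/\GF{q})$, which would quietly require $q\nmid k$. The trade-off against the cited proofs is roughly that the combinatorial route produces the coefficients $\beta^{(r)}_{j_1,j_2,j_3}$ explicitly, whereas your Fourier argument is shorter, more conceptual, and in fact proves the stronger statement that the indicators $\po b(j_1z_1+j_2z_2+j_3y;r)$ span all of $\GF{q}^{\GF{p}^3}$, so that \emph{every} function of three $\GF{p}$-variables admits a representation of the stated form.
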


The Lemma is a consequence of the result by Grolmusz \cite[Lemma 6]{Grolmusz01} (note that the statement there does not include the factor $\gamma$, so formally, to obtain \cref{ddl}, one needs to apply \cite[Lemma 6]{Grolmusz01} several times). One can also see it as a consequence of \cite[Fact 3.3]{IdziakKK22LICS}. This very simple lemma allows us to navigate through the space of different representations for a given function $f$ by allowing a local change of its corresponding expression.  The power of the lemma comes from the fact that we can substitute arbitrary polynomials for $z_1, z_2, y$ and obtain many different equivalences.

We will need a more regular version of the Lemma when the multiplication inside $\po b$ has bigger arity. The price  we pay for a nicer form is that the represented function has a smaller (partially Boolean) domain, which slightly reduces the scope of applicability of the lemma (as we cannot substitute any polynomial for the variables); however, it still suffices for our purposes.

\begin{lemma}[Symmetric Degree Decreasing Lemma]\label{sddl}
Let $p\neq q$ be prime. Let $\gamma \in \GF{p}\setminus \{0\}$. Then every function $f:\{0,1\}^d \times \GF{p} \mapsto \GF{q}$ represented by a $d+1$-ary $\cdhexpr{d}{p}{q}$ expression
\[\po b(\gamma \cdot x_1 \cdot \ldots \cdot x_d + y; t)\]
can be also represented by expression
 \[\po h(\ov x, y;t) = \po b(y; t) + \sum_{r \in \GF{p}} \beta_{t,r} \sum_{S \subseteq [d]} \alpha_{|S|} \cdot \po  b(\gamma\cdot \Sigma(S)  + y; r)\]
for $\alpha_{|S|} = (-1)^{|S|}$ and some coefficients $\beta_{t,r} \in \F_q$.
\end{lemma}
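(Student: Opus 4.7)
The plan is to verify $f = \po h$ pointwise on $\{0,1\}^d \times \F_p$, splitting on whether $\ov x = (1,\ldots,1)$ or not, and then to extract the coefficients $\beta_{t,r}$ by solving a small linear system over $\F_q$.

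First, suppose some $x_{i_0} = 0$, so that $\gamma \cdot x_1 \cdots x_d = 0$ and hence $f = \po b(y;t)$. For $\po h$, I would pair each $S \subseteq [d]$ with $S \triangle \{i_0\}$: since $x_{i_0} = 0$, both subsets yield the same value of $\Sigma(S) = \sum_{i \in S} x_i$, while their sizes differ by exactly $1$ and so carry opposite signs $(-1)^{|S|}$. Hence the inner sum $\sum_{S \subseteq [d]}(-1)^{|S|}\po b(\gamma\Sigma(S) + y; r)$ vanishes for every $r$, and $\po h = \po b(y;t) = f$ irrespective of the $\beta_{t,r}$.

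In the remaining case $\ov x = (1,\ldots,1)$, we have $\Sigma(S) = |S|$, so for each $r$ the inner sum collapses to $\sum_{j=0}^d \binom{d}{j}(-1)^j \po b(\gamma j + y; r)$. Introducing the $\F_q$-linear shift operator $T$ on $V := \F_q^{\F_p}$ defined by $(Tg)(y) = g(y+\gamma)$, this expression equals $((I-T)^d\po b(\cdot;r))(y)$. Since the indicator functions $\{\po b(\cdot;r)\}_{r\in\F_p}$ form an $\F_q$-basis of $V$, by choosing $\beta_{t,r}$ we can realise $(I-T)^d h$ for an arbitrary $h \in V$. Matching $\po h$ with $f = \po b(\gamma + y; t)$ therefore reduces to producing $h \in V$ with
\[
(I-T)^d\, h \;=\; T\,\po b(\cdot;t) - \po b(\cdot;t) \;=\; -(I-T)\,\po b(\cdot;t).
\]

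The main obstacle is the existence of such an $h$ over $\F_q$. Since $T^p = I$ and $\gcd(p,q) = 1$, the minimal polynomial of $T$ divides the separable polynomial $X^p - 1$, so $T$, and therefore $I-T$, is diagonalizable over the algebraic closure of $\F_q$. Consequently $\operatorname{rank}((I-T)^d) = \operatorname{rank}(I-T) = p-1$, and in particular $\operatorname{Image}((I-T)^d) = \operatorname{Image}(I-T)$, which is precisely the hyperplane of functions summing to $0$ on $\F_p$. Since $-(I-T)\,\po b(\cdot;t)$ visibly lies in this hyperplane, a preimage $h$ exists over $\F_q$, and its coordinates in the basis $\{\po b(\cdot;r)\}_r$ yield the desired $\beta_{t,r} \in \F_q$. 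This step relies essentially on $p \neq q$: if $p = q$, then $X^p - 1 = (X-1)^p$ is inseparable, powers of $I-T$ would have strictly growing kernels, and the image equality would fail.
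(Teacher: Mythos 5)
Your proof is correct and takes a genuinely different route from the paper's. Both arguments treat the case where some $x_{i_0}=0$ identically, via the pairing $S\leftrightarrow S\triangle\{i_0\}$ and $\alpha_{|S|}=-\alpha_{|S|+1}$. For $\ov x = (1,\ldots,1)$, the paper first restricts to a convenient degree $d=q^{k}\equiv 1\pmod p$, for which the expansion $\sum_{j=0}^{d}\binom{d}{j}(-1)^{j}\po b(\gamma j+y;r)$ collapses modulo $q$ to $\po b(y;r)-\po b(\gamma+y;r)$, verifies the identity directly with the explicit choice $\beta_{t,r}=1$ for $r\neq t$ and $\beta_{t,t}=0$, and then descends to arbitrary $d$ by the substitution $x_d:=1$, which yields the recursion $\beta_{t,r}^{(d-1)}=\beta_{t,r}^{(d)}-\beta_{t,r+1}^{(d)}$. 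You instead handle every $d\geq 1$ uniformly by reading the inner sum as $(I-T)^{d}$ applied to a basis vector of $V=\F_q^{\F_p}$ and noting that, since $p\neq q$, the polynomial $X^p-1$ is separable over $\F_q$, so $T$ (hence $I-T$) is diagonalizable over $\overline{\F_q}$; therefore $\operatorname{rank}((I-T)^{d})=\operatorname{rank}(I-T)=p-1$, and since rank is stable under field extension the two images coincide already over $\F_q$. As $-(I-T)\po b(\cdot;t)$ plainly lies in $\operatorname{Image}(I-T)$, a preimage $h$ exists over $\F_q$ and its coordinates give the $\beta_{t,r}$. Your route is non-constructive but isolates cleanly where $p\neq q$ enters (separability of $X^p-1$), whereas the paper's route is elementary and supplies a concrete recursion for the coefficients; since the lemma is only used as an existence statement, both are adequate. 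As a small aside, the paper's intermediate claim ``$(-1)^d=1\pmod q$'' is a slip (for odd $q$ and $d=q^k$ one has $(-1)^d=-1$); the sign that is actually used in the subsequent line, $(-1)^d=-1\pmod q$, is the correct one and holds for all primes $q$, so the paper's argument is unaffected.
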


The key property of the formula $\po h$ is that it is invariant under permutations of the variables $x_1, \ldots, x_d$, which is not the case for original Degree Decreasing Lemma of \cite{Grolmusz01}.
The next step in the proof is to apply the  Symmetric Degree Decreasing Lemma (\cref{sddl}) to expressions generated by highly-symmetric hypergraphs in order to obtain an even nicer representation defined as follows:

\begin{definition}\label{symmetry-purified-def}
We call an $\GF{p}$-labeled $d$-hypergraph $G = (V,\lambda)$ \textit{symmetry-purified} with respect to $C \sse V$ if
\begin{enumerate}
    \item $C$ is fully symmetric in $G$,
    \item if $\lambda(e) \neq 0$, then $e$ is completely contained either in $C$ or in $\overline{C}$ (\ie every edge $e$ is fully contained either in $C$ or in $\overline{C}$,
    \item if $\lambda(e) \neq 0$ and $e\sse \ov C$, then $\abs{e} = 1$ (\ie every edge $e$ with $e \cap C = \emptyset$ satisfies $|e| = 1$).
\end{enumerate}
Moreover, if the graph satisfies only conditions 1 and 2 we will call  it \emph{partially symmetry purified}.
We write $\symp{V}{C}$ for the set of all symmetry-purified $d$-hypergraphs with respect to $C \sse V$ and $\psymp{V}{C}$ for the set of partially symmetry-purified $d$-hypergraphs with respect to $C \sse V$ (note that $d$ and $p$ are implicitly defined from the context for $\symp{V}{C}$ and $\psymp{V}{C}$). 
\end{definition}



The next crucial lemma allows us to restrict our attention only to expressions  $\po s(G;u)$ over symmetry-purified graphs, which have a very regular and much easier to analyze structure. This enables a later combinatorial analysis of the periodic behaviour of such $\po  s(G;u)$. The proof of the lemma relies on carefully applying both \cref{ddl} and \cref{sddl} to alter the graphs while preserving the symmetry of the corresponding expression.

\newcommand{\Hset}{\cH}

\begin{lemma}\label{sym-expr}
Let $p\neq q$ be prime numbers, let $u \in \GF{p}$, and let $G =(V,\lambda)$ be a $\GF{p}$-labeled $d$-hypergraph. Moreover, let $C\sse V$ be a maximal fully symmetric subset with $\abs{C} > \abs{V}/2$.
Then there are constants $\beta_{H, r} \in \F_q$ such that 
\[\po s(G;u)\equiv \sum_{H \in \symp{V}{C}} \sum_{r\in \GF{p}} \beta_{H, r}  \,\po s(H, r).\]
\end{lemma}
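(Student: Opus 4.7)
The plan is to induct on a well-founded ``badness'' measure $\mu(G) = (s_1, s_2)$ ordered lexicographically, where $s_1$ is the maximum size of any edge of $G$ violating condition~(2) or (3) of sym-pureness (i.e.\ a straddling edge or a size-$\geq 2$ edge fully in $\overline{C}$), and $s_2$ is the number of such edges attaining size $s_1$. When $\mu(G) = 0$, the only potentially remaining violation is of condition~(1), but this is ruled out by the hypothesis that $C$ is (maximally) fully symmetric in $G$; hence $G$ is sym-pure and the statement is immediate with $\beta_{G,u} = 1$ and all other coefficients zero.

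For the inductive step, choose a maximum-size bad edge $e$ with nonzero label $\gamma$ and decompose the polynomial as $\po p_G(\ov x) = \gamma \prod_{v \in e} x_v + y(\ov x)$. Applying \cref{sddl} gives
\[
\po b(G; u) \equiv \po b(y; u) + \sum_{r \in \F_p}\beta_{u,r}\sum_{S \subseteq e}(-1)^{|S|}\po b(\gamma\Sigma(S) + y; r).
\]
Crucially, the coefficients $\beta_{u,r}$ depend only on $u, r, \gamma$ and not on the specific vertices of $e$; so the same identity with the same scalars applies to each $\po b^{\pi_i}(G; u)$ in the transversal decomposition $\po s(G; u) = \sum_i \po b^{\pi_i}(G; u)$, using the edge $\pi_i(e)$ in $\pi_i G$. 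Writing $G'$ for $G$ with $e$ removed and $G'_S$ for $G'$ with additional singleton edges (label $\gamma$) at the vertices of $S$, summing over $i$ yields
\[
\po s(G; u) \equiv \Xi(G'; u) + \sum_{r\in\F_p, \,S \subseteq e}(-1)^{|S|}\beta_{u, r}\,\Xi(G'_S; r),
\]
where $\Xi(H; r) := \sum_i \po b^{\pi_i}(H; r)$ denotes the pushforward of the transversal sum to the new hypergraph $H$, itself a symmetric expression.

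Each of $G'$ and $G'_S$ has strictly smaller $\mu$: the top-size bad edge $e$ has been deleted, and the newly introduced singletons, being of size $1 < s_1$, do not contribute to $s_1$. Singletons landing in $\overline{C}$ immediately satisfy condition~(3). Singletons at specific vertices of $C$ would initially violate condition~(1), but the hypothesis that $C$ is fully symmetric in $G$ gives $\sym{C} \leq \aut{G}$, so the transversal sum inside $\Xi$ implicitly averages those singletons over $\sym{C}$. This averaging transforms the asymmetric contributions at individual vertices of $C$ into elementary symmetric polynomial contributions over all of $C$, and a short application of \cref{ddl} rewrites these as labels on edges fully contained in $C$, restoring condition~(1). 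Since each $\Xi$ is symmetric, it decomposes (by \cref{sym-sum}) into an $\F_q$-combination of $\po s(H'; r')$ for hypergraphs $H'$ of strictly smaller $\mu$, and the inductive hypothesis applied to each $H'$ yields the required sym-pure decomposition.

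The main obstacle is the bookkeeping required to identify $\Xi(H; r)$ with the correct $\F_q$-linear combination of $\po s(H'; r)$ over sym-pure $H'$: since $\{\pi_i\}$ is a transversal of $\sym{V}/\aut{G}$ and \emph{not} of $\sym{V}/\aut{H'}$, each orbit element of $H'$ appears in $\Xi(H; r)$ with a multiplicity that is the index of a stabilizer subgroup, and one must verify that after the $\sym{C}$-averaging these indices combine into genuine $\F_q$-coefficients for sym-pure $H'$. The inclusion $\sym{C} \leq \aut{G}$ (together with the maximality of $C$ and $|C| > |V|/2$, which pin down $C$ canonically) is essential here. A further subtlety arises when some index vanishes modulo $q$; this can be handled by performing the reduction in parallel on the unnormalized orbit sums $\sum_{\pi \in \sym{V}} \po b^\pi(H; r) \equiv |\aut{H}|\,\po s(H; r)$ and converting back via the relation $|\aut{G}|\po s(G; u) \equiv \sum_{\pi\in\sym{V}} \po b^\pi(G; u)$ once the sym-pure form is in hand.
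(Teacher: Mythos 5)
The key idea of your proposal --- inducting on a badness measure and repeatedly applying a degree-decreasing step --- is in the spirit of the paper, but your concrete implementation has a genuine gap that the paper's proof is carefully designed to avoid.

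\textbf{The core problem:} you apply \cref{sddl} to a \emph{single} bad edge $e$, removing it to form $G'$ (and $G'_S$). However, if the $\aut{G}$-orbit of $e$ has more than one element, then $G'$ and $G'_S$ are \emph{not} invariant under $\aut{G}$; in particular $C$ is no longer fully symmetric in $G'_S$. For example, if $e=\{a,b\}$ with $a\in C$ and $b\in\ov{C}$, then full symmetry of $C$ in $G$ forces $\lambda(\{c,b\})=\gamma$ for \emph{all} $c\in C$; after deleting only $e$, we have $\lambda'(\{a,b\})=0$ while $\lambda'(\{c,b\})=\gamma$ for $c\neq a$, so $C$ fails the fully-symmetric condition in $G'$. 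This breaks your induction in two ways: (i) the intermediate graphs $G'_S$ are not covered by the inductive hypothesis, whose statement requires a maximal fully symmetric $C$ with $\abs{C}>\abs{V}/2$; and (ii) $\Xi(G'_S;r)=\sum_i \po b^{\pi_i}(G'_S;r)$ is not a symmetric expression --- indeed it is not even well-defined, since $\pi_i(G'_S)$ changes if you replace $\pi_i$ by another representative $\pi_i\rho$ with $\rho\in\aut{G}\setminus\aut{G'_S}$. Your appeal to \cref{sym-sum} on ``each $\Xi$'' therefore does not go through. The claim that the transversal sum ``implicitly averages those singletons over $\sym{C}$'' is where the argument slips: a transversal of $\sym{V}/\aut{G}$ contains exactly \emph{one} representative per coset, so no averaging over the $\aut{G}=\sym{C}\times\Gamma$ part actually occurs inside $\Xi$. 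Finally, your parenthetical fix of passing to the unnormalized sum $\sum_{\pi\in\sym{V}}\po b^{\pi}(H;r)=\abs{\aut{H}}\,\po s(H;r)$ and converting back requires $\abs{\aut{G}}$ to be invertible modulo $q$, which need not hold; merely observing the subtlety does not resolve it.

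\textbf{What the paper does differently.} The paper never touches a single edge: it always processes an \emph{entire $\aut{G}$-orbit} of edges at once, so the modified hypergraph $H'$ satisfies $\aut{H}\le\aut{H'}$ and $C$ remains fully symmetric throughout. Moreover, it uses two different tools for the two kinds of bad edges. For straddling edges (\cref{part-sym}), the orbit $O$ factors as $P\times Q$ (subsets of $C$ times an orbit in $\ov{C}$), giving $\gamma\sum_{e'\in O}e' = \gamma z_1 z_2$ with $z_1,z_2$ both $\aut{G}$-invariant; the \emph{original} Degree Decreasing Lemma (\cref{ddl}) applied to $\gamma z_1 z_2 + H'$ then produces only polynomials of the form $j_1 z_1 + j_2 z_2 + j_3 H'$, which are all $\aut{G}$-invariant and split cleanly over $C$ and $\ov{C}$. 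Your SDDL-based attack here instead produces singletons at specific vertices of $e\cap C$, which destroy full symmetry of $C$ and cannot be removed by ``a short application of \cref{ddl}'' as you suggest. For edges fully inside $\ov{C}$ of size $\ge 2$ (\cref{sym-expr} proper), the paper applies SDDL \emph{simultaneously to all edges in the orbit} via \cref{sddliteration}, using the fact that the resulting coefficients $\beta_{\cS}^{(t,r)}$ depend only on the multiset $\{\!\!\{|S_1|,\dots,|S_l|\}\!\!\}$; this symmetry of the coefficients is precisely what makes the resulting $\po s_{\class}(H';r)$ expressions symmetric (the paper's footnote near \cref{sddliteration} explicitly flags the failure of the single-edge approach you take). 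To repair your proof you would need to (a) process whole orbits, (b) separate the two cases as the paper does, and (c) supply the actual coefficient analysis that you currently gesture at.
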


\paragraph*{Period of symmetry-purified expressions.}

For a fixed input $\ov b \in \{0,1\}^n$ and an $n$-ary symmetric expression $\po f$, we can compute the value $\po f(\ov b)$ only knowing the hamming weight of the input, i.e.\ the number of $1$s in $\ov b$. 
This means that $\po f$ represents not only a function $\{0,1\}^n \rightarrow D$, but we can also view it as a function $\{0, 1, \ldots, n\} \rightarrow D$.
It turns out that a relatively short $\cdhexpr{d}{p}{q}$ circuit can compute only functions with a relatively small period.
Here, by a period of $\po f$ we mean an integer $r \in \N\setminus\{0\}$ which satisfies $\po f(m+r) = \po f(m)$ for all $m$ in range $[0, n-r]$. Note that all the functions have periods $>n$, so we are mainly interested in finding periods in range $[1,n]$.
Note that the $\cAND{n}$ function, which is of our particular interest, does not have any period (less than $n+1$). 
Thus proving an upper bound for a period of a relatively short symmetric function will give us a lower bound for the length of representation of $\cAND{n}$.
This is in line with some of the previous research \cite{BarringtonBR94,Grolmusz01,straubing2006note}. 
As any $\cdhexpr{d}{p}{q}$ can be transformed to a symmetric expression over symmetry-purified graphs, we only need to concentrate on these special graphs. Indeed, any common period among all the elements of the sum, transfers to the sum itself.
For the following theorem, we need a careful analysis how an expression $\po s(G;u)$ for some symmetry purified graph $G$ is computed. We rely on the fact that, for fixed $s \in \N$, the function $m \mapsto \binom{m}{s}\bmod p$ is periodic with period $p^{k}$ for each $k\in \N$ such that $p^k > s$ (see for instance \cite[Proof of Fact 3.4]{IdziakKK22LICS}).

\newcommand{\acpt}{r}

\begin{theorem}\label{period-sym-pure}
Let $p\neq q$ be prime numbers, $\acpt \in  \GF{p}$ and let $G$ be a $\GF{p}$-labeled $d$-hypergraph on $n$-vertices that is symmetry purified with respect to a maximal fully symmetric subset of vertices $C$ of size $|C|> n/2$.
Then $p^{k_p} \cdot q^{k_q}$ is a period of $\po s(G; \acpt)$ where $k_p$ is the smallest integer satisfying $p^{k_p} > d$ and $k_q$ is the smallest integer satisfying $q^{k_q} > n-|C|$. 
\end{theorem}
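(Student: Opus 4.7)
Plan: Since $\po s(G;r)$ is symmetric, its value depends only on the Hamming weight $m = |\ov b|$; the strategy is to express $\po s(G;r)(m)\bmod q$ as an explicit sum whose period in $m$ can be read off from Lucas' theorem together with the stated periodicity of $m \mapsto \binom{m}{s}\bmod p$.

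First, symmetry-purification lets me split the polynomial of $G$ as
\[
\po p_G(\ov x) \;=\; \sum_{s=1}^{d}\lambda_s\, e_s(x_C) \;+\; \sum_{j \in \ov C}\mu_j\, x_j,
\]
a symmetric polynomial on the variables indexed by $C$ (with $e_s$ the $s$-th elementary symmetric polynomial) plus a linear polynomial on $\ov C$. I then expand $\po s(G;r) = \sum_{H \in \mathrm{orbit}(G)} \po b(H;r)$ and evaluate at $\mathbf{1}_{[m]}$, so that $\po s(G;r)(\mathbf{1}_{[m]})$ equals the integer $N(m) := |\{H \in \mathrm{orbit}(G) : \po p_H(\mathbf{1}_{[m]}) = r\}|$ reduced modulo $q$. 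To evaluate $N(m)$, I parametrize by pairs $(C', \nu')$ with $C' \sse V$ of size $|C|$ and $\nu' \colon \ov{C'} \to \F_p$ an arrangement of the multiset $M = \{\mu_j\}_{j\in\ov C}$; each orbit element is hit by the same number $f$ of such pairs, equal to the number of sym-purifying subsets of $G$ of size $|C|$. For each pair, $\po p_{(C',\nu')}(\mathbf{1}_{[m]})$ depends only on $h := |C'\cap[m]|$ and on the partial sum $\sum_{j\in\ov{C'}\cap[m]}\nu'(j)\in\F_p$. By the full symmetry of $C$, the number of $\nu'$ giving a prescribed partial sum over a $k$-subset of $\ov{C'}$ is a function $A_k(v)$ depending only on $k = m - h$. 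Grouping contributions by $k$ yields
\begin{equation*}
f \cdot \po s(G;r)(m) \;\equiv\; \sum_{k=0}^{K} \binom{m}{k}\binom{n-m}{K-k}\, A_k\!\Bigl(r-\sum_{s=1}^{d}\lambda_s\binom{m-k}{s}\Bigr) \pmod{q},
\end{equation*}
with $K = n - |C|$, the factor $\binom{m}{k}\binom{n-m}{K-k}$ counting the choices of $C'$ with $|C'\cap[m]| = m-k$.

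Finally, I show that the right-hand side is periodic in $m$ modulo $q$ with period $p^{k_p}q^{k_q}$, term by term. Since $k, K-k \leq K < q^{k_q}$, Lucas' theorem applied digitwise in base $q$ gives that $\binom{m}{k}\bmod q$ and $\binom{n-m}{K-k}\bmod q$ are periodic in $m$ with period $q^{k_q}$. The hypothesis $s \leq d < p^{k_p}$ together with the stated fact gives that $\binom{m-k}{s}\bmod p$ is periodic in $m$ with period $p^{k_p}$; hence the $\F_p$-argument of $A_k$ is invariant under $m \mapsto m + p^{k_p}$, and $A_k(\cdot)$ is periodic in $m$ with period $p^{k_p}$ as an integer. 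Since $\gcd(p^{k_p}, q^{k_q}) = 1$, each summand modulo $q$ has period $p^{k_p}q^{k_q}$, and so does the entire sum. Canceling the factor $f$ (immediate when $\gcd(f, q) = 1$; the highly symmetric case $q \mid f$ must be handled by a separate direct analysis of $\aut{G}$) completes the proof. The main technical obstacle is establishing the displayed identity: verifying that $\po p_{(C',\nu')}(\mathbf{1}_{[m]})$ depends on the pair $(C',\nu')$ only through $h$, $k$, and the partial sum, that the count $A_k(v)$ truly depends only on $k$, and that the multiplicity $f$ is constant across the orbit.
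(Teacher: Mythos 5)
Your proposal follows essentially the same route as the paper's proof: parametrize the orbit of $G$ by the position of the pseudo-clique $C'$ and the labels on $\ov{C'}$, count via products of (multi)nomials, and then read off periodicity from the well-known period of $m\mapsto\binom{m}{s}\bmod p$. Grouping by $k=|\ov{C'}\cap[m]|$ rather than by the full vector $\ov s=(s_0,\dots,s_{p-1})$ is a cosmetic reorganization; your $A_k(v)$ is exactly the inner sum $\sum_{\ov s:\,\sum s_i=k,\ \sum j s_j=v}\binom{k}{s_0,\dots,s_{p-1}}\binom{K-k}{l_0-s_0,\dots,l_{p-1}-s_{p-1}}$ that the paper writes via multinomial coefficients, and your $\binom{m}{k}\binom{n-m}{K-k}$ is the outer factor after splitting the multinomials.

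There are, however, two points where the writeup has genuine holes. First, the multiplicity $f$: by the maximality hypothesis on $C$ together with the second item of the paper's \cref{lem:fullysymauto} (two fully symmetric sets of size $>n/2$ intersect, so their union is fully symmetric, so a maximal one is unique), the set $C'$ and the labeling $\nu'$ are uniquely recoverable from each orbit element. Hence $f=1$, and the caveat about ``$q\mid f$'' can be eliminated rather than deferred; as stated, though, your proof is incomplete on this point. Second, and more substantively, the term-by-term periodicity argument invokes periodicity of $m\mapsto\binom{m-k}{s}\bmod p$, but the cited fact about binomial coefficients is only valid on the nonnegative integers, and for $m<k$ the quantity $m-k$ is negative. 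This is exactly the subtlety that the paper's proof spends the $\chi'\to\chi^*$ step on: it replaces $\binom{s_C(m)}{j}$ by $\binom{s_C(m)+p^K}{j}$, where $p^K$ is large enough to guarantee a nonnegative argument without changing the sum mod $q$. Your argument can be repaired without that trick, because whenever $m<k$ one has $\binom{m}{k}=0$, and since $k\le K< q^{k_q}$ the congruence $\binom{m+p^{k_p}q^{k_q}}{k}\equiv\binom{m}{k}\equiv 0\pmod q$ holds, so the $k$-th summand is $\equiv 0\pmod q$ both at $m$ and at $m+p^{k_p}q^{k_q}$ regardless of what the (ill-defined) argument of $A_k$ would be. But this case split is exactly the content that must appear in the proof; as written, the claim ``hence $A_k(\cdot)$ is periodic in $m$ with period $p^{k_p}$'' is unjustified for $m<k$.
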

\newcommand{\grp}{\Gamma}

\paragraph*{Main theorems.}
Now we have all necessary components to prove our main theorems.

\begin{proof}[Proof of \cref{thm:mainPeriod}]
 As discussed in \cref{preparation}, any $\cdhcirc{d}{p}{q}$ circuit has a corresponding symmetric $\cdhexpr{d}{p}{q}$ expression $\po f$ with no periods smaller than the circuit we started with (note that there can happen some blow-up in size, but this does not matter as we argue below).
By \cref{sym-sum}, $\po f$ can be written as a sum of expressions of the form $\po s(G;\acpt)$.
Hence, from now on let us consider one of these expressions $\po s(G;\acpt)$. 

We choose $\epsilon$ such that  $ 2 s  = 2^{\eps \cdot n}$  (meaning that $\eps \cdot n = \log s + 1$ and $\eps < 1/8$).
If $G$ does not contain a fully symmetric set $|C|$ of size at least $n - \floor{\eps n}$, by \cref{automo-lem}, it satisfies $\abs{\sym{[n]}/\aut{G}} \geq 2^{\floor{\eps n}}$.
Thus, writing $\po s(G; r) = \sum_{i=0}^k \po b^{\pi_i}(G, r)$ as in \cref{decomp-sum}, it follows that  $k \geq 2^{\floor{\eps n}}$. As all the different terms in this sum get their inputs from different graphs $\pi_i(G)$, also for each term in the sum there must have been a different gate in the original circuit we started with.
This is a contradiction as $2^{\floor{\eps n}} > s$.

Hence, all the subexpressions $\po s(G;\acpt)$ contain a fully symmetric set $C$ of size at least $n - \floor{\eps n}$. Now, \cref{sym-expr} tells us that we can write $\po f$ as a sum of expressions of the form $\po s(G;\acpt)$ where $G$ is symmetry-purified with respect to $C$.  Then, \cref{period-sym-pure} implies that each such $\po s(G; \acpt)$ has period $p^{k_{p}}\cdot q^{k_q}$, where $k_{p}$ is the smallest integer such that $p^{k_{p}} > d$ and $k_q$ is the smallest integer such that $q^{k_q} > n-|C| = \floor{\eps n} $. 
As $\po f$ is a sum of different $\po s(G; \acpt)$, which all share the period $p^{k_{p}}\cdot q^{k_q}$, it itself has period $p^{k_{p}}\cdot q^{k_q}$.
\end{proof}

Instead of directly proving \cref{thm:mainGeneralIntro}, let us derive the following slightly more explicit and general variant of the theorem:

\begin{theorem}\label{thm:mainGeneralProofs}
   Let $p$ and $q$ be primes and let $n\geq \max\{13, 4p^2q^2\}$ and $d \leq n - \sqrt{n}$. Then every symmetric  $\cdhcirc{d}{p}{q}$ circuit computing the $\cAND{n}$ function has size at least $2^{ \max\{n/(2dpq), \sqrt{n}\}}$.
\end{theorem}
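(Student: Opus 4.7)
The plan is to deduce this theorem from \cref{thm:mainPeriod} combined with the fact that $\cAND{n}$ admits no period $r\in[1,n]$ (since $\cAND{n}(n)=1$ but $\cAND{n}(n-r)=0$ for every such $r$). If $s\geq 2^{n/9}$, the conclusion is immediate from direct comparisons with $n\geq 4p^2q^2$ and $d\leq n-\sqrt n$, so from now on assume $s<2^{n/9}$ and apply \cref{thm:mainPeriod}.

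Let $d^*\leq d$ be the largest number of \emph{distinct} input variables feeding any $\cAND{}$-gate of the circuit; then the same circuit is also a symmetric $\cdhcirc{d^*}{p}{q}$ circuit. Let $p^{k_p}$ be the smallest power of $p$ greater than $d^*$ and $q^{k_q}$ the smallest power of $q$ greater than $\log s + 1$, so $p^{k_p}\leq p d^*$ and $q^{k_q}\leq q(\log s+1)$. By \cref{thm:mainPeriod}, the computed function has period $p^{k_p}q^{k_q}$, which must exceed $n$, giving the \emph{period bound}
\[
  pq\,d^*(\log s+1) \;\geq\; p^{k_p}q^{k_q} \;>\; n,\qquad\text{hence}\qquad s > 2^{n/(pq d^*)-1}.
\]
In parallel, the syntactic symmetry of the circuit forces the existence of a distinct $\cAND{}$-gate for every $d^*$-element subset of the inputs (apply a permutation sending a fixed such subset onto the given one and use the extension $\pi'$ from the definition of symmetric circuit), yielding the \emph{arity bound} $s\geq\binom{n}{d^*}\geq 2^{\min(d^*,\,n-d^*)}$, where the second inequality uses $\binom{n}{k}\geq (n/k)^k$ applied to $\min(k,n-k)$.

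The proof then proceeds by a case split on $d^*$. For $d^*\leq n/(2pq)$ the period bound gives $s > 2^{n/(2pq d^*)}\geq 2^{n/(2pqd)}$, and if additionally $d^*\leq \sqrt n/(2pq)$ this already exceeds $2^{\sqrt n}$. For $d^*>n/(2pq)$ the arity bound combined with $\min(d^*,n-d^*)\geq \sqrt n$ (which holds because $d^*>n/(2pq)\geq \sqrt n$ by $n\geq 4p^2q^2$ and because $n-d^*\geq \sqrt n$ by the hypothesis $d\leq n-\sqrt n$) yields $s\geq 2^{\sqrt n}\geq 2^{n/(2pqd)}$. The main obstacle I expect is the intermediate window $d^*\in(\sqrt n/(2pq),\,n/(2pq)]$, where the loose period bound has already dropped below $2^{\sqrt n}$ while the crude arity bound $2^{\min(d^*,n-d^*)}$ has not yet reached it. To close this gap I would either invoke the sharper period argument (observing that $q^{k_q}$ must additionally exceed $n/p^{k_p}$, so $q^{k_q}$ is at least the next power of $q$ above $n/p^{k_p}$, which strengthens $\log s\geq q^{k_q}/q-1$ well past $\sqrt n$) or the sharper arity bound $\binom{n}{d^*}\geq (n/d^*)^{d^*}$ together with $n/d^*\geq 2pq$; under $n\geq 4p^2q^2$ at least one of these dominates $2^{\sqrt n}$ throughout the window, modulo routine bookkeeping at the integer boundaries.
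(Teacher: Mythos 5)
Your approach is the same as the paper's: reduce to the effective degree $d^*$, invoke \cref{thm:mainPeriod}, use the orbit-counting (``arity'') bound, and split on the size of $d^*$. But your diagnosis of the gap is accurate and, more importantly, it applies to the paper's own proof as well: the paper's second case (``assume $d<\sqrt n$'') only derives a contradiction from $s\leq 2^{n/(2pqd)}$ and thus concludes $s>2^{n/(2pqd)}$, which for $\sqrt n/(2pq)<d<\sqrt n$ is strictly weaker than the claimed $2^{\sqrt n}$. So the window you flag is a genuine hole, not an artefact of your write-up. To see that closing it really requires the sharper arguments you sketch, take $p=2$, $q=5$, $n=4p^2q^2=400$, $d^*=2$: the crude arity bound gives $\binom{400}{2}\approx 2^{16.3}<2^{20}=2^{\sqrt n}$ and the loose period bound $s>2^{n/(2pqd^*)}=2^{10}$ also falls short, but the sharper period bound works — $p^{k_p}=4$ forces $q^{k_q}>100$, hence $q^{k_q}\geq 125$ (the next power of $5$), hence $\log s+1\geq 25$ and $s\geq 2^{24}>2^{\sqrt n}$. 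So your proposed fix (use that $q^{k_q}$ must be a power of $q$ strictly above $n/p^{k_p}$, combined with $\binom{n}{d^*}\geq(n/d^*)^{d^*}$) is the right repair, but carrying it out uniformly over the window is where the content is, and neither you nor the paper actually does it. In short: same route as the paper, same gap, but you at least notice it.
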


\begin{proof}
Let us write $V = [n] = \{x_1, \dots, x_n\}$. 
  First consider the case that $d \geq \sqrt{n}$ and there is actually an $\cAND{k}$ gate $v$ with $n-\sqrt{n}\geq k\geq\sqrt{n}$ inputs. Since for any $\pi\in \sym{V}$ also $\pi(v)$ must be a gate in the circuit, we obtain different $\cAND{k}$ gates for each $k$-element subset of $V$. As there are $\binom{n}{k} \geq \max\{(n/k)^k , (n/(n-k))^{n-k}\}\geq 2^{\sqrt{n}} \geq 2^{n/d}$ many such subsets, the theorem holds in this case (almost trivially).

Therefore, in the following, we assume $d < \sqrt{n} \leq n/(2pq)$ and consider an arbitrary $n$-input $\cdhcirc{d}{p}{q}$ circuit $\cC$ of size $s \leq  2^{n/(2pqd)}$. 
 By \cref{thm:mainPeriod}, the function computed by $\cC$  has period $p^{k_{p}}q^{k_q}$ where $k_{p}$ is the smallest integer with $p^{k_{p}} > d$ and $k_q$ is the smallest integer with $q^{k_q} > \log s +1 \geq n/(2pqd) + 1$. Notice that $p^{k_{p}} \leq d \cdot p$ and $q^{k_q} \leq (n /(2pqd) + 1) \cdot q$ and, hence, we have \[p^{k_{p}}\cdot q^{k_q}  \leq d \cdot p \cdot (n /(2pqd) + 1) \cdot q = n/2 + dpq < n.\]
 Thus, $\cC$ does not compute the $\cAND{n}$ function as $\cAND{n}$ does not have any non-trivial period.  
\end{proof}

\section{Further Perspectives}\label{further-remarks}
  Arguably the strongest application of the Degree Decreasing Lemma is Theorem 4 in \cite{Grolmusz01}. It implies that, if all the polynomials over $\GF{p}$ that compose the top levels of a $\cdhcirc{d}{p}{q}$-circuit can be written with a sublinear number of (binary) multiplications, then the circuit can be replaced with a $\modmod{p}{q}$ circuit with only a subexponential blow-up in size. We argue that this kind of theorem cannot be applied in the context of our proof.

Note that a large pseudo-clique in the symmetry-purified expressions are (arbitrary) symmetric polynomials. Most  of symmetric polynomials over $\GF{p}$ require at least a linear number of multiplications in any formula (circuit) defining them. To see it, consider the example $p(\ov x) = \sum_{i<j} x_i \cdot x_j$ as a polynomial over $\F_2$. One can easily check that it represents a function with smallest period $4$. But now, if it could be written with a sub-linear number of multiplications, by Theorem 4 in \cite{Grolmusz01}, it could be represented by a sub-exponential size $\modmod{2}{3}$ circuit. However, this contradicts \cite[Theorem 2.4]{GrolmuszT00} as subexponential size $\modmod{2}{3}$ circuits can only represent periodic functions with period of the form $2\cdot 3^{k}$. This shows that that the Degree Decreasing Lemma cannot be used in this context, as it puts the limitations on its own applicability, by providing arithmetic circuit lower bounds. Such lower bounds can be proved for all non-trivial symmetric polynomials over $\F_p$ with $d\geq p$ using a similar period analysis.
Thus, our symmetry purification technique as well as combinatorial analysis contained in the proof of \cref{period-sym-pure} constitute a substantial improvement over the Degree Decreasing Lemma and its accompanying techniques. 

The results of the present paper indicate what type of symmetric functions might be computable by small, but not necessarily symmetric, $\cdhcirc{d}{p}{q}$ circuits. It is natural to believe that the optimal (or nearly optimal) representation of the symmetric function should also be symmetric. Thus we state the following

\begin{conj}
For fixed $d,p,q$, the only symmetric functions that can be represented by $\cdhcirc{d}{p}{q}$ circuits of subexponential size have to be periodic with some period of the form $p^{k_p} \cdot q^{k_q}$, for $k_p$ being the smallest integer with $p^{k_p}\geq d$ and $k_q$ be such that $p^{k_p} \cdot q^{k_q} \leq n$.
\end{conj}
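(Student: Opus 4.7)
The plan is to lift \cref{thm:mainPeriod} from syntactically symmetric circuits to arbitrary $\cdhcirc{d}{p}{q}$ circuits that happen to compute a symmetric function. Let $C$ be such a circuit of subexponential size $s$ computing a symmetric $f$, and let $\po f_C$ denote its $\cdhexpr{d}{p}{q}$ expansion as in \cref{preparation}. Since $f(\ov x) = f(\pi \ov x)$ for every $\pi \in \sym{n}$, each permuted expression $\pi(\po f_C)$ also represents $f$. The natural candidate for a symmetric representative of $f$ is therefore the formal average
\[\widetilde{\po f} \;=\; \frac{1}{\abs{\sym{n}}} \sum_{\pi \in \sym{n}} \pi(\po f_C),\]
which, once rewritten in the $\{\po s(G;r)\}$-basis of \cref{sym-sum}, is by construction symmetric and still represents $f$. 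The key observation is that in this basis each elementary subexpression $\po b(G_i;r_i)$ of $\po f_C$ contributes only to the single basis element indexed by its $\sym{n}$-orbit, so the number of nonzero coefficients $\beta_{G,r}$ in $\widetilde{\po f}$ is bounded by the number of $\cMOD{p}$-gates in $C$, i.e., by $s$.

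Once $\widetilde{\po f}$ is realized as a symmetric expression with at most $s$ basis terms, one would apply \cref{sym-expr} and \cref{period-sym-pure} exactly as in the proof of \cref{thm:mainPeriod}: the subexponentiality of $s$ guarantees, via \cref{automo-lem}, that every remaining $\po s(G;r)$ has a fully symmetric subset of size $n - o(n)$, after which symmetry purification and the period analysis of \cref{period-sym-pure} give period $p^{k_p} q^{k_q}$ with $p^{k_p} \geq d$ and $q^{k_q} = o(n)$. In particular $p^{k_p} q^{k_q} \leq n$ for $n$ large, matching the conjectured bound.

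The genuine obstacle is algebraic: the averaging $\frac{1}{n!}\sum_\pi \pi(\po f_C)$ is not well defined in $\F_q$ whenever $q \mid n!$, which occurs for every $n \geq q$. One natural workaround is to symmetrize orbit-by-orbit, replacing each elementary subexpression by the formal sum over its $\sym{n}$-orbit (instead of the average) and absorbing the resulting orbit-size factor into a new $\F_q$-coefficient; this works as long as the orbit size is a unit in $\F_q$ and fails exactly when $q$ divides it. Handling the bad orbits seems to require a rigidity statement: if a short $\cdhexpr{d}{p}{q}$ expression computes a symmetric function, then its coefficients $\alpha_{\po p,r}$ must already be constant on each $\sym{n}$-orbit~-- in other words, syntactic symmetry can be extracted from functional symmetry. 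Proving such rigidity appears to be the technically hardest step and, in my view, is the real reason the conjecture resists a direct extension of the present techniques.

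A complementary angle worth pursuing in parallel is representation-theoretic: decompose the space of $\F_q$-valued functions on $\{0,1\}^n$ into $\sym{n}$-isotypic components and project $\po f_C$ onto the trivial component, which already contains $f$. If one can bound the $\cdhexpr{d}{p}{q}$-complexity of this projection by a polynomial in $s$, then \cref{thm:mainPeriod} applies directly. The same obstruction of $q$-torsion in $\sym{n}$-invariants reappears in this language, but at least it is now a clean algebraic problem rather than a combinatorial one about circuits.
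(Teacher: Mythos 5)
This statement is a \emph{conjecture} in the paper: the authors deliberately leave it open, and there is no proof for you to match. What you have written is an exploratory sketch, and you yourself conclude (correctly) that the plan does not close; so this is not a proof but a diagnosis of one obstruction. The obstruction you name, that the average $\frac{1}{n!}\sum_\pi \pi(\po f_C)$ is ill-defined in $\F_q$ once $q\mid n!$, is genuine. But there are two further gaps in the plan that remain even if you could somehow sidestep the torsion.

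First, the paper's proof of \cref{thm:mainPeriod} leans on \emph{syntactic} symmetry of the circuit, not merely functional symmetry, in an essential place. There one argues that if $\po s(G;r)$ appears in the decomposition and $\abs{\sym{V}/\aut{G}}$ is large, then each of the $2^{\floor{\eps n}}$ translates $\po b^{\pi_i}(G;r)$ must correspond to a distinct gate of the original circuit, which contradicts $s$ being small. In your symmetrized object $\widetilde{\po f}$, the translates $\po b^{\pi_i}(G;r)$ do \emph{not} come from gates of the original (asymmetric) circuit; they were introduced by the formal averaging. So from ``$\po s(G;r)$ occurs with nonzero coefficient in $\widetilde{\po f}$'' you cannot conclude $\abs{\sym{V}/\aut{G}}\leq s$, and therefore you lose exactly the step (\cref{automo-lem}) that hands you a large fully symmetric set $C$. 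Bounding the number of \emph{distinct orbits} by $s$, as you do, is not enough; one needs that each orbit that appears is itself small.

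Second, the inner $\F_q$-valued expression $\po f$ of the form \eqref{eqn:inexpr} need not compute a symmetric function of $\ov x$ even when the overall Boolean circuit does: functional symmetry of $\po b(\po f(\ov x);R)$ only forces $\po f(\pi\ov x)\in R \Leftrightarrow \po f(\ov x)\in R$, not $\po f(\pi\ov x)=\po f(\ov x)$. So the object you are averaging is not canonically attached to $f$, and different $\po f_C$ representing the same $f$ can average to different things. Any rigidity or projection argument would have to grapple with this indeterminacy before the torsion question even arises. Your representation-theoretic reformulation is a reasonable direction, but in its present form the proposal should not be read as evidence that \cref{thm:mainPeriod} ``almost'' extends to asymmetric circuits; the two obstacles above suggest the opposite.
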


\section{Proofs}\label{proofs-sec}
\subsection*{Symmetries of hypergraphs}

Recall that for an $\GF{p}$-labeled hypergraph $G = (V,\lambda)$ a subset $C \sse V$ is \emph{fully symmetric} if all $e_1, e_2\subseteq V$ with $|e_1| = |e_2|$  and $e_1 \cap \overline{C} = e_2 \cap \overline{C}$ satisfy $\lambda(e_1) = \lambda(e_2)$.

\begin{fact}\label{lem:fullysymauto}
Let $G$ be an $\GF{p}$-labeled hypergraph $G = (V,\lambda)$.
\begin{itemize}
    \item A subset $C \sse V$ is fully symmetric if and only if $\sym{C} \leq \aut{G}$.
    \item If $C, D\sse V$ are fully symmetric sets with $C\cap D\neq \emptyset$, then also $C \cup D$.
    \item If $C\sse V$ is a maximal fully symmetric set with $\abs{C} > \abs{V}/2$, then $\aut{G} = \sym{C} \times \Gamma$ for some $\Gamma \leq \sym{\ov C} $.
\end{itemize}
\end{fact}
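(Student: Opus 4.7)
The plan is to address the three items in sequence, with each building on the previous one.

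For the first item, the approach is purely definitional. For the forward direction, any $\pi \in \sym{C}$, extended by the identity on $\ov C$, preserves both $|e|$ and $e \cap \ov C$ for every $e \sse V$, hence by full symmetry preserves $\lambda$, so $\pi \in \aut{G}$. For the reverse direction, given $e_1, e_2 \sse V$ of equal cardinality with $e_1 \cap \ov C = e_2 \cap \ov C$, the intersections $e_1 \cap C$ and $e_2 \cap C$ are equicardinal subsets of $C$; picking $\pi \in \sym{C}$ that sends one to the other yields $\pi(e_1) = e_2$, and then $\lambda(e_1) = \lambda(e_2)$ follows from $\pi \in \sym{C} \leq \aut{G}$.

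For the second item, I plan to reduce to the first together with the basic fact that $\sym{C}$ and $\sym{D}$ generate $\sym{C \cup D}$ whenever $C \cap D \neq \emptyset$. Concretely, every transposition $(a,b)$ with $a \in C \sm D$ and $b \in D \sm C$ can be realized as $(a,c)(c,b)(a,c)$ for any $c \in C \cap D$, while transpositions internal to $C$ or $D$ sit directly inside $\sym{C}$ or $\sym{D}$. Thus $\sym{C \cup D} \leq \aut{G}$, and the first item then gives full symmetry of $C \cup D$.

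The main work will be in the third item. The crucial step is to show that every $\pi \in \aut{G}$ fixes $C$ setwise. For this, conjugation yields $\sym{\pi(C)} = \pi \sym{C} \pi^{-1} \leq \aut{G}$, so $\pi(C)$ is itself fully symmetric by the first item. Since $|\pi(C)| = |C| > |V|/2$, the sets $C$ and $\pi(C)$ must intersect, so by the second item $C \cup \pi(C)$ is fully symmetric, and maximality of $C$ forces $\pi(C) = C$. This places $\aut{G}$ inside the stabilizer $\sym{C} \times \sym{\ov C}$ of the partition $\{C, \ov C\}$. Finally, setting $\Gamma := \aut{G} \cap \sym{\ov C}$ and using $\sym{C} \leq \aut{G}$, any $\pi = (\sigma, \tau) \in \aut{G}$ satisfies $(\sigma^{-1}, 1) \cdot \pi = (1, \tau) \in \aut{G}$, hence $\tau \in \Gamma$ and $\pi = (\sigma, 1)(1, \tau) \in \sym{C} \cdot \Gamma$; as the two factors act on disjoint sets and commute, this gives $\aut{G} = \sym{C} \times \Gamma$. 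The step I expect to be the most delicate is pinning down where the size hypothesis $|C| > |V|/2$ is really needed: it enters only to force $C \cap \pi(C) \neq \emptyset$, but without it the whole maximality argument collapses, since $C$ and $\pi(C)$ could then sit as disjoint fully symmetric sets of maximal size and the two symmetric groups would no longer have to coincide.
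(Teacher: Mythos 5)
Your proof is correct and follows essentially the same route as the paper: characterize full symmetry via $\sym{C} \leq \aut{G}$, observe that $\sym{C}$ and $\sym{D}$ generate $\sym{C \cup D}$ when they overlap, and use the size bound to force every automorphism to stabilize $C$ setwise. Your write-up is somewhat more explicit than the paper's (you spell out the transposition identity where the paper gestures at "bubble sort", and you give the internal-direct-product verification where the paper simply says "the third point follows"), but the ideas are the same.
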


\begin{proof}
    Let $C \sse V$ be fully symmetric and let $\pi \in \sym{C}$ be any permutation on $C$. Then, clearly, for any $e \in \cP(V)$, we have $\abs{e} = \abs{\pi(e)}$ and $e \cap \ov C = \pi(e) \cap \ov C$, meaning that, by definition, $\lambda(e) = \lambda(\pi(e))$. Therefore, $\pi \in \aut{G}$ and, hence, $ \sym{C} \leq \aut{G}$.
    On the other hand, if $ \sym{C} \leq \aut{G}$ and $e_1, e_2\subseteq V$ with $|e_1| = |e_2|$  and $e_1 \cap \overline{C} = e_2 \cap \overline{C}$, then there is a permutation $\pi \in \sym{C}$ with $\pi(e_1\cap C) = e_2\cap C$. As $\pi$ does not move any elements outside $C$, we have $\pi(e_1) = e_2$ and, hence, $\lambda(e_1) = \lambda(e_2)$.

 To see the second point, observe that if $C, D\sse V$ are fully symmetric sets, then $ \sym{C} \leq \aut{G}$ and $ \sym{D} \leq \aut{G}$. As $C\cap D\neq \emptyset$, also $\sym{C \cup D} \leq \aut{G}$ (this can be seen \eg using "bubble sort").

For the third point, observe that for every $\pi \in \aut{G}$ we know that $\pi(C)$ is a fully symmetric set with $C \cap \pi(C) \neq \emptyset$ (because of the size bound $\abs{C} > \abs{V}/2$). Thus, $C \cup \pi(C)$ is also fully symmetric. By the maximality of $C$, it follows that $\pi$ must leave $C$ invariant. Thus, $C$ is closed under the action of $\aut{G}$ and, hence, $\aut{G} \leq \sym{C} \times \sym{\ov C}$. Since also $\sym{C} \leq \aut{G}$, the third point follows.
\end{proof}

Now, we are ready to prove the key lemma, \cref{automo-lem}, which allows us to restrict our attention only to very regular hypergraphs.

\begin{lemma}[\cref{automo-lem}, restated]
Let $0<\eps < 1/8$. Every $\GF{p}$-labeled hypergraph $G = (V,\lambda)$ with $n =\abs{V} \geq 13$ has either
\begin{itemize}
    \item a fully symmetric subset on at least $n - \floor{\eps n}$ vertices, or
    \item its automorphism group satisfies $|\sym{V}/\aut{G})| > 2^{\floor{\eps n}}$. 
\end{itemize}
\end{lemma}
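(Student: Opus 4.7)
My plan is to prove the contrapositive. Set $m=\lfloor\eps n\rfloor$ and $\Gamma=\aut{G}$, and assume every fully symmetric subset of $V$ has size at most $n-m-1$; I will show $\abs{\sym{V}/\Gamma}>2^m$. The backbone is a small \emph{Lifting Lemma} combined with an orbit-level dichotomy. The Lifting Lemma says: if $A_U\leq \Gamma$ as a subgroup of $\sym{V}$ acting on some $U\sse V$ with $|U|\geq 3$ and trivially on $V\setminus U$, then $U$ is fully symmetric. The proof is immediate from $(|U|-2)$-transitivity: $\lambda(e)$ depends only on $|e\cap U|$ and $e\setminus U$ for every $e\sse V$, so every $\pi\in \sym{U}$ preserves $\lambda$.

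Next I would look at the $\Gamma$-orbits $O_1,\ldots,O_s$ with $|O_1|\geq \cdots\geq |O_s|$ and split on the size of the largest orbit. If $|O_1|\leq n-m-1$, then $\Gamma\leq \prod_j\sym{O_j}$, so $\abs{\sym{V}/\Gamma}\geq \binom{n}{|O_1|,\ldots,|O_s|}$; the bound $\eps<1/8$ easily makes this multinomial exceed $2^m$ (either $|O_1|\geq m+1$, whence $\min(|O_1|,n-|O_1|)\geq m+1$ gives $\binom{n}{|O_1|}\geq \binom{n}{m+1}\geq 8^{m+1}$, or every orbit has size at most $m$, forcing at least $n/m>8$ orbits and an exponentially large multinomial).

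The main case is when some orbit $O_1$ satisfies $|O_1|\geq n-m$; set $r=n-|O_1|\leq m$ and let $M=\Gamma\cap \sym{O_1}$ be the subgroup of $\Gamma$ fixing $V\setminus O_1$ pointwise. Since $V\setminus O_1$ is $\Gamma$-invariant, $M$ is normal in $\Gamma$, and hence its image in $\Gamma|_{O_1}$ is normal. As $n\geq 13$ and $\eps<1/8$ force $|O_1|\geq 12$, the group $A_{O_1}$ is simple. I would first rule out $A_{O_1}\leq \Gamma|_{O_1}$: simplicity gives $A_{O_1}\cap M\in\{\{e\},A_{O_1}\}$; the second option places $A_{O_1}$ inside $M\leq \Gamma$ and, via the Lifting Lemma, makes $O_1$ fully symmetric (a contradiction), while the first embeds $A_{O_1}$ into $\Gamma|_{O_1}/M\hookrightarrow \sym{V\setminus O_1}$, forcing $|O_1|!/2\leq r!$, impossible under $|O_1|\geq 7n/8$ and $r\leq n/8$. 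Hence $\Gamma|_{O_1}$ is a transitive proper subgroup of $\sym{O_1}$ not containing $A_{O_1}$; Bochert's theorem covers the primitive case (index $\geq \lfloor(|O_1|+1)/2\rfloor!$) and the wreath-product bound covers the imprimitive case (index $\geq \binom{|O_1|}{|O_1|/2}/2\geq 2^{|O_1|-2}/\sqrt{|O_1|}$). Both dwarf $2^m$ when $|O_1|\geq 7n/8$ and $m\leq n/8$; combined with $|M|\leq r!$ this yields $\abs{\sym{V}/\Gamma}\geq \binom{n}{r}\cdot [\sym{O_1}:\Gamma|_{O_1}]>2^m$. The main obstacle is precisely this large-orbit case: executing the normal-subgroup plus $A_{O_1}$-simplicity dichotomy to promote $A_{O_1}\leq \Gamma|_{O_1}$ to $A_{O_1}\leq \Gamma$ (so the Lifting Lemma applies), and then carefully verifying that the classical permutation-group bounds strictly clear $2^m$ throughout the regime $n\geq 13$, $\eps<1/8$.
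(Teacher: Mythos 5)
Your proposal is correct and follows essentially the same route as the paper: split on the orbit structure of $\aut{G}$, handle the "no large orbit" case by a multinomial bound, and in the "large orbit $O_1$" case use the primitive/imprimitive dichotomy (Bochert and wreath-product bounds) together with the simplicity of $A_{O_1}$ to show that $A_{O_1}\leq\Gamma|_{O_1}$ forces $A_{O_1}\leq\Gamma$ and hence $O_1$ fully symmetric — exactly the paper's use of the kernel of $\Gamma\to\sym{\ov C}$. One small slip: the subgroup whose order you need to bound by $r!$ in the final index estimate is the kernel of the restriction $\Gamma\to\Gamma|_{O_1}$ (the pointwise stabilizer of $O_1$), not $M=\Gamma\cap\sym{O_1}$, which could be as large as $|O_1|!$; with that correction the computation $\abs{\sym{V}/\Gamma}\geq\binom{n}{r}\cdot[\sym{O_1}:\Gamma|_{O_1}]$ goes through as you intend.
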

\begin{proof}
    Let us write $\Gamma = \aut{G}$. We say that $\Gamma$ is \emph{small} if $\abs{\Gamma} \leq n!/2^{\floor{\eps n}}$. 
    Let us first show that either $\Gamma$ is small or $G$ contains a pseudo-clique on at least $n - \floor{\eps n}$ vertices (in a second step, we show that this pseudo-clique, indeed, is fully symmetric). 
    
	We start by observing that $\Gamma$ is a subgroup of $\sym{k_1} \times\cdots \times \sym{k_m}$, where $k_i$ are the sizes of the orbits of the action on $G$. If $k_i < n - \floor{\eps n}$ for all $i$, then $\abs{\Gamma} < (n - \floor{\eps n})! \cdot \floor{\eps n}!$ and
  \begin{align*}
     \frac{n!}{\abs{\Gamma}} &> \frac{n!}{(n - \floor{\eps n})! \cdot \floor{\eps n}!} = \binom{n}{\floor{\eps n}} 
           \geq \left(\frac{n}{\floor{\eps n}} \right)^{\floor{\eps n}} > 2^{\floor{\eps n}};
 \end{align*} 
 hence, $\Gamma$ is small.
	
	So from now on there is an orbit $C \sse V$ consisting of at least $n - \floor{\eps n}$ vertices. Recall that $\Gamma \leq \sym{C} \times \sym{\ov C}$ and denote by $\phi : \Gamma\to \sym{C}$ the projection to the first coordinate.

 Suppose $\tilde \Gamma = \phi(\Gamma)$ does not act primitively on $C$ meaning that there is an $\tilde \Gamma$-invariant partition of $C$ with $r$ classes each of which consists of $1  < m < \abs{C}$ vertices (as $\tilde \Gamma$ acts transitively on $C$, it must acts transitively on the classes; hence, they all have the same size). Thus, $\tilde \Gamma$ is isomorphic to a subgroup of the wreath product $\sym{m} \wr \sym{r}$ with $rm = \abs{C}$ (see \cite[Theorem 1.8] {cameron99}). So 
   \begin{align*}
     \frac{\abs{C}!}{|\tilde\Gamma| }&\geq \frac{\abs{C}!}{(m!)^r \cdot r!} = \frac{1\cdots m \cdots \qquad \cdots \abs{C}}{(1\cdots m )\cdot 1 \cdot (1 \cdots m )\cdot 2 \cdots ( 1 \cdots m) \cdot r} = \frac{\prod_{i=1}^{r-1}\prod_{j=1}^{m-1} (im+j)}{((m-1)!)^{r-1}}\\[2mm]
            &\geq 2^{(m-1)(r-1)}\geq 2^{\abs{C}/4 }\geq 2^{\floor{\eps n}}.
            \end{align*}
            Here the last inequality is because $\eps < 1/8$ (in particular $1-\eps \geq 1/2)$, the second last inequality is due to the assumption $\eps \leq 1/4$ and the fact that $m-1\geq m/2$ and $r-1\geq r/2$. The third last inequality is because $\bigl(\prod_{j=1}^{m-1}(im+j)\bigr) /(m-1)! =  \prod_{j=1}^{m-1}(im+j)/j \geq 2^{m-1}$ as $i\geq 1$.
Since the index of $\Gamma$ in $\sym{V}$ is at least the index of $\tilde \Gamma = \phi(\Gamma)$ in $\sym{C}$, again $\Gamma$ is small.
	
	Hence, it remains to consider the case that $\phi(\Gamma) \leq \sym{C}$ acts primitively on $C$. 
 Thus, writing $k = \abs{C},$ according to \cite{Bochert89,PraegerS80} (see also \cite{Babai18}), there are three possibilities: $\phi(\Gamma)$ is either $A_k$ (the alternating group on $k$ elements) or $S_k \cong\sym{C}$ or $\abs{\phi(\Gamma)} \leq 4^k$. First, let us consider the last case. As $n \geq 13$, we have $k\geq n - \floor{\eps n} \geq 11$. Therefore, we conclude that we have $\abs{\phi(\Gamma)} \leq 4^k \leq k! / 2^{k/4}$  (which holds for all $k \geq 11$) and, as above, the index of $\Gamma$ in $\sym{V}$ is at least $2^{k/4} \geq 2^{(n - \floor{\eps n})/4} \geq 2^{\floor{\eps n}}$,  meaning that $\Gamma$ is small. 
 
 In the former two cases (\ie that $\phi(\Gamma)$ is $A_k$ or $S_k$), $\phi(\Gamma)$ acts set-transitively on $C$ (meaning that for each pair of subsets $A,B \sse C$ with $\abs{A} = \abs{B}$ there is a permutation $\pi$ mapping $A$ to $B$); hence, $C$ is a pseudo-clique. 

To see that $C$ is, indeed, fully symmetric if $\Gamma$ is not small, we proceed as follows: 
Now, let $N\leq \Gamma$ denote the kernel of the projection $\Gamma\to  \sym{\ov C}$  to the second component (i.e.\ the pointwise stabilizer of $\overline C$). Then we have $\phi(N) = N$ (when identifying $\sym{ C}$ with the corresponding subgroup of $ \sym{C} \times \sym{\ov C} $). As $A_k$ is simple and the index of $N$ is at most $(n-k)!$ in $\phi(\Gamma)$ (which is either $A_k$ or $S_k$), we have $N = A_k$ or $N = S_k$ (as $N$ is also normal in $\phi(\Gamma)$). In both cases $N$ acts set-transitively on $C$ and; hence, $C$ is fully symmetric (which, by \cref{lem:fullysymauto}, also excludes the case $N = A_k$).
\end{proof}

\subsection*{Reduction based on the Degree Decreasing Lemma}


\begin{lemma}[Symmetric Degree Decreasing Lemma~-- \cref{sddl} restated]
Let $p\neq q$ be prime. Let $\gamma \in \GF{p}\setminus \{0\}$. Then every function $f:\{0,1\}^d \times \GF{p} \mapsto \GF{q}$ represented by a $d+1$-ary $\cdhexpr{d}{p}{q}$ expression
\[\po b(\gamma \cdot x_1 \cdot \ldots \cdot x_d + y; t)\]
can be also represented by expression
 \[\po h(\ov x, y;t) = \po b(y; t) + \sum_{r \in \GF{p}} \beta_{t,r} \sum_{S \subseteq [d]} \alpha_{|S|} \cdot \po  b(\gamma\cdot \Sigma(S)  + y; r)\]
for $\alpha_{|S|} = (-1)^{|S|}$ and some coefficients $\beta_{t,r} \in \F_q$.
\end{lemma}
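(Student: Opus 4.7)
The plan is to verify the equivalence by direct case analysis on the Boolean input $\ov x$ and then to produce the coefficients $\beta_{t,r}$ via a short operator-theoretic argument on functions $\GF{p}\to \F_q$. For fixed $\ov x \in \{0,1\}^d$, set $T = \{i : x_i = 1\}$ and decompose each $S \subseteq [d]$ uniquely as $A\cup B$ with $A = S\cap T$ and $B = S\setminus T$. Since $\Sigma(S) = |A|$ and $|S| = |A|+|B|$, the sum $\sum_S (-1)^{|S|}\po b(\gamma\Sigma(S)+y;r)$ factors as
\[\Bigl(\sum_{j=0}^{|T|}(-1)^j\binom{|T|}{j}\po b(\gamma j + y; r)\Bigr)\cdot (1-1)^{d-|T|},\]
whose right factor vanishes whenever $|T| < d$. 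Hence $\po h(\ov x, y; t) = \po b(y;t) = \po b(\gamma\cdot x_1\cdots x_d + y;t)$ in that case, irrespective of the choice of $\beta_{t,r}$.

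For the remaining case $x_1 = \cdots = x_d = 1$, the right factor equals $1$; after swapping the sums over $r$ and $j$ and using $\sum_{r\in\GF{p}}\beta_{t,r}\,\po b(\gamma j + y;r) = \beta_{t,\gamma j + y}$ (viewing $\beta_{t,\cdot}$ as a function $\GF{p}\to\F_q$), the required identity becomes
\[\po b(\gamma + y; t) - \po b(y; t) = \sum_{j=0}^d (-1)^j\binom{d}{j}\beta_{t,\,\gamma j + y}\qquad\text{for all } y \in \GF{p}.\]
Letting $S_\gamma$ denote the shift $(S_\gamma f)(y) = f(y+\gamma)$ on $V = \F_q^{\GF{p}}$, the right-hand side is $((I - S_\gamma)^d \beta_{t,\cdot})(y)$ by the binomial expansion of $(I - S_\gamma)^d$, and the left-hand side is $-((I - S_\gamma)\po b(\cdot;t))(y)$, so the identity reduces to the single operator equation $-(I - S_\gamma)\po b(\cdot;t) = (I - S_\gamma)^d\beta_{t,\cdot}$.

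To finish, I would show that this equation is solvable for every $t$. The kernel of $I - S_\gamma$ on $V$ consists of the constant functions, and its image is the hyperplane $H = \{f : \sum_{y\in\GF{p}} f(y) = 0\}$. Because $p \neq q$, a nonzero constant has nonzero sum $pc$ in $\F_q$, so the kernel meets $H$ only in $0$; hence $I - S_\gamma$ restricts to a bijection of $H$, and iterating yields $\operatorname{image}((I - S_\gamma)^d) = H$. The right-hand side $-(I - S_\gamma)\po b(\cdot;t)$ manifestly lies in $H$, so a preimage $\beta_{t,\cdot}$ under $(I - S_\gamma)^d$ exists, which produces the desired coefficients.

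The main obstacle is this last step: the argument crucially uses $p \neq q$ to keep $I - S_\gamma$ from losing rank when iterated, and it is the only place where the particular structure of $\F_q^{\GF p}$ is invoked. Everything else is routine, with the combinatorial cancellation in the first step being a direct consequence of the binomial identity $\sum_s (-1)^s \binom{n}{s} = 0$ for $n \geq 1$; the symmetry of $\po h$ in $x_1,\dots,x_d$ is manifest from the $S$-sum and plays no role in the verification.
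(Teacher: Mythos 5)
Your proposal is correct, and for the essential all-ones case it takes a genuinely different route from the paper. The cancellation argument for inputs with some $x_i = 0$ is the same as the paper's (pair $S$ with $S \cup \{i\}$, or equivalently factor out $(1-1)^{d - |T|}$), but from there the strategies diverge. The paper first proves the statement for $d = q^k$ (chosen large enough that also $d \equiv 1 \bmod p$), where all the middle binomial coefficients $\binom{d}{j}$ vanish modulo $q$, so the choice $\beta_{t,t} = 0$, $\beta_{t,r} = 1$ for $r \neq t$ works by an explicit telescoping; it then descends to general $d$ by the substitution $x_d := 1$, deriving a recursion $\beta_{t,r}^{(d-1)} = \beta_{t,r}^{(d)} - \beta_{t,r+1}^{(d)}$. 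You instead solve the all-ones constraint directly as a linear system over $\F_q$: viewing $\beta_{t,\cdot}$ as a vector in $\F_q^{\GF{p}}$, the constraint is $-(I - S_\gamma)\po b(\cdot;t) = (I - S_\gamma)^d \beta_{t,\cdot}$, and solvability follows because $I - S_\gamma$ has image exactly the mean-zero hyperplane $H$ and (since $p$ is invertible in $\F_q$, i.e.\ $p \neq q$) restricts to an automorphism of $H$, so that $\operatorname{im}(I - S_\gamma)^d = H \ni -(I - S_\gamma)\po b(\cdot;t)$. Your argument is somewhat cleaner and handles all $d$ at once, while the paper's is more constructive, yielding concrete coefficients (explicit at $d = q^k$, recursive below). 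Both correctly isolate $p \neq q$ as the essential hypothesis; in your version it keeps $I - S_\gamma$ from dropping rank on $H$, in the paper's it makes $(-1)^d \equiv 1$ and $d \equiv 1 \pmod p$ simultaneously arrangeable.
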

\begin{proof}
    First, let us prove the lemma for the case when $d=q^k$ for some $k$, with additional assumption that $d \equiv 1 \mod p$ (we can always make $k$ larger to satisfy also the second condition).
     We claim that, in this case, the following coefficients do the job: put $\beta_{t,r} := 0$ for $r = t$ and $\beta_{t,r} := 1$ for $r \neq t$. Then, the formula for $\po h$ takes the form
$$\po b(y; t) + \sum_{r \in \GF{p} \setminus \{t\}}\sum_{S \subseteq [d]} (-1)^{|S|} \cdot \po  b(\gamma\cdot \Sigma(S)  + y; r)$$
Now, if any variable $x_i$ is set to $0$ inside $\po h$, then the expressions   $\gamma\cdot \setsum{S\cup\{i\}} + y$ and $\gamma\cdot\setsum{S} + y$ always evaluate to the same value (no matter what the other variables are); hence, each $\po b(\gamma\cdot \setsum{S\cup\{i\}} + y;r)$ cancels out with $\po b(\gamma\cdot\setsum{S} + y;r))$ as $\alpha_{|S|} = - \alpha_{|S|+1}$. 
 After all the cancellations, we are left with $\po b(y; t)$. On the other hand,  $\po b(\gamma \cdot x_1 \cdot \ldots \cdot x_d + y; t)$ also turn into $\po b(y; r)$ after substituting zero for any $x_i$; hence the corresponding functions return the same value in this case. 
 
 Now assume that $x_1 = x_2 = \ldots = x_d = 1$. Then the expression $\po h$ can be rewritten as
\[\po b(y; t) + \sum_{r \in \GF{p}\setminus \{t\}} \sum_{i=0}^d  {d \choose i} \cdot (-1)^{i} \cdot \po b(\gamma \cdot i + y;r).\]
Recall that we assumed $d= q^{k}$; so for $0<i<d$, we have ${d \choose i} = 0 \parmod{q}$. Moreover, if $q$ is odd, then $(-1)^d = -1$ and, if $q = 2$, then $1 = -1 \parmod{q}$. Hence, in any case we have $(-1)^d = 1 \parmod{q}$.
Furthermore, as $d = 1 \parmod{p}$, we can simplify the expression to get
\begin{align*}
   \po b(y; t) + \sum_{r \in \GF{p} \setminus \{t\}}\po b(y; r) - \po b(\gamma+y;r). 
\end{align*} 

Next, observe that  $\sum_{r \in \GF{p}} \po b(y;r)$ always evaluates to $1$ and also $\sum_{r \in \GF{p}} \po b(\gamma +y;r)$ always evaluates to $1$; hence the formula simplifies to 
$$\po b(\gamma +y;r).$$
This is exactly what we wanted, as $\po b(\gamma \cdot x_1 \cdot \ldots \cdot x_d + y; t)$ evaluates to the same value after substituting 1 for $x_1, \dots, x_d$. This concludes the proof of the lemma for our special choice for $d$.

 To finish the proof for arbitrary $d$, we argue that, if the statement of the lemma holds for some $d>1$, it must also hold for $d-1$. To avoid confusion, let $\beta_{t,r}^{(d)}$ correspond to the constants that we already computed, which build the expression $\po h^{(d)}$. We want to find $\beta_{t,r}^{(d-1)}$ which correspond to $\po h^{(d-1)}$. We claim that following recursive formula works:
 \begin{equation*}
    \beta_{t,r}^{(d-1)}:= \beta_{t,r}^{(d)} - \beta_{t,r+1}^{(d)}\\ 
 \end{equation*}
(recall that $r \in \F_p$, so $p$ is interpreted as $0$). Clearly, if we make a substitution $x_d := 1$ in $\po h^{(d)}$ the newly created formula computes a function of the form $\po b(\gamma \cdot x_1 \cdot \ldots \cdot x_{d-1} + y; t)$.
In the expression $\po h^{(d)}[x_d := 1]$ every summand $\po b(\setsum{S}+y;r)$ of $\po h^{(d)}$ with $d \in S$ turns into $\po b(\setsum{S \setminus \{d\}} + y +1 ;r)$, which is equivalent to $\po b(\setsum{S\setminus \{d\}} + y  ;r-1)$. Hence, $\po h^{(d)}[x_d := 1]$ computes the same function as 
 \[\po b(y; t) + \sum_{r \in \GF{p}} \sum_{S \subseteq [d-1]} (\beta_{t,r}^{(d)} \alpha_{|S|} + \beta_{t,r+1}^{(d)} \alpha_{|S|+1})\cdot \po  b(\gamma\cdot \Sigma(S)  + y; r).\]
Note that, as $\alpha_{|S|} = -\alpha_{|S|+1}$, we have $\beta_{t,r}^{(d)} \alpha_{|S|} + \beta_{t,r+1}^{(d)} \alpha_{|S|+1} = \beta_{t,r}^{(d)} \alpha_{|S|} - \beta_{t,r+1}^{(d)} \alpha_{S} = (\beta_{t,r}^{(d)} - \beta_{t,r+1}^{(d)}) \cdot \alpha_{S}$. Hence, we end up with the formula 
 \[\po b(y; t) + \sum_{r \in \GF{p}} (\beta_{t,r} - \beta_{t,r+1}) \sum_{S \subseteq [d-1]}  \alpha_{|S|}\cdot \po  b(\gamma\cdot \Sigma(S)  + y; r)\]
 which shows that our recursive formula for $\beta_{t,r}^{(d-1)}$ is correct.
\end{proof}

The key property of the formula $\po h$ is that it is invariant to the permutations of variables $x_1, \ldots, x_d$, which was not the case for original Degree Decreasing Lemma of \cite{Grolmusz01}. In this paper we will only need the following application of \cref{sddl}.
\begin{corollary}\label{sddliteration}
Let $p \neq q$ be prime numbers. Let $H$ be $\GF{p}$-labeled $d$-hypergraph, let $e_1 \ldots, e_l\subseteq V$ be some subsets of $V$ of the same size $m$, let $\gamma , t \in \GF{p}$. Then function computed by the expression
$$\po b(\gamma(e_1 + \ldots + e_l) + H; t)$$
can be also computed by
\[
\po s(H;t) \equiv  
 \sum_{\cS : S_i \subseteq e_i} \sum_{r \in \GF{p}} \beta_{\cS}^{(t,r)} \cdot \po  b(\gamma \cdot (\setsum{S_1} + \setsum{S_2} + \ldots + \setsum{S_l}) + H'; r) 
\]
for suitable constants $\beta_{\cS}^{(t,r)} \in \F_q$ where $\cS = (S_1, \ldots, S_l)$ and $\beta_{\cS}^{(t,r)} = \beta_{\sigma(\cS)}^{(t,r)}$ for any permutation $\sigma \in \sym{l}$.
\end{corollary}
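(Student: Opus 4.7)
The plan is to apply the Symmetric Degree Decreasing Lemma (\cref{sddl}) iteratively, eliminating the edges $e_1, e_2, \ldots, e_l$ one at a time. First, I would repackage the conclusion of \cref{sddl} in a uniform form: any expression $\po b(\gamma e + y; t)$ with $e$ a monomial of size $m$ can be rewritten as $\sum_{S \subseteq e} \sum_{r \in \GF{p}} c_{|S|, r, t} \cdot \po b(\gamma \cdot \setsum{S} + y; r)$, where the coefficients $c_{k, r, t} \in \F_q$ depend only on $k = |S|$, $r$, and $t$. Combining the standalone $\po b(y; t)$ summand with the $S = \emptyset$ contribution from the sum gives $c_{0, r, t} = \delta_{r, t} + \beta_{t, r}$, while $c_{k, r, t} = \alpha_k \, \beta_{t, r}$ for $k \geq 1$.

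Next, I would iterate: in step $i$, take any term already of the form $\po b(\gamma \setsum{S_1} + \cdots + \gamma \setsum{S_{i-1}} + \gamma e_i + \cdots + \gamma e_l + H; s)$ and apply the rewriting above with $e = e_i$ and with $y$ being everything else, including the running label $s$ inherited from the previous step. After $l$ iterations we arrive at the target shape
\[\sum_{\cS : S_i \subseteq e_i} \sum_{r \in \GF{p}} \beta_{\cS}^{(t, r)} \cdot \po b\bigl(\gamma \cdot (\setsum{S_1} + \cdots + \setsum{S_l}) + H; r\bigr),\]
with
\[\beta_{\cS}^{(t, r)} = \sum_{r_1, \ldots, r_{l-1} \in \GF{p}} c_{|S_1|, r_1, t} \cdot c_{|S_2|, r_2, r_1} \cdots c_{|S_l|, r, r_{l-1}}.\]
Up to this point everything is essentially mechanical, and the existence of \emph{some} such coefficients is automatic.

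The only genuine obstacle is verifying the symmetry $\beta_{\cS}^{(t, r)} = \beta_{\sigma(\cS)}^{(t, r)}$ for every $\sigma \in \sym{l}$. I would encode $c_{k, r, t}$ as the $(r, t)$-entry of a matrix $M_k \in \F_q^{p \times p}$, so that $\beta_{\cS}^{(t, r)}$ is exactly the $(r, t)$-entry of the ordered product $M_{|S_l|} M_{|S_{l-1}|} \cdots M_{|S_1|}$. Let $B$ denote the matrix with entries $B[r, t] = \beta_{t, r}$. The explicit formulas from the first step give $M_0 = I + B$ and $M_k = \alpha_k B$ for $k \geq 1$, so every $M_k$ lies in the commutative subalgebra $\F_q[B]$. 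Hence the $M_k$ commute pairwise, the product is invariant under reorderings of the $S_i$, and the required symmetry of $\beta_{\cS}^{(t, r)}$ follows immediately.
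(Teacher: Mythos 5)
Your proposal is correct and actually cleaner and more complete than the paper's argument, though it reaches the same conclusion by a genuinely different route. The paper verifies permutation-invariance for $l=2$ by brute-force expansion: it applies \cref{sddl} first to $e_1$ then to $e_2$, writes out all four resulting sums, and observes they are unchanged when $e_1,e_2$ are swapped; it then appeals to an "induction" for general $\sigma\in\sym{l}$ that is asserted but not spelled out. Your argument instead packages each application of \cref{sddl} as left-multiplication by a $p\times p$ transfer matrix $M_k$, so that $\beta_{\cS}^{(t,r)}$ is an entry of $M_{|S_l|}\cdots M_{|S_1|}$, and notices that every $M_k$ lies in the commutative algebra $\F_q[B]$ (where $B$ is the matrix of the coefficients $\beta_{t,r}$ from \cref{sddl}). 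Commutativity is then structural, so the product depends only on the multiset $\{|S_1|,\ldots,|S_l|\}$ and the full $\sym{l}$-invariance falls out at once, with no need for a case analysis or an auxiliary induction on permutations. Your argument also makes explicit why the hypothesis that all $e_i$ have the same size $m$ matters: it ensures the matrix $B$ (which depends on the arity $d$ in \cref{sddl}) is the same at every step; it would be worth stating this one sentence explicitly, but it is used correctly. Overall your proof buys a shorter, more conceptual verification and implicitly also delivers the slightly stronger fact (used in the paper's proof of \cref{sym-expr}) that $\beta_{\cS}^{(t,r)}$ depends only on $(t,r)$ and the multiset of sizes, not merely that it is $\sym{l}$-invariant.
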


\begin{proof}
It is clear that by a repeated application of \cref{sddl}, we will end up with an expression as claimed in the corollary~-- without knowing the symmetry condition. In order to prove that $\beta_{\cS}^{(t,r)} = \beta_{\sigma(\cS)}^{(t,r)}$ for any permutation $\sigma \in \sym{l}$, we proceed as follows:
Let $e_1 = x_1\cdots x_m$ and $e_2 = y_1\cdots y_m$ and consider the expression $\po b(\gamma(e_1 + e_2) + H; t) $. We are going to show that, if we are applying \cref{sddl} first to $e_1$ and then to $e_2$, we get the same result as when we apply \cref{sddl} first to $e_2$ and then to $e_1$. By induction then we can apply any permutation to $\cS$.
Thus, let us apply \cref{sddl} first to $e_1$ and then to all appearances of $e_2$:
\begin{align*}
\po b(\gamma(e_1 + e_2) + H; t)
&= \po b(\gamma(e_2) + H; t) + \sum_{r\in \F_p} \beta_{t,r} \sum_{S \subseteq e_1} \alpha_{|S|} \cdot \po  b(\gamma\cdot (\setsum{S} + e_2) + H; r)\\
& = \po b(H; t)\\
&\qquad + \sum_{r\in \F_p} \beta_{t,r} \sum_{S \subseteq e_2} \alpha_{|S|} \cdot \po  b(\gamma\cdot \setsum{S} + H; r)\\
&\qquad + \sum_{r\in \F_p} \beta_{t,r} \sum_{S \subseteq e_1} \alpha_{|S|} \cdot \po  b(\gamma\cdot \setsum{S} + H; r)\\
&\qquad + \sum_{r,r'\in \F_p} \beta_{t,r}\beta_{r,r'} \sum_{S \subseteq e_1}\sum_{S' \subseteq e_2} \alpha_{|S|}\alpha_{|S'|} \cdot \po  b(\gamma\cdot \setsum{S} + \gamma\cdot \setsum{S'} +  H; r)
\end{align*}
We see that this is exactly the same as if we would \cref{sddl} first to $e_2$ and then to all appearances of $e_1$ (just the second and third line would be exchanged and the sum $\sum_{S \subseteq e_1}\sum_{S' \subseteq e_2}$ replaced by $\sum_{S \subseteq e_2}\sum_{S' \subseteq e_1}$, which, of course, is the same).    
\end{proof}

The next lemma is an application of Symmetric Degree Decreasing Lemma to expressions generated by hypergraphs.
Let us recall that $\symp{V}{C}$ denotes the set of all symmetry-purified $d$-hypergraphs and $\psymp{V}{C}$ the set of partially symmetry-purified $d$-hypergraphs with respect to $C \sse V$.
A $\GF{p}$-labeled hypergraph $G = (V,\lambda)$ is symmetry-purified with respect to $C \sse V$ if $C$ is fully symmetric, every $e\sse V $ with $\lambda(e) \neq 0$ is fully contained either in $C$ or in $\overline{C}$, and every $e \sse \ov C$ with $\lambda(e) \neq 0$  satisfies $|e| = 1$.

\begin{lemma}\label{part-sym}
Let $p\neq q$ be prime numbers, let $u \in \GF{p}$, and let $G = (V, \lambda)$ be a $\GF{p}$-labeled $d$-hypergraph that contains a fully symmetric subset of vertices $C$ with $|C| > \frac{\abs{V}}{2}$. Moreover, let $C$ be a maximal such set. Then the function represented by $\po s(G;u)$ can be also represented by a sum of the form
$$\sum_{H \in \psymp{V}{C}} \sum_{r\in \GF{p}} \beta_{H, r} \, \po s(H; r)$$
\end{lemma}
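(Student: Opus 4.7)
The plan is to prove the lemma by induction on the \emph{crossing complexity} of $G$, which I take to be the lexicographically-ordered pair $(d',N)$, where $d'$ is the maximum size of a crossing edge of $G$ (i.e., an edge $e$ with $\lambda(e)\neq 0$ meeting both $C$ and $\ov C$) and $N$ is the number of crossing edges of size $d'$. In the base case $(d',N)=(0,0)$ every edge is contained in $C$ or in $\ov C$, so $G\in\psymp{V}{C}$ already and the identity $\po s(G;u)=1\cdot\po s(G;u)$ is the required decomposition.

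For the inductive step, pick a crossing edge of maximum size and let $(T,k)$ be its \emph{type}, so $\emptyset\neq T\sse\ov C$, $k\geq 1$, and $|T|+k=d'$. Since $C$ is fully symmetric, every edge of the form $T\cup S'$ with $S'\sse C$, $|S'|=k$ is automatically present in $G$ with a single common label $\gamma\in\F_p\setminus\{0\}$; call this family of $\binom{|C|}{k}$ edges $\mathcal{E}_{T,k}$. Splitting off these edges from the polynomial of $G$ as $\po p_G=\gamma\sum_{e\in\mathcal{E}_{T,k}}e+\po q$ and applying Corollary~\ref{sddliteration} yields
\[\po b(\po p_G;u)\equiv\sum_{\mathcal S}\sum_{r\in\F_p}\beta^{(u,r)}_{\mathcal S}\cdot\po b\bigl(\gamma\sum_{e\in\mathcal{E}_{T,k}}\setsum{S_e}+\po q;\,r\bigr),\]
where $\mathcal S=(S_e)_{e\in\mathcal{E}_{T,k}}$ with $S_e\sse e$, and the coefficients $\beta^{(u,r)}_{\mathcal S}$ are invariant under $\sym{\mathcal{E}_{T,k}}$ (and hence in particular under the $\sym{C}$-action on $\mathcal{E}_{T,k}$, which permutes the $k$-subsets of $C$). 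Since each $\setsum{S_e}$ is a sum of singletons (which are never crossing), the hypergraph $G_{\mathcal S}$ associated to each transformed polynomial has strictly smaller crossing complexity than $G$.

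To lift this from $\po b(G;u)$ to $\po s(G;u)=\sum_{\pi\in\sym{V}/\aut{G}}\po b(\pi G;u)$, apply the same reduction to every orbit term $\pi G$: the type $(T,k)$ with respect to $C$ becomes the type $(\pi T,k)$ with respect to $\pi C$, and by $\sym{V}$-equivariance of the construction the coefficients $\beta^{(u,r)}_{\mathcal S}$ are unchanged. Interchanging the order of summation gives
\[\po s(G;u)\equiv\sum_{\mathcal S}\sum_{r\in\F_p}\beta^{(u,r)}_{\mathcal S}\sum_{\pi}\po b(\pi G_{\mathcal S};r),\]
which is manifestly $\sym{V}$-invariant. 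Any $\sym{V}$-invariant $\F_q$-linear combination of $\po b(\cdot\,;r)$'s can be repackaged by grouping $\sym{V}$-orbits of hypergraphs into a sum $\sum_H c_{H,r}\po s(H;r)$ over orbit representatives, each of strictly smaller crossing complexity than $G$. Applying the induction hypothesis to each $\po s(H;r)$ then produces the desired representation in $\psymp{V}{C}$.

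The hard part will be making the repackaging step rigorous: the individual $G_{\mathcal S}$'s are generally \emph{not} themselves partially symmetry-purified, because the singleton coefficients $\gamma\cdot|\{e:v\in S_e\}|$ produced by \cref{sddliteration} depend on $\mathcal S$ and may distribute non-uniformly over $C$. The $\sym{\mathcal{E}_{T,k}}$-invariance of $\beta^{(u,r)}_{\mathcal S}$, together with the transitivity of the $\sym{C}$-action on $\mathcal{E}_{T,k}$, is exactly what is needed to show that, \emph{after aggregation}, the singleton contributions become $\sym{C}$-uniform so that $C$ remains fully symmetric in the orbit representatives $H$ carrying non-zero coefficients. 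For the induction to close one should state the hypothesis for \emph{any} fully symmetric subset $C$ of size $>|V|/2$, not only maximal ones, because an intermediate $H$ may develop a strictly larger maximal fully symmetric subset while still being PSP with respect to the original $C$.
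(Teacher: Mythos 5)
Your route diverges from the paper's in a way that turns out to matter. The paper's proof of \cref{part-sym} does \emph{not} use the Symmetric Degree Decreasing Lemma at all. Instead, it exploits the structural fact (from \cref{lem:fullysymauto}) that $\aut{H}=\sym{C}\times\Gamma$ with $\Gamma\leq\sym{\ov C}$, which makes the $\aut{H}$-orbit $O$ of a crossing edge $e$ \emph{factor} as a product $O\cong P\times Q$, where $P=\aut{H}\cdot e^C$ and $Q=\aut{H}\cdot e^{\ov C}$. Consequently $\gamma\sum_{e'\in O}e'=\gamma\cdot z_1\cdot z_2$ with $z_1=\sum_{w\in P}w$ supported inside $C$ and $z_2=\sum_{v\in Q}v$ supported inside $\ov C$, and the \emph{plain} \cref{ddl} (with polynomial substitutions for $z_1,z_2$) produces terms $j_1z_1+j_2z_2+j_3H'$ that are automatically $\aut{H}$-invariant, have no crossing monomials from the removed orbit, and in which $C$ remains fully symmetric. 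That is what makes the recursion close cleanly with no reaggregation. You instead apply \cref{sddliteration} to the sub-family $\mathcal{E}_{T,k}$ (a fixed $T\sse\ov C$ with all $k$-subsets of $C$), which is only part of the full orbit, and it produces polynomials $\gamma\sum_e\Sigma(S_e)+\po q$ whose singleton layer is \emph{not} individually $\sym{C}$-invariant.

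This is where your proposal has a genuine gap, and you correctly sense it. The $\sym{\mathcal{E}_{T,k}}$-invariance of $\beta_{\mathcal S}^{(u,r)}$ only tells you that the aggregate $\sum_{\mathcal S}\beta_{\mathcal S}^{(u,r)}\po b(G_{\mathcal S};r)$ is $\sym{C}$-invariant as a formal sum; it does \emph{not} make each $G_{\mathcal S}$ have $C$ as a fully symmetric set, and after repackaging by $\sym{V}$-orbits (\cref{sym-sum}) the orbit representatives $H$ need not be partially symmetry-purified with respect to $C$, nor need $C$ even be fully symmetric in them. To close your induction you would then have to invoke the lemma for those $H$, but the natural hypothesis you propose --- allowing non-maximal $C$ --- undercuts the only tool available here: the direct-product description $\aut{H}=\sym{C}\times\Gamma$ requires $C$ to be a \emph{maximal} fully symmetric set of size $>\abs{V}/2$ (this is the third bullet of \cref{lem:fullysymauto}); without it the orbit no longer splits as $P\times Q$ and the argument collapses. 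The paper avoids all of this by keeping every $H\in\Hset$ such that $C$ is fully symmetric (indeed $\aut H$-invariance of the new polynomials is immediate), and hence never needs to reconstruct full symmetry after the fact. In short: your use of \cref{sddliteration} here is overpowered and breaks the invariant the recursion needs, while the paper's $z_1\cdot z_2$ factorization together with the plain \cref{ddl} preserves it for free. The symmetric DDL \cref{sddliteration} is reserved in the paper for the subsequent step (\cref{sym-expr}), where edges lie entirely inside $\ov C$, so that $C$-symmetry is never at risk.
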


\begin{proof}

We will use a recursive reasoning, in which we will modify a sum of the form 
$$\sum_{H \in \Hset} \sum_{r\in \GF{p}} \beta_{H, t} \, \po{s}(H; t)$$
 so that we gradually remove edges with non-empty intersection with both $C$ and $\ov C$, while preserving the computed function. 
We start with a sum of length one consisting of the only element $1\cdot s(G; u)$~-- so at the beginning the set $\Hset$ has size one.
In each step we pick a graph $H\in \Hset$ with a maximum number of edges that intersect both $C$ and $\ov C$. Then we will pick an edge together with its orbit under $\aut{H}$ and create from $G$ some number of graphs that we will add to the new $\Hset$ in place of $G$. The added graphs will not contain the picked edge, nor its orbit, or any new edge that intersects with both $C$ and $\overline{C}$ non-trivially. If we successfully do so, we may increase the number of graphs in $\Hset$, but we decrease the number of graphs with the maximum number of edges intersecting both $C$ and $\overline{C}$. So, if we keep repeating this process, eventually there will be no bad graphs in $\Hset$.

So pick a graph $H \in \Hset$ which has some edge $e\in E_H$ with non-empty intersection to both $C$ and $\overline{C}$.
We will denote the corresponding intersections by $e^C = e \cap C$ and $e^{\overline{C}}= e \cap \overline{C}$  respectively.
Let  $ O = \aut{H} \cdot e$ denote the orbit of $e$ and enumerate its edges as $ O = \{e_1, \ldots, e_\ell\}$.
In particular, all the  $e_1, \ldots, e_\ell$ share the same label $\gamma= \lambda(e_i)$.
As $C$ is a maximal fully symmetric set with $|C|>\frac{n}{2}$, by \cref{lem:fullysymauto} we have $\aut{G} = \sym{C} \times \Gamma$ for some $\Gamma \leq \sym{\ov C} $. Setting $P= \aut{H} \cdot e^{C}$ and $Q = \aut{H} \cdot e^{\ov C}$, we can identify $O$ with $P\times Q$. Hence, we obtain
$$\gamma (e_1 + \ldots +e_\ell) = \gamma \cdot (\sum_{w \in P} w)(\sum_{v \in Q} v).$$
Hence, we have 
$$
\po s(H;t) = \po s(\gamma (e_1 + \ldots +e_\ell) + H'; t) = \po s(\gamma  \cdot z_1 \cdot z_2 + H'; t) 
$$
where $z_1 = \sum_{w \in P} w$, $z_2 = \sum_{v \in Q} v$ and $H' = H - \gamma\cdot (e_1 + \ldots + e_l)$.
So, from the definition of $\po s$, we have
$$
\po s(H; t) = \sum_{i=1}^k \po b^{\pi_i}(\gamma \cdot z_1 \cdot z_2 + H' ;t).
$$
 (recall that $\pi_1, \ldots, \pi_k$ is a transversal of $\sym{V}/\aut{G}$). So now by applying \cref{ddl} to each term of the sum individually we get 
$$
\po s(H; t) \equiv \sum_{(j_1, j_2, j_3) \in \GF{p}^3} \sum_{r \in \GF{p}} \sum_{i=1}^k \beta_{j_1, j_2, j_3}^{(r)}  \po b^{\pi_i}(j_1 z_1 + j_2 z_2 + j_3 H'; r)
$$
Note that in the resulting expression we replaced edges from $O$ that have nontrivial intersection with both $C$ and $\ov C$ by edges that are fully contained in either $C$ or $\ov C$. Moreover, the resulting expression is symmetric. Indeed, $z_1$ and $z_2$ by definition are expressions closed under $\aut{H}$. Moreover, $H'$ is closed under automorphisms of $H$, as it was created by removing the entire orbit from $H$. Hence, also each of the expressions of the form $j_1 z_1 + j_2 z_2 + j_3 H'$ is closed under $\aut{H}$ and, thus, for fixed $j_1, j_2, j_3$ and $r$ the sum  
$$\sum_{i=1}^k \po b^{\pi_i}(j_1 z_1 + j_2 z_2 + j_3 H'; r)$$
is symmetric. Indeed, $\aut{H} \leq \aut{H'}$ and the set $\{\pi_1, \dots, \pi_k\}$ contains the same number of representatives for each $\sym{V}/\aut{H'}$ class.

Hence, the new expression replacing $\po s(H; t)$ is also symmetric. This finishes the proof, as now we just regroup the resulting expression.
\end{proof} 

\begin{lemma}[\cref{sym-expr} restated]
Let $p\neq q$ be prime numbers, let $u \in \GF{p}$, and let $G =(V,\lambda)$ be a $\GF{p}$-labeled $d$-hypergraph. Moreover, let $C\sse V$ be a maximal fully symmetric subset with $\abs{C} > \abs{V}/2$.
Then the function represented by $\po s(G;u)$ can be also represented by the sum of the form
\[\sum_{H \in \symp{V}{C}} \sum_{r\in \GF{p}} \beta_{H, r}  \,\po s(H, r)\]
for suitable constants $\beta_{H, r} \in \F_q$.
\end{lemma}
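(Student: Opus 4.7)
The plan is to leverage \cref{part-sym}, which already rewrites $\po s(G;u)$ as an $\F_q$-linear combination of expressions $\po s(H;r)$ with $H \in \psymp{V}{C}$. It then remains to show that each individual $\po s(H;r)$ with $H \in \psymp{V}{C}$ can be expressed in terms of $\po s(H';r')$ with $H' \in \symp{V}{C}$. I would prove this by induction on $N(H)$, the number of edges $e$ of $H$ with $\lambda(e)\neq 0$, $e \sse \overline C$, and $|e|\geq 2$. The base case $N(H)=0$ is immediate since then $H$ already belongs to $\symp{V}{C}$.

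For the inductive step, I pick any edge $e \sse \overline C$ of $H$ with $|e|=m\geq 2$ and label $\gamma=\lambda(e)$. Since $C$ is maximal fully symmetric with $|C|>n/2$, \cref{lem:fullysymauto} gives $\aut{H}=\sym{C}\times\Gamma$ for some $\Gamma\leq\sym{\overline C}$; in particular the orbit $O=\aut{H}\cdot e=\{e_1,\dots,e_l\}$ lies entirely in $\overline C$, and all its members share size $m$ and label $\gamma$. Writing $H=\gamma(e_1+\cdots+e_l)+H''$ and applying \cref{sddliteration} to each $\po b^\pi(H;r)=\po b(\gamma(\pi e_1+\cdots+\pi e_l)+\pi H'';r)$ for $\pi$ in a transversal of $\sym{V}/\aut{H}$, one obtains
\[
\po b^\pi(H;r) \equiv \sum_{\cS=(S_1,\dots,S_l):\, S_i\subseteq e_i}\;\sum_{r'\in\F_p} \beta_{\cS}^{(r,r')}\, \po b^\pi\bigl(F(\cS);r'\bigr),
\]
where $F(\cS)=\gamma(\setsum{S_1}+\cdots+\setsum{S_l})+H''$. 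This uses $\pi\setsum{S_i}=\setsum{\pi S_i}$ together with the index-permutation invariance of $\beta_\cS^{(r,r')}$ guaranteed by \cref{sddliteration}, which makes the coefficient unchanged under $S_i \mapsto \pi S_i$.

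Next, I would sum over $\pi$ and regroup $\cS$ by the natural action of $\aut{H}$ on tuples, where $\sigma\in\aut{H}$ with $\sigma(e_i)=e_{\tau(i)}$ sends $\cS$ to $(\sigma(S_{\tau^{-1}(1)}),\dots,\sigma(S_{\tau^{-1}(l)}))$. Under this action $F$ is equivariant, $F(\sigma\cdot\cS)=\sigma(F(\cS))$, and $\beta_\cS^{(r,r')}$ is constant on orbits. For each orbit $\cO$ with representative $\cS_0$ and stabilizer $\Gamma_0\leq\aut{H}$, equivariance forces $\Gamma_0\leq\aut{F(\cS_0)}$, and a standard double-transversal argument (composing a transversal of $\sym{V}/\aut{H}$ with a transversal of $\aut{H}/\Gamma_0$ yields a transversal of $\sym{V}/\Gamma_0$) collapses the inner sum to an integer multiple of $\po s(F(\cS_0);r')$. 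Each $F(\cS_0)$ is again in $\psymp{V}{C}$ (inside $C$ nothing changes, no new cross edges appear, and the only modifications in $\overline C$ are deleting $O$ and inserting singletons) and satisfies $N(F(\cS_0))=N(H)-l<N(H)$, so the induction closes.

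The main obstacle will be the orbit-regrouping step: carefully setting up the $\aut{H}$-action on tuples $\cS$ and verifying that the stabilizer $\Gamma_0$ indeed lies inside $\aut{F(\cS_0)}$, so that the double-transversal identity yields a clean integer multiple of $\po s(F(\cS_0);r')$ rather than some non-symmetric remnant. This delicate bookkeeping combines the equivariance of $F$ with the index-permutation symmetry of the $\beta$-coefficients, and is precisely the reason the symmetric version \cref{sddliteration} (rather than the plain \cref{ddl}) is needed here.
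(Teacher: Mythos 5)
Your proposal is correct and follows the paper's overall strategy: invoke \cref{part-sym} to pass to partially symmetry-purified graphs, then iteratively remove the non-singleton edges in $\overline C$ one $\aut{H}$-orbit at a time via \cref{sddliteration}. Where you diverge is in the regrouping step. The paper groups the tuples $\cS=(S_1,\dots,S_l)$ into classes $\class$ determined by the multiset $\{\!\!\{|S_1|,\dots,|S_l|\}\!\!\}$ and verifies directly that each resulting $\po s_{\class}(H';r)$ is invariant under $\sym{V}$ (via the conjugation identity $\sigma\circ\pi_i=\pi_{\phi(i)}\circ\sigma'$), leaving the redecomposition into $\po s(\cdot)$-terms implicit through \cref{sym-sum}. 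You instead group by the (finer) $\aut{H}$-orbits of tuples, observe that $F$ is equivariant for the twisted action $\sigma\cdot\cS=(\sigma S_{\tau^{-1}(1)},\dots,\sigma S_{\tau^{-1}(l)})$, deduce $\Gamma_0\leq\aut{F(\cS_0)}$ for the stabilizer $\Gamma_0$, and use the double-transversal identity to write each orbit's contribution explicitly as $[\aut{F(\cS_0)}:\Gamma_0]\cdot\po s(F(\cS_0);r')$. This makes the decomposition fully explicit rather than existential, which is a small but genuine gain in transparency. One point to be careful about: the reindexing $S_i\mapsto\pi S_i$ in your first display, as well as the constancy of $\beta_{\cS}^{(r,r')}$ on $\aut{H}$-orbits, require that the coefficients depend only on the multiset of sizes $\{\!\!\{|S_1|,\dots,|S_l|\}\!\!\}$~-- this is stronger than the index-permutation invariance literally stated in \cref{sddliteration}, though it does follow from its proof (and the paper uses the same stronger property); you should make that explicit rather than attribute it to the stated invariance.
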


\begin{proof}
By \cref{part-sym} we can assume $G$ is already partially symmetry purified with respect to $C$. We repeat the strategy from the proof of the \cref{part-sym} and slowly modify sum of the form 

\[\sum_{H \in \Hset} \sum_{t \in \GF{p}} \beta_{H, t} \,\po s(H; t)\]
so that we gradually remove edges of degree greater than $1$ to make $\Hset$ contain only symmetry-purified graphs. 
Obviously, we start with a sum of length one with $\Hset= \{1\cdot s_{G, u}\}$. 
All the graphs that we will add to $\Hset$ in the future will be partially symmetry purified. Now, our strategy here is similar to the one used in \cref{part-sym}. First let $m = \max\{|e|: (V,\lambda) \in \Hset,  e\subseteq \ov C, \lambda(e) \neq 0\}$.
Now pick the graph $H$ that has the biggest number of edges realizing this maximum $m$. We will replace $H$ with some number of graphs that have less edges of size $m$. So eventually we will end-up with no edges of size $>1$ and the resulting $\Hset$ will contain only symmetry-purified graphs.

Now, pick a graph $H = (V, \lambda)\in \Hset$ and some $e\subseteq \ov C$ of size $m$ with $\lambda(e) \neq 0$.
We are going to apply \cref{sddliteration} to the edges in the orbit $\aut {H}\cdot e$ \footnote{If we applied \cref{sddl} only to the edge $e$ we could destroy symmetry.} (where the elements of the orbit by definition must have the same label, as $\aut {H}$ is the group of automorphisms of labeled graph).
Let $\aut {H}\cdot e = \{e_1, e_2, \ldots, e_l\}$  and write $\gamma = \lambda(e_i)$. Hence,

$$
\po s(H,t) = \sum_{i=1}^k \po b^{\pi_i}(\gamma \cdot e_1 + \ldots + \gamma \cdot e_l + H'; t)
$$
where $H'$ is created from $H$ by removing the edges $(e_j)_{1..l}$.

Now, write $\cQ = \cP(e_1) \times \dots \times \cP(e_l)$ and apply \cref{sddliteration}, to get 
\[
\po s(H;t) \equiv  
 \sum_{\cS \in \cQ} \sum_{r \in \GF{p}} \sum_{i=1}^k \beta_{\cS}^{t,r} \cdot \po  b^{\pi_i}(\gamma \cdot (\setsum{S_1} + \setsum{S_2} + \ldots + \setsum{S_l}) + H'; r) = \po s'(H; t) 
\]
 for proper $\beta_{\cS}^{t,r} \in \F_q$ such that $\beta_{\cS}^{(t,r)}$ depends only on $(t,r)$ and the multi-set of values $\{\!\!\{|S_1|, \ldots, |S_l|\}\!\!\}$  where $\cS = (S_1, \ldots, S_l)$ (in other words, as stated in \cref{sddliteration}, permuting the elements of $\cS$ does not change  $\beta_{\cS}^{(t,r)}$). 

\newcommand{\class}{Q}
Let $\sim$ denote the equivalence relation on elements of $\cQ$ that induce the same multiset $\{\!\!\{|S_1|, \ldots, |S_l|\}\!\!\}$ (\ie $\cS \sim \cS'$ if and only if there is some $\sigma \in \sym{l}$ with $\sigma(\cS) = \cS'$).
We re\-group the above sum into subsums according to $r$ and according to the $\sim$-class $\class$ in order to obtain
$$\po s'(H,t) = \sum_{\class \in \cQ/\!\sim} \; \sum_{r \in \GF{p}}\beta_{\cS}^{t,r} \po s_{\class}(H'; r)$$
where 
$$
\po s_{\class}(H'; r) = \sum_{(S_1, \ldots, S_l) \in \class} \sum_{i=1}^k \po b^{\pi_i} (\gamma \cdot (\setsum{S_1} + \setsum{S_2} + \ldots + \setsum{S_l}) +H'; r).
$$

We will prove that such $\po s_{\class}(H'; V)$ is a symmetric expression. This implies that $\po s'(H; t)$ is also symmetric (as a linear combination of symmetric expressions) and, hence, concludes the proof.

So let $\sigma \in \sym{V}$ denote any permutation of the vertex set $V$. We want to prove that $\sigma\bigl(\po s_{\class}(H'; r)\bigr) = \po s_{\class}(H'; r)$, where
$$
\sigma\bigl(\po s_{\class}(H'; r)\bigr) = \sum_{(S_1, \ldots, S_l) \in \class} \sum_{i=1}^k \po b^{\sigma \circ \pi_i} (\gamma \cdot (\setsum{S_1} + \setsum{S_2} + \ldots + \setsum{S_l}) +H'; r)
$$

Recall that $(\pi_i)_{1..k}$ is a transversal of $\sym{V}/\aut{H}$. This means that the map $[\pi_i] \mapsto \sigma \circ [\pi_i]$ is a permutation of $\aut{H}$ classes. So, for a fixed $\sigma$, there is a unique permutation $\phi \in \sym{k}$ such that $\sigma \circ [\pi_i] = [\pi_{\phi(i)}]$. Hence, we have $\sigma \circ \pi_i =  \pi_{\phi(i)} \circ \sigma'$ for some $\sigma' \in \aut{H}$. As $H'$ is the subgraph of $H$ that was created by removing an entire orbit of edges from $H$, we have $\aut{H} \leq \aut{H'}$. Therefore, we have $(\sigma \circ \pi_i)(H') = \pi_{\phi(i)}(H')$ and we can see that 
\begin{align*}
\sigma\bigl(\po s_{\class}(&H'; r)\bigr)\\&=\sum_{(S_1, \ldots, S_l) \in \class} \sum_{i=1}^k \po b^{\sigma \circ \pi_i} (\gamma \cdot (\setsum{S_1} + \setsum{S_2} + \ldots + \setsum{S_l}) +H'; \, r) \\
 &= \sum_{(S_1, \ldots, S_l) \in \class} \sum_{i=1}^k \po b \Bigl( \gamma \cdot \bigl( \setsum{(\sigma \circ \pi_i)(S_1)} + \ldots + \setsum{(\sigma \circ \pi_i)(S_l)}\bigr) +(\sigma \circ \pi_i)(H'); \, r \Bigr) \\
 &= \sum_{(S_1, \ldots, S_l) \in \class} \sum_{i=1}^k \po b \Bigl(\gamma \cdot \bigl(\setsum{(\pi_{\phi(i)} \circ \sigma')(S_1)} + \ldots + \setsum{(\pi_{\phi(i)} \circ \sigma')(S_l)}\bigr) + \pi_{\phi(i)}(H'); \, r \Bigr) \\
 &= \sum_{(S_1, \ldots, S_l) \in \class} \sum_{i=1}^k \po b^{\pi_{\phi(i)}} \Bigl(\gamma \cdot \bigl(\setsum{\sigma'(S_1)} + \ldots + \setsum{\sigma'(S_l)}\bigr) + H'; \, r\Bigr).
\end{align*}

Now, since $\sigma'$ is an automorphism of $H$, it permutes the edges $e_1, \ldots, e_l$ (as they form an orbit). 
Hence, we conclude that there is a permutation $\psi \in \sym{l}$ such that $(S_1, \ldots, S_l) \in \class$ if and only if $(\sigma'(S_{\psi(1)}), \ldots, \sigma'(S_{\psi(l)})) \in \class$ (as the multisets $\{\!\!\{|S_1|, \ldots, |S_l|\}\!\!\}$ and $\{\!\!\{|\sigma'(S_1)|, \ldots, |\sigma'(S_l)|\}\!\!\}$ coincide). Hence, applying $\sigma'$ to the $S_i$ only changes the order of the outer sum and we obtain that
\begin{align*}
\sigma\bigl(\po s_{\class}(H'; r)\bigr)=\sum_{(S_1, \ldots, S_l) \in \class} \sum_{i=1}^k \po b^{\pi_{\phi(i)}} (\setsum{S_1} + \ldots + \setsum{S_l} + H'; r)
\end{align*}
Finally, we observe that this sum is identical to $\po s_{\class}(H'; r)$ except we changed the order of summation in the second sum. Hence, 
$\sigma(\po s_{\class}(H'; r)) = \po s_{\class}(H'; r)$,
which shows that $\po s_{\class}(H'; r)$ is symmetric. 
\end{proof}

\subsection*{Ordered partitions and multinomial coefficients}

In order to show \cref{period-sym-pure}, we need some knowledge about the periodicity of binomial coefficients.
For a fixed $s$, we can consider binomial coefficient $m \choose s$ as a function of integer $m\geq 0$. It is a known fact that such a function, when considered modulo a prime $p$, is periodic  with period $p^{k}$ for each $k\in \N$ such that $p^k > s$ (see for instance \cite[Proof of Fact 3.4]{IdziakKK22LICS}). It is interesting to see, whether this periodicity generalizes also to multinomial coefficients. Recall that a multinomial coefficient ${n \choose s_1, \ldots, s_l}$ counts the number of ordered partition of $n$ elements set into sequences of disjoint subsets of sizes $s_1, \ldots, s_l$. More formally, for $s_1 + \dots + s_l = n$ we have
\begin{align}
 {n \choose s_1, \ldots, s_l} = {s_1 \choose s_1} \cdot {s_1 + s_2 \choose s_2} \cdot \ldots \cdot {s_1 + \ldots + s_l \choose s_l}. \label{multinomial}
\end{align}

\begin{lemma}\label{period-newton}
    Let $p$ be a prime. Let $s_1, \ldots s_{l-1}$ be a sequence of integers. Let $k$ be such that $p^k > s_1 +\ldots + s_{l-1}$. Then the function
    $$f(n) = {n \choose s_1, \ldots, s_l(n)} \mod p$$
    is periodic with period $p^k$ where $s_l(n) = n-(s_1 + s_2 + \ldots s_{l-1})$.
\end{lemma}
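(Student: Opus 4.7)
The plan is to exploit the factorization stated in equation~\eqref{multinomial}, which already expresses the multinomial coefficient as a telescoping product of binomial coefficients. Reading the formula left-to-right, the first $l-1$ factors $\binom{s_1+\cdots+s_i}{s_i}$ for $i = 1, \ldots, l-1$ depend only on the fixed parameters $s_1, \ldots, s_{l-1}$, so they are constants (in $\F_p$) independent of $n$. The only factor that varies with $n$ is the last one, namely $\binom{s_1+\cdots+s_l}{s_l} = \binom{n}{s_l(n)}$.

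The key reduction is a symmetry move. Writing $s := s_1 + \cdots + s_{l-1}$, the identity $\binom{n}{s_l(n)} = \binom{n}{n - s_l(n)} = \binom{n}{s}$ turns a binomial whose lower entry varies with $n$ into a binomial whose lower entry is fixed. Consequently, as a function of $n$,
\[
f(n) \;\equiv\; C \cdot \binom{n}{s} \pmod{p}
\]
for the constant $C = \prod_{i=1}^{l-1}\binom{s_1+\cdots+s_i}{s_i} \bmod p$.

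Now it suffices to invoke the fact cited just before Theorem~\ref{period-sym-pure}: for any fixed $s \in \N$, the function $n \mapsto \binom{n}{s} \bmod p$ is periodic with period $p^k$ for every $k$ with $p^k > s$. Since the hypothesis of the lemma gives exactly $p^k > s_1 + \cdots + s_{l-1} = s$, we conclude that $f(n)$ has period $p^k$ as claimed. I do not anticipate any real obstacle in this argument; the only substantive step is spotting that the symmetry $\binom{n}{s_l(n)} = \binom{n}{s}$ converts the ``moving target'' lower index into a fixed one, after which the result is immediate from the known binomial periodicity.
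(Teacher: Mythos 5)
Your proposal is correct and takes essentially the same approach as the paper: both use the telescoping product \eqref{multinomial}, observe that the first $l-1$ factors are constants, apply the symmetry $\binom{n}{s_l(n)} = \binom{n}{s_1+\cdots+s_{l-1}}$ to fix the lower index, and then invoke the cited periodicity of $n \mapsto \binom{n}{s} \bmod p$.
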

\begin{proof}
We use the following formula \eqref{multinomial} for computing multinomial coefficient.
Note that the first $l-1$ factors of the above product are constants, while the last one satisfies  
$${s_1 + \ldots + s_l \choose s_l(n)} = {n \choose 
s_{l}} = {n \choose n-s_l} = {n \choose s_1 +\dots +s_{l-1}}.$$
\noindent So, periodicity of multinomial coefficient comes from periodicity of the binomial coefficients.
\end{proof}

\subsection*{Period of symmetry-purified expressions}

\begin{theorem}[\cref{period-sym-pure} restated]
Let $p\neq q$ be prime numbers, $\acpt \in  \GF{p}$ and let $G$ be a $\GF{p}$-labeled $d$-hypergraph on $n$-vertices that is symmetry purified with respect to a maximal fully symmetric subset of vertices $C$ of size $|C|> n/2$. 
Then, $p^{k_p} \cdot q^{k_q}$ is a period of $\po s(G; \acpt)$ where $k_p$ is the smallest integer satisfying $p^{k_p} > d$ and $k_q$ is the smallest integer satisfying $q^{k_q} > n-|C|$. 
\end{theorem}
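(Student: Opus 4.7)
My plan is to give an explicit combinatorial formula for $\po s(G;r)(m)$ and read off its $p^{k_p}q^{k_q}$-periodicity factor by factor. Using that $G$ is symmetry-purified with respect to $C$, the polynomial encoded by $G$ decomposes as
\[ \po p_G(\ov x) \;=\; F_C(a) \;+\; \sum_{i\in \ov C,\, x_i=1} \mu_i, \qquad a=|C\cap\operatorname{supp}(\ov x)|, \]
where $F_C(a)=\sum_{j=1}^{d}\lambda_j\binom{a}{j}$ with $\lambda_j$ the (by full symmetry, common) label of $j$-subsets of $C$, and $\mu_i=\lambda(\{i\})$ for $i\in\ov C$ (possibly zero). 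By \cref{lem:fullysymauto}, $\aut{G}=\sym{C}\times\Gamma$ with $\Gamma=\operatorname{Stab}_{\sym{\ov C}}(\mu)$, so elements of the $\sym{V}$-orbit of $G$ correspond to pairs $(C',\mu')$ where $|C'|=|C|$ and $\mu'$ is a labeling on $\ov{C'}$ in the $\sym{\ov{C'}}$-orbit of $\mu$.

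Next, fix any $M\sse V$ with $|M|=m$. The value $\po s(G;r)(m)$ is then the number (mod $q$) of orbit elements $(C',\mu')$ satisfying $F_C(|C'\cap M|)+\sum_{j\in \ov{C'}\cap M}\mu'(j)=r$. Stratifying by $a=|C'\cap M|$ (so $b=m-a=|\ov{C'}\cap M|$), there are $\binom{m}{a}\binom{n-m}{|C|-a}$ choices of $C'$, and for each the number of admissible $\mu'$ depends only on $b$ and the target $t=r-F_C(a)$; call this count $L(b,t)$. Rearranging as a sum over $b$ yields the master identity
\[ \po s(G;r)(m)\;\equiv\; \sum_{b=0}^{c}\binom{m}{m-b}\binom{n-m}{|C|-m+b}\, L(b,\, r-F_C(m-b)) \pmod{q},\]
with $c=n-|C|$, where vanishing binomials handle the boundary. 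The period then falls out factor by factor: by \cref{period-newton}, both $\binom{m}{b}$ and $\binom{n-m}{c-b}$ are periodic modulo $q$ in $m$ with period $q^{k_q}$, since $b,c-b\le c<q^{k_q}$; likewise $F_C(m-b)\bmod p$ is periodic in $m$ with period $p^{k_p}$, because each $\binom{m-b}{j}$ with $j\le d<p^{k_p}$ is. Since $L(b,\cdot)\colon\F_p\to\Z$ is a fixed function, $L(b,\,r-F_C(m-b))\bmod q$ inherits period $p^{k_p}$. Each summand therefore has period $\operatorname{lcm}(p^{k_p},q^{k_q})=p^{k_p}q^{k_q}$ (using $p\ne q$), and this property survives the finite sum over $b$.

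The main obstacle is showing that $L(b,t)$ really is independent of the choice of $C'$ and of $M$, so that the stratification collapses to the closed form above. This is exactly where the direct-product structure $\aut{G}=\sym{C}\times\Gamma$ forced by $|C|>n/2$ via \cref{lem:fullysymauto} becomes essential: it guarantees that the labelings on different complements $\ov{C'}$ are $\sym{\ov{C'}}$-isomorphic and that the $\Gamma$-action is completely decoupled from what happens on $C$. A minor technical point is to apply \cref{period-newton} across the full range of $m$, using the convention that binomials with negative lower index vanish; this is already built into the statement of \cref{period-newton}, so no extra case analysis is needed.
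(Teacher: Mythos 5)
Your proof is correct and follows essentially the same strategy as the paper's: both count the transversal elements of $\sym{V}/\aut{G}$ stratified by how the image of $C$ (and of $\overline{C}$) meets the set of $m$ ones, express each stratum via binomial/multinomial coefficients, and read the period off factor by factor, using \cref{period-newton} for the $q^{k_q}$-part and the $p^{k_p}$-periodicity of $\binom{\cdot}{j}\bmod p$ for the $p$-part. Your $\binom{m}{a}\binom{n-m}{|C|-a}L(b,t)$ form is a clean repackaging of the paper's product of two multinomial coefficients summed over the label-type vector $\overline{s}$ (one recovers $L(b,t)$ by splitting each multinomial as a binomial times a smaller multinomial and summing over $\overline{s}$ with $|\overline{s}|=b$ and $\sum_j j s_j=t$); the one place where the paper works noticeably harder — the $\chi^*$ construction with the shift by $p^K$, whose whole purpose is to handle the regime where $s_C(m)=m-b$ becomes negative and $F_C(m-b)$ is not a priori meaningful — is sidestepped in your account by the (correct, but worth spelling out) observation that for $m<b$ the factor $\binom{m}{b}$ vanishes integrally and its $q^{k_q}$-periodic continuation stays $\equiv 0\pmod q$, so the factor-by-factor periodicity argument survives regardless of how $F_C$ is extended to negative arguments.
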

Note that, if  $p^{k_p} \cdot q^{k_q}>n$, it means that we have no non-trivial periods.
\begin{proof}
Note that an induced graph on the set $C$ is a pseudo-clique. Moreover, as $|C| > \frac{|V|}{2}$, we can reconstruct the graph $G$ up to isomorphism having the following information
\begin{enumerate}
    \item the size $l_{C}$ of the largest pseudo-clique $C$ in $G$,
    \item the type $\ov t \in \F_p^d$ of the pseudo-clique, where $t_i$ is the label of every $i$-ary edge in the pseudo-clique
    \item sizes $l_0, l_1, \ldots, l_{p-1}$, where $l_i$ informs how many vertices $j$, which are not in the pseudo-clique  $C$, have a unary edge with label $i=\lambda({\{j\}})$. Vertices corresponding to label $i$ in $G$ will be denoted $L_i$.
\end{enumerate}

Using this characterization of a symmetry purified $G=(V,\lambda)$, we immediately get that
\begin{fact}
\hfil$
\aut{G} = \sym{C} \times \sym{L_0} \times\cdots\times \sym{L_{p-1}}
$\hfill
\end{fact}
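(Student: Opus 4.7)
The plan is to establish the equality by checking both inclusions separately, relying on the structural description of the symmetry-purified graph just given.

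For the $\supseteq$ inclusion, I would verify directly that every element of $\sym{C} \times \sym{L_0} \times \cdots \times \sym{L_{p-1}}$ preserves the label function $\lambda$. Since $G$ is symmetry-purified, the only edges with nonzero label are (i) subsets $e \sse C$, which by the pseudo-clique property of $C$ carry label $t_{\abs{e}}$ depending only on the size, and (ii) singletons $\{v\} \sse \ov{C}$ with label determined by which $L_i$ contains $v$. A permutation in $\sym{C}$ fixes every edge of type (ii) pointwise and preserves the sizes, and hence the labels, of every edge of type (i). A permutation in $\sym{L_i}$ fixes $C$ pointwise, so it preserves all edges of type (i), and within $\ov{C}$ merely permutes singletons all sharing the common label $i$. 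The factors commute because their supports are disjoint, giving the desired containment as an internal direct product.

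For the $\subseteq$ inclusion, I would invoke \cref{lem:fullysymauto} (third bullet): since $C$ is a maximal fully symmetric subset with $\abs{C} > n/2$, we already have $\aut{G} = \sym{C} \times \Gamma$ for some $\Gamma \leq \sym{\ov{C}}$. It therefore remains only to identify $\Gamma$. Any $\pi \in \Gamma$ must preserve the labels of the singletons $\{v\}$ with $v \in \ov{C}$, that is $\lambda(\{\pi(v)\}) = \lambda(\{v\})$, which forces $\pi(L_i) = L_i$ for every $i \in \{0, \ldots, p-1\}$. Hence $\Gamma \leq \sym{L_0} \times \cdots \times \sym{L_{p-1}}$, and the reverse containment was already verified in the previous paragraph.

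I do not foresee a real obstacle here: the Fact is essentially a bookkeeping consequence of the definition of symmetry-purified together with \cref{lem:fullysymauto}. The only mild subtlety is to use the symmetry-purified conditions, namely that no label-bearing edge crosses between $C$ and $\ov{C}$ and that every label-bearing edge inside $\ov{C}$ is a singleton, to ensure that the partition $C, L_0, \ldots, L_{p-1}$ is respected by every automorphism and that each factor of the product acts independently on its block.
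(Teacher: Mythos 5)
Your proof is correct and fills in exactly the reasoning the paper treats as immediate: the $\supseteq$ direction is a direct check against the definition of a symmetry-purified graph, and the $\subseteq$ direction correctly combines the third bullet of \cref{lem:fullysymauto} (which applies because $C$ is assumed maximal fully symmetric with $\abs{C}>n/2$ in \cref{period-sym-pure}) with the observation that the singleton labels on $\ov{C}$ force any $\pi\in\Gamma$ to preserve each block $L_i$.
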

In particular, we have at most $p+1$ orbits under the automorphism groups (note that we have less than $p+1$ orbits if some of the $l_i$ are $0$)

Now we evaluate $\po s(G;\acpt)$ on some integer $m$ (and denote this by $\po s(G;\acpt)(m)$).  In order to do it, pick $\ov b$ which has ones on the first $m$ coordinates, i.e $\ov b = 1^{m} \cdot 0 ^{n-m}$. Hence,

$$\po s(G;\acpt)(m) = \sum_{i=1}^k \po b (\pi_i(G);\acpt)(m)$$
where, as before $\pi_1, \ldots, \pi_k$ is a transversal of $\sym{V}/\aut{G}$.
Each summand $\po b (\pi_i(G);\acpt)(m)$ evaluates to 1 or 0, depending on the mapping $\pi_i$. Being more precise, if $\pi_i$ maps $s_0$ elements of $L_0$ to $[1..m]$, $\ldots$, $s_{p-1}$ elements of $L_{p-1}$ to $[1..m]$, then $\pi_i(G) (\ov b)$ evaluates to 
\begin{equation}\label{chisum}
 \sum_{j=1}^d t_j \cdot {s_C(m) \choose j} + \sum_{j=1}^{q-1} j \cdot s_j\ (\mbox{mod}\ p).
\end{equation}
where $s_C(m)$ denotes the number of elements of $C$ mapped to $[1..m]$, which is computed according to formula $s_C(m) = m - (s_0 + \ldots + s_{p-1})$.
Recall that $\po b (\pi_i(G);\acpt)(m) =  1$ iff  $\pi_i(G) (\ov b) =  \acpt$. 

Let $\chi[G; \acpt](m)$ denote the set of $\ov s = (s_0, \ldots, s_{p-1})$ that make the sum \eqref{chisum} evaluate to $\acpt$.
Observe that we have natural inequalities $0 \leq s_i \leq l_i$ for all $i \in \{0, \ldots, p-1\}$ and $0 \leq s_C(m) \leq l_C$. Hence, the feasible $\ov s \in \N^{p}$ can be described by
\begin{equation}\label{chicondition}
\ov s \in \chi[G; \acpt](m)  \iff
\begin{cases}
\sum_{j=1}^d t_j \cdot {s_C(m) \choose j} + \sum_{j=1}^{p-1} j \cdot s_j\ (\mbox{mod}\ p) = \acpt\\
0 \leq s_i \leq l_i \mbox{ for } i \in \{ 0, \ldots, p-1\} \\
0 \leq s_C(m) \leq l_C.
\end{cases}
\end{equation}

Moreover, let $\#[\ov s](m)$ denote the number of permutations in $\{\pi_1, \ldots, \pi_k\}$ that map $s_C(m)$ elements of $C$ to $[m]$ and $s_i$ elements of $L_i$ to $[m]$ (for $i=0,..,p-1$). Hence, we have 
\[\po s[G;\acpt](m) = \left(\sum_{(\ov s) \in \chi[G; \acpt](m)} \#[\ov s](m)\right) (\mbox{mod } q).\]

Next, let us determine $\#[\ov s](m)$. Note that each permutation $\pi_i$ in $\{\pi_1, \ldots, \pi_k\}$ maps each of the sets $L_0, \ldots, L_{p-1}, C$ to some subsets $L_0', \ldots, L_{p-1}', C' \subseteq [n]$ with $|L_0'| = |L_0|, \ldots, |L_{p-1}'| = |L_{p-1}|, |C'| = |C|$.
Now, if two mappings $\pi_i, \pi_j$ output the same image $\pi_i L_0 = \pi_j L_0, \ldots, \pi_{i} L_{p-1} = \pi_j L_{p-1}, \pi_i C = \pi_j C$, then $\pi_i G = \pi_j G$ and hence $\pi_i$ and $\pi_j$ must belong to the same coset of $\aut{G}$ in $\sym{n}n$. Since $\{\pi_1, \ldots, \pi_k\}$ were chosen to be a transversal of $\sym{n}/\aut{G}$, this is only possible when $\pi_i = \pi_j$. So, the mapping $\pi_i \mapsto (L_0', \ldots, L_{p-1}', C')$ is injective. 

In fact, the mapping is also surjective as for any particular $(L_0', \ldots, L_{p-1}', C')$ with $|L_0'| = |L_0|, \ldots, |L_{p-1}'| = |L_{p-1}|, |C'| = |C|$ we can find some $\pi \in \{\pi_1, \ldots, \pi_k\}$ which maps $\pi(L_{i}) = L_i'$ and $\pi(C) = C'$. Indeed, just pick any $\sigma \in \sym{n}$ satisfying $\sigma(L_{i}) = L_i'$ for all $i$ and $\sigma(C) = C'$ and take its representative in the equivalence class modulo $\aut{G}$ as $\pi_i$.
So we have a bijective mapping between permutations $(\pi_i)_{i=1..p-1}$ and ordered partitions $(L_0', \ldots, L_{p-1}', C')$ of $[n]$ satisfying $|L_{0}'| = |L_0| \ldots, |L_{p-1}'| = |L_{p-1}|, |C'| = |C|$. 

 Thus, if we want to count $\#(\ov s)$,  we need to count the number of proper partitions satisfying $|L_1' \cap [1..m]| = s_1, \ldots, |L_{p-1}' \cap [1..m]| = s_{p-1}$, and $ |C' \cap [1..m]| = s_C(m)$.
In other words, we partition an $m$-element set into disjoint subsets of sizes $s_0, s_1, \ldots, s_{p-1}, s_C(m)$ and an $n-m$-element set into disjoint subsets of sizes $l_0-s_0, \ldots, l_{p-1} - s_{p-1}, l_C - s_C(m)$; this leads to the following formula
\begin{equation}\label{hashformula}
\#[\ov s](m) = {m \choose s_0, \ldots, s_{p-1}, s_C(m)} \cdot {n-m \choose l_0-s_0, \ldots, l_{p-1}-s_{p-1}, l_{C} - s_C(m)}.
\end{equation}
 Now, when fixing $s_0, \dots, s_{p-1}$, we will use \cref{period-newton} to show that the function $m \mapsto \#[\ov s](m) \bmod q$ is periodic with period $q^{k_q}$ in the interval $\{0, \dots, n\}$ where  $k_q$ is the smallest integer such that $q^{k_q} > l_0 +\ldots + l_{p-1} = n-|C|$.
 The periodicity is immediate for the first element of the  product as $l_0 +\ldots + l_{p-1} \geq s_0 + \ldots s_{p-1}$. Let $s_i' = l_0 - s_0$ by \cref{period-newton}. One can see that the values of the second element of the product produce, in the interval $[0..n]$,  a reversed sequence compared to the one produced by 

$${m \choose s_0', \ldots, s_{p-1}', m-(s_0' + \ldots + s_{p-1}')}$$

\noindent Indeed, $l_C  - s_C(m) = (n-(l_0 + \ldots + l_{p-1}) - (m - s_0 - \ldots - s_{p-1}) = (n-m) - (s_0' + \ldots s_{p-1}') = $ and $n-m$ plays the role of $m$ in the reversed sequence. As periodicity of any given sequence on the interval $[0..n]$ is preserved under reversing, and as $q^{k_q} > l_0 +\ldots + l_{p-1} \geq s_0' + \ldots + s_{p-1}'$, we get the desired periodicity of $\#[\ov s](m)$ (as the period transfers to the product).
 
Now, one could argue that then the sum 
$$\po s[G;\acpt](m) = \left(\sum_{(\ov s) \in \chi[G; \acpt](m)} \#[\ov s](m)\right) (\mbox{mod } q)$$
\noindent must be periodic, as it is just a sum of elements that are periodic. Unfortunately, $\chi[G; \acpt](m)$ selects elements of the sum depending on $m$ (i.e.\ the $s_i$ depend on $m$). We need to address that. 

 Note that in \eqref{chicondition} we can drop the condition $0 \leq s_C(m) \leq l_C$ because whenever $s_C(m) < 0 $ or $s_C(m) > l_C$ the formula \eqref{hashformula} returns value $0$ anyway (as multinomial coefficient takes value $0$ whenever $s_0 + \ldots + s_{p-1} > m$ or $s_0' + \ldots + s_{p-1}' > m$). So we can effectively get rid of $s_C(m)$ in $\chi$ to get an updated definition
\begin{equation}\label{newchicondition}
\ov s \in \chi'[G; \acpt](m)  \iff
\begin{cases}
\sum_{j=1}^d t_j \cdot {s_C(m) \choose j} + \sum_{j=1}^{q-1} j \cdot s_j\ (\mbox{mod}\ p) = \acpt\\
0 \leq s_i \leq l_i \mbox{ for } i \in \{0, \ldots, p-1\}  \\
\end{cases}
\end{equation}
and maintain the value of the sum, i.e.\ $\left(\sum_{(\ov s) \in \chi'[G; \acpt](m)} \#[\ov s](m)\right) = \left(\sum_{(\ov s) \in \chi[G; \acpt](m)} \#[\ov s](m)\right)$.
Let $K$ be the smallest integer satisfying $p^K > \max(l_0 + l_1 + \ldots + l_{p-1}, d)$. We further modify the definition of $\chi'$ to create $\chi^*$ in the following way
\begin{equation}\label{chiprime}
\ov s \in \chi^*[G; \acpt](m)  \iff
\begin{cases}
\sum_{j=1}^d t_j \cdot {s_C(m) + p^K \choose j} + \sum_{j=1}^{q-1} j \cdot s_j\ (\mbox{mod}\ p) = \acpt\\
0 \leq s_i \leq l_i \mbox{ for } i \in \{0, \ldots, p-1\}  \\
\end{cases}
\end{equation}
We claim that for all $m$
$$ 
\left(\sum_{(\ov s) \in \chi'[G; \acpt](m)} \#[\ov s](m)\right) (\mbox{mod } q) = \left(\sum_{(\ov s) \in \chi*[G; \acpt](m)} \#[\ov s](m)\right) (\mbox{mod } q) 
$$
There are $3$ cases we need to consider. 
\begin{enumerate}
\item When some fixed $\ov s$ belongs to both $\chi'[G; \acpt](m)$ and $\chi^*[G; \acpt](m)$, then $\#[s](m)$ cancels out from both sides of the equation. Similarly if $\ov s$ does not belong to either of the sets, we do not have $\#[s](m)$ on either of sides of the equation.
\item If $\ov s \in \chi'[G; \acpt](m)$ and $\ov s \not \in \chi^*[G;\acpt](m)$  or $\ov s \not \in \chi'[G; \acpt](m)$ and $\ov s \in \chi^*[G;\acpt](m)$, there must be some $j$ such that ${s_C(m) \choose j} \neq {s_C(m)+p^K \choose j}$.
This can only be the case if $s_C(m)$ is negative:
otherwise, because $p^K > d \geq j$, from the periodicity of function $a \mapsto { a \choose j} \bmod p$ (for natural numbers $a\geq 0$), we would get that ${s_C(m) \choose j} = {s_C(m) + p^K \choose j} \bmod p$ and, hence, the conditions for $\chi'[G; \acpt](m)$ and $\chi^*[G; \acpt](m)$ would be identical from the perspective of $\ov s$. But when $s_C(m) < 0$, then $\#[s](m)$ is zero due to definition of multinomial coefficient, so it does not contribute to any of the sides anyway.
\end{enumerate}
So we get that 
$$\po s(G;\acpt)(m) = \left(\sum_{(\ov s) \in \chi^*[G; \acpt](m)} \#[\ov s](m)\right) (\mbox{mod } q)$$

But now the formula \eqref{chiprime} gives ranges $0 \leq s_j \leq l_j$ for $j\in \{0, \dots, p-1\}$; thus, we conclude that $s_C(m) + p^K$ is always positive (since $s_C(m) + p^K = m + p^K - (s_0 + \ldots + s_{p-1}) \geq m \geq 0$). So $m \mapsto {s_C(m) + p^K \choose j} \bmod p$ is periodic with period $p^{k_p} > d$ where $k_p$ is the smallest integer with $p^{k_p} > d$. Looking at the definition of $\chi^*$ we conclude:

\newcommand{\remm}{j}
\begin{fact}
    Let $\ov s \in \N^{p}$. For $m\geq0$ we have 
    \[
     \ov s \in \chi^*[G;\acpt](m) \iff \ov s \in \chi^*[G;\acpt](m + p^{k_p})
    \]
\end{fact}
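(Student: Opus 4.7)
The defining conditions for $\ov s \in \chi^*[G;\acpt](m)$ split into two parts: the box constraints $0 \leq s_i \leq l_i$ for $i \in \{0, \ldots, p-1\}$, which do not involve $m$ at all, and a single congruence modulo $p$ whose only $m$-dependence enters through the quantity $s_C(m) + p^K$. The plan is therefore to show that the congruence is invariant under $m \mapsto m + p^{k_p}$; the box constraints are preserved trivially.

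Substituting $m + p^{k_p}$ for $m$ in $s_C(m) = m - (s_0 + \cdots + s_{p-1})$ simply shifts $s_C(m) + p^K$ to $s_C(m) + p^K + p^{k_p}$, while leaving each $t_j$ and each term $j \cdot s_j$ untouched. Hence it suffices to establish that for every $j \in \{1, \ldots, d\}$,
\[
\binom{s_C(m) + p^K}{j} \equiv \binom{s_C(m) + p^K + p^{k_p}}{j} \pmod{p}.
\]
I would obtain this from two facts already in place in the paper: (i) the paragraph immediately preceding the claim verifies that $s_C(m) + p^K \geq 0$ whenever the box constraints hold (this was the entire point of introducing the shift by $p^K$); and (ii) the classical periodicity of binomial coefficients, recalled before \cref{period-sym-pure} and used again in \cref{period-newton}: on non-negative arguments, $a \mapsto \binom{a}{j} \bmod p$ has period $p^k$ whenever $p^k > j$. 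Since $k_p$ was chosen with $p^{k_p} > d \geq j$, a single application of this periodicity yields the required congruence term by term.

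The only potential obstacle is the positivity requirement for binomial periodicity, but this has already been addressed by the introduction of $p^K$ in the definition of $\chi^*$; there is no analogous hurdle when shifting further by $p^{k_p}$ since both $m$ and $m + p^{k_p}$ are non-negative. Consequently the congruence is preserved under the shift, both directions of the stated equivalence follow from this single invariance, and the fact is established without any additional combinatorial input.
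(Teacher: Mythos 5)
Your proposal is correct and matches the paper's own argument: both isolate the only $m$-dependence in the definition of $\chi^*$ to the quantity $s_C(m)+p^K$, observe that the shift by $p^K$ was introduced precisely to make this quantity non-negative whenever the box constraints hold, and then invoke the $p^{k_p}$-periodicity of $a \mapsto \binom{a}{j} \bmod p$ (valid since $p^{k_p} > d \geq j$) term by term in the congruence.
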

 \noindent Hence, the condition $\ov s \in \chi^*(G;\acpt)$ depends not really on $m$, but on the remainder of $m$ modulo $p^{k_p}$. So now, for all integers $\remm\in \{0, \ldots, p^{k_p}-1\}$ we define $\chi^{*}_\remm[G;\acpt]$ as $\chi^{*}[G;\acpt](\remm)$ in order to obtain that $\chi^*[G;\acpt](m)= \chi^{*}_{\remm} [G;\acpt]$ for $\remm = m \bmod p^{k_p}$.
Now we can see that $\po s(G;\acpt)(m)$ is periodic with period $p^{k_p}\cdot q^{k_q}$:
\begin{align*}    
\po s(G;\acpt)(m + p^{k_p}\cdot q^{k_q}) &= 
\sum_{(\ov s) \in \chi^*_{\remm}[G; \acpt]} \#[\ov s](m + p^{k_p}\cdot q^{k_q}) \\
&= \sum_{(\ov s) \in \chi^*_{\remm}[G; \acpt]} \#[\ov s](m)  \\
&= \po s(G;\acpt)(m)\ (\mbox{mod } q).
\end{align*}
\noindent The first and the third equality comes from periodicity of the condition $\chi^*$  and the fact that $m$ and $m + p^{k_p}\cdot q^{k_q}$ give the same rest modulo $p^{k_p}$ and the second one comes from the equality of rests modulo $q^{k_q}$ and periodicity of $\#(s)(m)$.
\end{proof}


\begin{thebibliography}{10}

\bibitem{Ajtai83}
M.~Ajtai.
\newblock {$\Sigma_1^1$}-formulae on finite structures.
\newblock {\em Annals of Pure and Applied Logic}, 24(1):1--48, 1983.

\bibitem{Babai18}
L{\'{a}}szl{\'{o}} Babai.
\newblock Primitive coherent configurations and the order of uniprimitive
  permutation groups, 2018.
\newblock \url{https://people.cs.uchicago.edu/~laci/papers/uni-update.pdf}.

\bibitem{BarringtonBR94}
David A.~Mix Barrington, Richard Beigel, and Steven Rudich.
\newblock Representing {B}oolean functions as polynomials modulo composite
  numbers.
\newblock {\em Computational Complexity}, 4:367--382, 1994.

\bibitem{BarringtonST90}
David A.~Mix Barrington, Howard Straubing, and Denis Th{\'e}rien.
\newblock Non-uniform automata over groups.
\newblock {\em Inf. Comput.}, 89(2):109--132, 1990.

\bibitem{Bochert89}
Alfred Bochert.
\newblock Ueber die {Z}ahl der verschiedenen {W}erthe, die eine {F}unction
  gegebener {B}uchstaben durch {V}ertauschung derselben erlangen kann.
\newblock {\em Math. Ann.}, 33(4):584--590, 1889.

\bibitem{brakensiek2022}
Joshua Brakensiek, Sivakanth Gopi, and Venkatesan Guruswami.
\newblock Constraint satisfaction problems with global modular constraints:
  Algorithms and hardness via polynomial representations.
\newblock {\em SIAM Journal on Computing}, 51(3):577--626, 2022.

\bibitem{cameron99}
Peter~J. Cameron.
\newblock {\em Permutation Groups}.
\newblock London Mathematical Society Student Texts. Cambridge University
  Press, 1999.

\bibitem{ChapmanW22}
Brynmor Chapman and R.~Ryan Williams.
\newblock Smaller {ACC}$^0$ circuits for symmetric functions.
\newblock In Mark Braverman, editor, {\em 13th Innovations in Theoretical
  Computer Science Conference, {ITCS} 2022, January 31 - February 3, 2022,
  Berkeley, CA, {USA}}, volume 215 of {\em LIPIcs}, pages 38:1--38:19. Schloss
  Dagstuhl -- Leibniz-Zentrum f{\"{u}}r Informatik, 2022.

\bibitem{chat-lowerbounds}
Arkadev Chattopadhyay, Navin Goyal, Pavel Pudlak, and Denis Therien.
\newblock Lower bounds for circuits with modm gates.
\newblock In {\em 2006 47th Annual IEEE Symposium on Foundations of Computer
  Science (FOCS'06)}, pages 709--718, 2006.

\bibitem{DawarW20}
Anuj Dawar and Gregory Wilsenach.
\newblock Symmetric arithmetic circuits.
\newblock In Artur Czumaj, Anuj Dawar, and Emanuela Merelli, editors, {\em 47th
  International Colloquium on Automata, Languages, and Programming, {ICALP}
  2020, July 8-11, 2020, Saarbr{\"{u}}cken, Germany (Virtual Conference)},
  volume 168 of {\em LIPIcs}, pages 36:1--36:18. Schloss Dagstuhl --
  Leibniz-Zentrum f{\"{u}}r Informatik, 2020.

\bibitem{DenenbergGS86}
Larry Denenberg, Yuri Gurevich, and Saharon Shelah.
\newblock Definability by constant-depth polynomial-size circuits.
\newblock {\em Inf. Control.}, 70(2/3):216--240, 1986.

\bibitem{zev11}
Zeev Dvir, Parikshit Gopalan, and Sergey Yekhanin.
\newblock Matching vector codes.
\newblock {\em SIAM Journal on Computing}, 40(4):1154--1178, 2011.

\bibitem{zeev15}
Zeev Dvir and Sivakanth Gopi.
\newblock 2-server pir with sub-polynomial communication.
\newblock In {\em Proceedings of the Forty-Seventh Annual ACM Symposium on
  Theory of Computing}, page 577–584, 2015.

\bibitem{efre12}
Klim Efremenko.
\newblock 3-query locally decodable codes of subexponential length.
\newblock {\em SIAM Journal on Computing}, 41(6):1694--1703, 2012.

\bibitem{fss84}
Merrick Furst, James~B. Saxe, and Michael Sipser.
\newblock Parity, circuits, and the polynomial-time hierarchy.
\newblock {\em Mathematical systems theory}, 17:13--27, 1984.

\bibitem{gopalan2014}
Parikshit Gopalan.
\newblock Constructing {R}amsey graphs from boolean function representations.
\newblock {\em Combinatorica}, 34:173--206, 2014.

\bibitem{grolmusz2000}
Vince Grolmusz.
\newblock Superpolynomial size set-systems with restricted intersections mod 6
  and explicit {R}amsey graphs.
\newblock {\em Combinatorica}, 20(1):71--86, 2000.

\bibitem{Grolmusz01}
Vince Grolmusz.
\newblock A degree-decreasing lemma for ({MOD}\({}_{p}\)-{MOD}\({}_{m}\))
  circuits.
\newblock {\em Discret. Math. Theor. Comput. Sci.}, 4(2):247--254, 2001.

\bibitem{GrolmuszT00}
Vince Grolmusz and G{\'{a}}bor Tardos.
\newblock Lower bounds for ({MOD}\({}_{p}\)-{MOD}\({}_{m}\)) circuits.
\newblock {\em {SIAM} J. Comput.}, 29(4):1209--1222, 2000.

\bibitem{HansenK10}
Kristoffer~Arnsfelt Hansen and Michal Kouck{\'y}.
\newblock A new characterization of {ACC}$^0$ and probabilistic {CC}$^0$.
\newblock {\em Computational Complexity}, 19(2):211--234, 2010.

\bibitem{Hastadphd}
Johan H{\aa}stad.
\newblock {\em Computational limitations for small depth circuits}.
\newblock PhD thesis, Massachusetts Institute of Technology, 1986.

\bibitem{HeR23}
William He and Benjamin Rossman.
\newblock Symmetric formulas for products of permutations.
\newblock In Yael~Tauman Kalai, editor, {\em 14th Innovations in Theoretical
  Computer Science Conference, {ITCS} 2023, January 10-13, 2023, MIT,
  Cambridge, Massachusetts, {USA}}, volume 251 of {\em LIPIcs}, pages
  68:1--68:23. Schloss Dagstuhl -- Leibniz-Zentrum f{\"{u}}r Informatik, 2023.

\bibitem{IdziakKK20}
Pawe\l{}~M. Idziak, Piotr Kawa\l{}ek, and Jacek Krzaczkowski.
\newblock Intermediate problems in modular circuits satisfiability.
\newblock In {\em Proceedings of LICS'20}, page 578–590, 2020.

\bibitem{IdziakKK22LICS}
Pawel~M. Idziak, Piotr Kawalek, and Jacek Krzaczkowski.
\newblock Complexity of modular circuits.
\newblock In Christel Baier and Dana Fisman, editors, {\em {LICS} '22: 37th
  Annual {ACM/IEEE} Symposium on Logic in Computer Science, Haifa, Israel,
  August 2 - 5, 2022}, pages 32:1--32:11. {ACM}, 2022.

\bibitem{IdziakKKW22-icalp}
Pawe{\l}~M. Idziak, Piotr Kawa{\l}ek, Jacek Krzaczkowski, and Armin Wei{\ss}.
\newblock {Satisfiability Problems for Finite Groups}.
\newblock In Miko{\l}aj Boja\'{n}czyk, Emanuela Merelli, and David~P. Woodruff,
  editors, {\em 49th International Colloquium on Automata, Languages, and
  Programming (ICALP 2022)}, volume 229 of {\em Leibniz International
  Proceedings in Informatics (LIPIcs)}, pages 127:1--127:20, Dagstuhl, Germany,
  2022. Schloss Dagstuhl -- Leibniz-Zentrum f{\"u}r Informatik.

\bibitem{IdziakK22}
Pawel~M. Idziak and Jacek Krzaczkowski.
\newblock Satisfiability in multivalued circuits.
\newblock {\em {SIAM} J. Comput.}, 51(3):337--378, 2022.

\bibitem{liu2018}
Tianren Liu and Vinod Vaikuntanathan.
\newblock Breaking the circuit-size barrier in secret sharing.
\newblock In {\em Proceedings of the 50th Annual ACM SIGACT Symposium on Theory
  of Computing}, pages 699--708, 2018.

\bibitem{PraegerS80}
Cheryl~E. Praeger and Jan Saxl.
\newblock On the orders of primitive permutation groups.
\newblock {\em Bull. London Math. Soc.}, 12(4):303--307, 1980.

\bibitem{straubing2006note}
Howard Straubing and Denis Th{\'e}rien.
\newblock A note on {MOD}$_p$-{MOD}$_m$ circuits.
\newblock {\em Theory of Computing Systems}, 39(5):699--706, 2006.

\bibitem{Yao85}
Andrew Chi-Chih Yao.
\newblock Separating the polynomial-time hierarchy by oracles.
\newblock In {\em 26th Annual Symposium on Foundations of Computer Science
  (sfcs 1985)}, pages 1--10, 1985.

\end{thebibliography}



\end{document}